\newif\ifdouble
\newif\ifleakageD
\theoremstyle{plain}
\newtheorem{theorem}{Theorem}
\newtheorem{prop}{Proposition}
\newtheorem{corollary}{Corollary}
\theoremstyle{definition}
\newtheorem{definition}{Definition}
\theoremstyle{remark}
\newcommand{\off}[1]{}
\newcommand{\ar}[1]{\textcolor[rgb]{0.00,0.00,0.00}{#1}}
\newcommand{\rev}[1]{\textcolor[rgb]{0.00,0.00,0.00}{#1}}
\newcommand{\jr}[1]{\textcolor[rgb]{0.00,0.00,0.00}{#1}}
\newcommand{\revGui}[1]{\textcolor[rgb]{0.00,0.00,1.00}{#1}}
\newcommand{\jrev}[1]{\textcolor[rgb]{0.00,0.00,0.00}{#1}}
\newcommand{\jrevm}[1]{\textcolor[rgb]{0.00,0.00,0.00}{#1}} 
\begin{document}
\ifdouble
\title{Adaptive Causal Network Coding with Feedback\\ for Multipath Multi-hop Communications}
\markboth{}{}
 \author{%
   \IEEEauthorblockN{Alejandro Cohen\IEEEauthorrefmark{1},
                     Guillaume Thiran\IEEEauthorrefmark{2},
                     Vered Bar Bracha\IEEEauthorrefmark{3},
                     and Muriel M\'edard\IEEEauthorrefmark{1}}\\
   \IEEEauthorblockA{\IEEEauthorrefmark{1}%
                     Research Laboratory of Electronics, MIT, Cambridge, MA, USA, \{cohenale, medard\}@mit.edu}\\
   \IEEEauthorblockA{\IEEEauthorrefmark{2}%
                     UCLouvain, Belgium,  guillaume.thiran@student.uclouvain.be}\\
   \IEEEauthorblockA{\IEEEauthorrefmark{3}%
                     Intel Corporation, vered.bar.bracha@intel.com}\thanks{This research was supported in part by the Intel Corporation and by DARPA: DFARS 252.235-7010. Patent application submitted: no. 62/853,090. Part of this work is accepted for publication in IEEE International Conference on Communications (ICC), 2020. Accepted for publication in IEEE Transactions on Communications.}\\
 \vspace{-8mm}}
 \else
 \title{Adaptive Causal Network Coding with Feedback\\ for Multipath Multi-hop Communications}
\markboth{}{}
 \author{%
   \IEEEauthorblockN{Alejandro Cohen\IEEEauthorrefmark{1},
                     Guillaume Thiran\IEEEauthorrefmark{2},
                     Vered Bar Bracha\IEEEauthorrefmark{3},
                     and Muriel M\'edard\IEEEauthorrefmark{1}}\\
   \IEEEauthorblockA{\IEEEauthorrefmark{1}%
                     RLE, MIT, Cambridge, MA, USA, \{cohenale, medard\}@mit.edu}\\
   \IEEEauthorblockA{\IEEEauthorrefmark{2}%
                     UCLouvain, Belgium,  guillaume.thiran@student.uclouvain.be}\\
   \IEEEauthorblockA{\IEEEauthorrefmark{3}%
                     Intel Corporation, vered.bar.bracha@intel.com}\thanks{This research was supported in part by the Intel Corporation and by DARPA: DFARS 252.235-7010. Patent application submitted: no. 62/853,090. Part of this work is accepted for publication in IEEE International Conference on Communications (ICC), 2020.}\\
 \vspace{-10mm}}
 \fi
\maketitle


\begin{abstract}
We propose a novel multipath multi-hop adaptive and causal random linear network coding (AC-RLNC) algorithm with forward error correction. This algorithm generalizes our joint optimization coding solution for point-to-point communication with delayed feedback. AC-RLNC is adaptive to the  estimated channel condition, and is causal, as the coding adjusts the retransmission rates using a priori and posteriori algorithms. In the multipath network, to achieve the desired throughput and delay, we propose to incorporate an adaptive packet allocation algorithm for retransmission, across the available resources of the paths. This approach is based on a discrete water filling algorithm, i.e., bit-filling, \rev{but, with two \emph{desired objectives}}, maximize throughput and minimize the delay. In the multipath multi-hop setting, we propose a new decentralized balancing optimization algorithm. This balancing algorithm minimizes the throughput degradation, caused by the variations in the channel quality of the paths at each hop. \off{called bottleneck effects.This balancing minimizes the throughput degradation, which is due to the bottlenecks caused by the variations in the channel quality of the paths at each hop.} Furthermore, to increase the efficiency, in terms of the \rev{desired} objectives, we propose a new selective recoding method at the intermediate nodes. We derive bounds on the throughput and the mean and maximum in-order delivery delay of AC-RLNC, both in the multipath and multipath multi-hop case. \jrev{In the multipath case, we prove that in the non-asymptotic regime, the suggested code may achieve more than 90\% of the channel capacity with zero error probability under mean and maximum in-order delay constraints, namely a mean delay smaller than three times the optimal genie-aided one and a maximum delay within eight times the optimum.} In the multipath multi-hop case, the balancing procedure is proven to be optimal with regards to the achieved rate. Through simulations, we demonstrate that the performance of our adaptive and causal approach, compared to selective repeat (SR)-ARQ protocol, is capable of gains up to a factor two in throughput and a factor of more than \jrev{three in mean delay and eight in maximum delay. The improvements on the throughput delay trade-off are also shown to be significant with regards to the previously developed singlepath AC-RLNC solution.}
\end{abstract}
\begin{IEEEkeywords}
Ultra-reliable low-latency Communications, Random linear network coding (RLNC), forward error correction (FEC), feedback, causal, coding, adaptive, in order delivery delay, throughput.
\end{IEEEkeywords}
\section{Introduction}\label{intro}
The increasing demand for network connectivity and high data rates \jrevm{necessitates} efficient utilization of all possible resources. In recent years, the connectivity moved forward from point-to-point schemes (i.e. single path, SP) to heterogeneous multipath (MP) multi-hop (MH) networks in which intermediate nodes can cooperate and share the common medium for efficient communications. \jrev{Such advanced networks appear for instance when considering the backhaul of smart cities, which can act as a bottleneck in case of a massive data traffic \cite{saadat2018multipath, noauthor_snob-5g_nodate}. Taking advantage of the connectivity to various resources such as public WiFi and wireless links with several base stations, MP-MH networks can} provide robustness and reliable communications with high data rates by simultaneously sharing the physical layer resources over each path and hop. However, it is critical to ensure that efficiency is not compromised and that in-order delivery delay is managed \cite{zeng2012joint}, particularly when the links are unreliable and retransmissions are required; when there are high round-trip-time (RTT) delays between the sender and the receiver; or when the state of each link/path is not fully determined. \Cref{MH_MP_setting} illustrates the considered \jrevm{communication settings}.

\ifdouble
\begin{figure}
    \centering
    \includegraphics[trim=0cm 0.7cm 0cm 0cm,width = 1 \columnwidth]{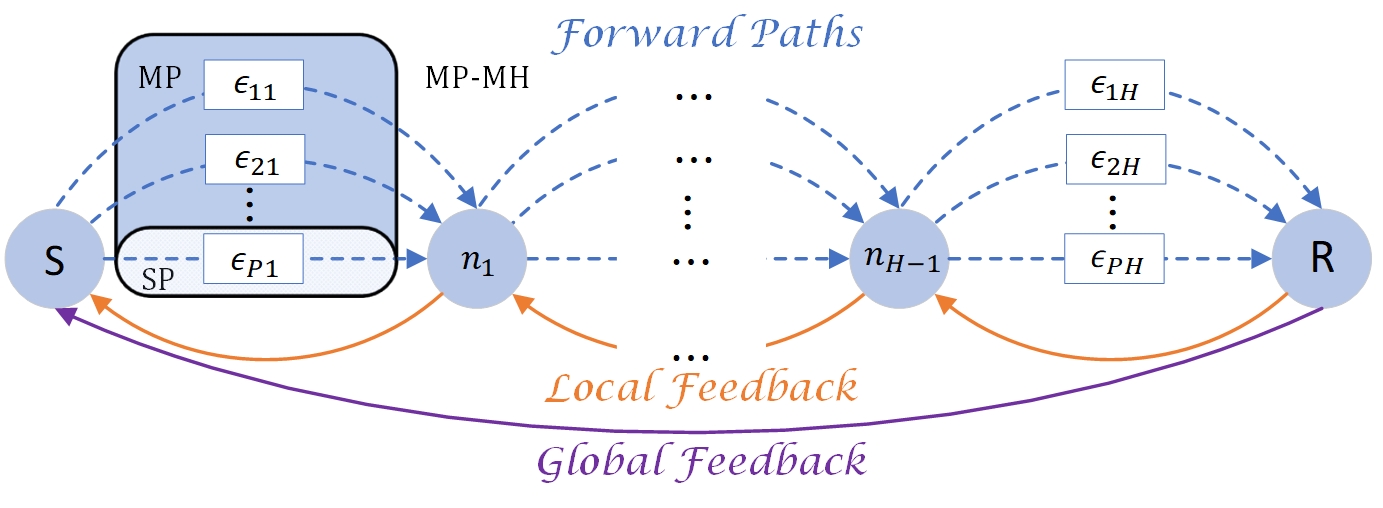}
    \caption{Multipath multi-hop network with delayed feedback. The single-path, multipath, and multipath multi-hop communications are denoted by SP, MP, and MP-MH, respectively.}
    \label{MH_MP_setting}
\end{figure}
\fi

To achieve the min-cut max-flow capacity in the \jrevm{MP-MH} networks, \rev{solutions based on information-theory} with a very large blocklength regime have been considered \cite{ho2006random}. However, in streaming communications, which are utilized, for example, in audio/video transmissions, automotive, smart-city, \jrevm{and control applications,} there are strict real-time constraints that demand low in-order delivery delays while the high data rates require all the available resources of the MP-MH networks. Traditional information-theory solutions with large blocklength are not able to reach this desired trade-off.

In point-to-point channels with erasures and delayed feedback, different packet-level techniques have been considered to manage and reduce the effects of this throughput-delay trade-off. Using forward error correction (FEC) according to the feedback acknowledgments, the in-order delivery delay can be reduced \cite{KarLei2014}, and the performance of SR-ARQ (Selective Repeat - Automatic Repeat reQuest) protocols can be boosted \cite{cloud2015coded}. Moreover, coding  solutions are considered as well in \ar{\cite{luby2002lt,shokrollahi2006raptor,joshi2012playback,joshi2014effect,malak2019tiny}} for SP communication, and in \cite{du2015network,cloud2016multi,gabriel2018multipath,ferlin2018mptcp,MSME19} for MP and MH networks. However, although part of those solutions are reactive to the feedback acknowledgments (i.e. causal), none of those solutions are tracking the varying channel condition and the rate (i.e. not adaptive). \jrev{In \cite{yang2014deadline}, an adaptive coding algorithm for blocks is presented, able to choose optimally the size of the next coding block for deadline-aware applications while in \cite{shi2015adaptive}, a multi-hop protocol optimizing the transmission of files is designed. Taking advantage of a per-packet acknowledgment, for SP communications, we have recently provided  in \cite{cohen2019adaptive} a novel adaptive and causal random linear network coding (AC-RLNC) solution with FEC. This solution is adaptive not only to the mean packet loss probability as the above works but also to the specific erasure pattern of each block. } To ensure low in-order delivery delays, in this solution the sender tracks the channel variations (i.e. adaptive) using a priori and posteriori algorithms (i.e. causal), and adapts the code rate according to the erasure realizations.

\ifdouble\else
\begin{figure}
    \centering
    \includegraphics[trim=0cm 0.7cm 0cm 0cm,width = 0.7 \columnwidth]{system3.jpg}
    \caption{Multipath multi-hop network with delayed feedback. The single-path, multipath, and multipath multi-hop communications are denoted by SP, MP, and MP-MH, respectively.}
    \label{MH_MP_setting}
\end{figure}
\fi

Although joint optimization of coding and scheduling in SP communications has been considered \jrev{in our previous work}  \cite{cohen2019adaptive}, joint coding and scheduling in MP and MH networks for optimizing the trade-off between throughput and in-order delivery delay  remains a challenging open problem. To generalize the AC-RLNC solution to MP networks, the main \rev{difficulty} lies in the discrete allocation of the new coded packets of information and FEC coded packets, over the available paths. \rev{On} one hand, in order to achieve a high throughput, the sender needs to send as many packets of information as possible. \rev{On} the other hand, it has to balance the failed transmissions by retransmitting several times the same packets. \rev{Furthermore, the} earlier the retransmissions are sent, \rev{the} lower is the delay. To obtain the desired throughput-delay trade-off, the raised open problem is how to allocate transmitted packets to the paths, at each time slot. Moreover, in a MH network, the rate is limited by the link with the smallest rate, therefore causing bottleneck effects if the channel rate varies from hop to hop. In this paper, we propose a MP and MH adaptive causal network coding solution that can learn the erasure pattern of the channels, and adaptively adjust the allocation of its retransmitted packets across the available paths. \rev{Furthermore}, in the MH setting, we propose a decentralized adaptive solution which may completely \rev{avoid} the bottleneck effect by reorganizing the order of global paths at each intermediate node. \jrev{The discrete allocation of packets on the paths as well as the decentralized solution able to decrease the bottleneck effect in MP and MH networks constitute the key improvements with regards to our previous SP solution \cite{cohen2019adaptive}.}  This novel approach closes the mean and max in-order-delay gap and boosts the throughput. 

\subsection*{Contributions}
We propose a novel adaptive causal coding solution with FEC for MP and MH communications with delayed feedback. The proposed solution generalizes the SP setting, in which the AC-RLNC algorithm can track the network condition, and adaptively adjust its retransmission rates a priori and \jrevm{a posteriori} over all the available resources in the network. Specifically, to balance the allocation of packets across different paths, the adaptive algorithm we utilize is based on \rev{a} discrete water-filling approach. Furthermore, in the MH setting, to reduce the bottleneck effect due to the channel variations at each hop, we propose a new decentralized optimization balancing algorithm. This adaptive algorithm can be applied independently at each intermediate node according to the feedback acknowledgments, while tracking the erasure patterns of the income and outcome channels.

\jrevm{For both MP and MH networks}, we provide bounds on the throughput and in-order delivery delay of our AC-RLNC solution. Specifically, we prove that for the MP network in the non-asymptotic regime, the suggested code may achieve more than 90\% of the channel capacity with zero error probability \jrev{under mean and maximum delay constraints}. \jrevm{The provided bounds on the in-order delivery delay provide further insights on the advantage of our MP protocol with regards to the SP one.} Moreover, in the MP-MH network, we prove that the decentralized balancing protocol is optimal with regards to the achieved rate.

\jrevm{We contrast the performance of the proposed approach with that of SR-ARQ.} We demonstrate that the proposed approach\off{ based on AC-RLNC for MP and MH networks} can, compared to SR-ARQ, achieve a throughput up to two times better and reduce the delay by a factor of more than three. \jrev{\jrevm{Furthermore, we compare our MP-MH solution} to the SP AC-RLNC protocol presented in \cite{cohen2019adaptive}, proving that the multipath and multihop specificities of our algorithm outperform the SP one, both in throughput and in in-order delay. \jrevm{While \cite{cohen2019adaptive} presented the improvements brought by the SP AC-RLNC protocol over traditional SP coding schemes, here we present a MP-MH solution that also provides better trade-offs than such traditional schemes.}}

The structure of this work is as follows. In \Cref{sys}, we formally describe the system model and the metrics \rev{in use}. In \Cref{rel}, we provide a background on the SP solution and on the MH transmission protocol. In \Cref{MP}, we present the MP solution with theoretical guarantees and  simulation results. In \Cref{MH}, \rev{we generalize the MP solution and analyses to the MP-MH solution}. Finally, we conclude the paper in \Cref{Conc}.

\ifdouble
\begin{table}
\normalsize
    \centering
    \begin{tabular}{|l|l|}
        \hline
        \textbf{Param} & \textbf{Definition}\\\hline
        $P$, $H$& number of paths and number of hops \\\hline
        $\epsilon_{p,h}$ &  erasure probability of the $p$th path and $h$th hop \\\hline
        $r_{p,h}$& $1-\epsilon_{p,h}$, rate of the $p$th path and $h$th hop \\\hline
        $\bar{\epsilon}$ & $\nicefrac{(\epsilon_1+...+\epsilon_P)}{P}$, mean erasure rate \\\hline
        t& time slot index\\\hline
        M & number of information packets\\\hline
        $x_i$ & information packets, $i\in [1,M]$\\\hline
        $\mu_i \in \mathbb{F}_z$ & random coefficients \\\hline
        $c_{t,p}$ & RLNC to transmit at time slot $t$ on path $p$\\\hline
        $t^{-}$ & $t-RTT$, time slot of the delayed feedback \\\hline
        $D_{mean}$& mean in-order delivery delay\\\hline
        $D_{max}$ & max in-order delivery delay\\\hline
        $\eta$ & throughput\\\hline
        $k$ & number of packets in window, $RTT-1$ \\\hline
        $EW$ & end window of $k$ new packets \\\hline
        $\bar{o}$ & maximum window size \\\hline
        $m_p$ & number of FEC's for the $pth$ path\\\hline
        $ad$ & number of added DoF's (global denoted by g)\\\hline
        $md$ & number of missing DoF's (global denoted by g)\\\hline
        $d$ & $md/ad$, DoF rate \\\hline
        $th$ & retransmission parameter \\\hline
        $\Delta$ & $P\cdot(d-1-th)$, DoF rate gap \\\hline
        $\mathcal{C}$ &  set of all sent RLNC's\\\hline
        $\mathcal{C}^{r}, \mathcal{C}^{n}$ & set of repeated and new RLNC's\\\hline
        $\mathcal{A}$& set of RLNC's with ACK feedback\\\hline
        $\mathcal{N}$& set of RLNC's with NACK feedback\\\hline
        $\mathcal{F}$&  set of RLNC's that do not have a feedback yet\\\hline
        $\mathcal{U}$& set of RLNC's depending on undecoded packets\\\hline
        $\mathcal{P}_p$& set of RLNC's sent on path $p$\\\hline
       \jrevm{$l(.,.)$} & \jrevm{Bhattacharya distance}\\\hline
       \jrevm{$r_p(t)$} & \jrevm{rate of the $p$-th path at time $t$}\\\hline
       \jrevm{$r_{p,up}$} & \jrevm{upper bound on the rate of the $p$-th path}\\\hline
       \jrevm{$n_{p}^{EW}$} & \jrevm{number of useless packets sent at the EW}\\
       & \jrevm{on the $p$-th path}\\\hline
       \jrevm{$n_{p}^{w}$} & \jrevm{number of packets sent on the $p$-th path}\\
       & \jrevm{ in one window}\\\hline
       \jrevm{$D_{\text{mean}}^s$} & \jrevm{$D_{\text{mean}}$ in case of feedback state $s$, i.e. either}\\
      &  \jrevm{no feedback, ACK feedback or NACK feedback} \\\hline 
       \jrevm{$\lambda$} &  \jrevm{fraction of time  without feedback}\\
       &\jrevm{compared to the total time of transmission}\\\hline
       \jrevm{$T_{\max}$} & \jrevm{maximum number of transmissions}\\\hline
    \end{tabular}
    \vspace{0.1cm}
    \caption{Symbol definition}
    \label{table : definition}
    \vspace{-0.9cm}
\end{table}
\fi

\section{System Model and Problem Formulation}\label{sys}
We consider a real-time slotted communication model with feedback in two different settings: first a multipath (MP) channel with $P$ paths and then a  multipath and multi-hop (MH) channel with $H$ hops and $P$ paths per hop, as represented in \Cref{MH_MP_setting}. \rev{Symbol definitions are provided in \ifdouble\Cref{table : definition}\else Table I\fi}.

\paragraph*{Multipath setting}
 At each time slot $t$, the sender transmits over each path $p\in \{1,\ldots,P\}$ a coded packet $c_{t,p}$.\footnote{A negligible size header containing transmission information may be sent with the coded packets.} \jrev{The considered coding scheme is part of RLNC. The sender is thus assumed to be able to generate random coefficients in a given Galois field in order to obtain random linear combinations of raw packets of information. \jrevm{A formal definition of such linear combinations is provided in section III with \eqref{eq:coding_process}.} For the decoding process, the receiver retrieves the raw packets of information \jrevm{by performing a Gaussian elimination on a linear system, built by considering the underlying equation for each coded packet.} In \cite{ho2006random}, this coding scheme is proven to generate asymptotically (i.e. for large field sizes) linearly independent coded combinations. Hence, it is assumed in the following that if $k$ coded packets depend only on $k$ raw packets of information, decoding is possible.
 } Forward paths between the sender and the receiver are modeled as independent binary erasure channels (BEC)\footnote{\jrev{Channel models including burst of erasures such as Gilbert-Elliott channels\cite{gilbert1960capacity} can also be considered. This has been done for our singlepath AC-RLNC protocol in \cite{cohen2019adaptive} and constitutes an interesting direction for future work.}}. Namely, erasure events are i.i.d. with probability $\epsilon_p$ for each $p$-th path. According to the erasure realizations,  the receiver sends at each time slot either an acknowledgment (ACK) or a negative-acknowledgment (NACK) message to the sender, using the same paths. For simplicity, feedback messages are assumed to be reliable, i.e., without errors in the feedback\footnote{Analysis that incorporates errors in the feedback \jrev{as well as a fluctuating RTT} can be conducted, e.g., by using the techniques provided in \cite{malak2019tiny}. However, we leave this interesting extension as future work.}.

The delay between the transmission of a coded packet and the reception of the corresponding feedback is called round trip time (RTT). Defining $\rho_p$ as the rate of the forward path $p$, in bits/second and $|c_{t,p}|$ as the size of coded packet $c_{t,p}$, in bits, the maximum duration of a transmission is $t_d = \rev{\underset{t,p}{\max}}\nicefrac{|c_{t,p}|}{\rho_p}$. \jrev{Letting $t_{\text{prop}}$ be the maximum propagation time of the paths between sender and receiver in seconds and assuming the size of feedback messages is negligible compared to the one of coded packets, the RTT is given by
\ifdouble
\begin{equation*}
\textstyle    RTT = t_d + 2t_{\text{prop}}.
\end{equation*}
\else
$RTT = t_d + 2t_{\text{prop}}$.
\fi
}Hence, for each transmitted coded packet $c_{t,p}$, the sender receives a ACK$(t,p)$ or NACK$(t,p)$ after RTT seconds. \jrev{As we consider slotted communications, the time-dependent quantities can also be defined in terms of number of slots. In the following, the RTT will hence have units of slots\footnote{\jrev{In WiFi communications with $12$Mbs speed, 8 slots correspond approximately to 1ms.}}.}
\paragraph*{Multipath Multi-hop setting} In this setting, the sender and receiver behave exactly as in the single-hop case. For simplicity, we assume that there are $P$ paths in each hop \rev{$h\in\{1,\ldots,H\}$}, each with i.i.d erasure probabilities $\epsilon_{p,h}$. At each time slot, each intermediate node $n_h$, \rev{$h\in\{1,\ldots,H-1\}$}, receives from the $h$-th hop (and therefore either from the sender for the first node or from the previous node for the others) $P$ coded packets from the independent paths. The node then sends $P$ (possibly different) coded packets on the $(h+1)$-th hop (towards next node or the receiver for the last intermediate node). For feedback acknowledgments, either a local hop-by-hop mechanism (from node to node) or a global one (directly from the receiver to the sender) can be used. \jrev{Letting $t_{\text{prop},h}$ be the maximum propagation delay of one hop, in seconds, and assuming all hops have the same propagation delay, the propagation time $t_{\text{prop}}$ becomes $t_{\text{prop}} = H t_{\text{prop},h}$.}

\ifdouble\else
\begin{figure}
    \centering
    \includegraphics[trim=0cm 0.7cm 0cm 0cm,width = 0.5 \columnwidth]{table2}
\end{figure}
\fi

Our goal for both of these settings, with parameters $\rho$-th and RTT, is to maximize the throughput, $\eta$, while minimizing the in-order delay, $D$, as given in the following definitions.
\begin{definition}{Throughput, $\eta$.}\label{dife:throu}
This is the total amount of information (in bits/second) delivered, in order\rev{,} at the receiver in $n$ transmissions over the forward channel. The normalized throughput is the total amount of information delivered, in order at the receiver divided by $n$ and the size of the packets.
\end{definition}

\begin{definition}{In-order delivery delay, $D$.}\label{dife:delay} This is the difference between the time an information packet is first transmitted in a coded packet by the sender and the time that the same information packet is decoded, in order at the receiver and successfully acknowledged \cite{zeng2012joint}. \jrev{The time needed for the transmission of the acknowledgment message is not counted in $D$}.
\end{definition}

 More precisely, we consider the mean and max in-order delay $D_{mean}$ and $D_{max}$. \rev{While} $D_{mean}$ reduces the overall delay, $D_{max}$ is critical for real-time applications that need a low inter-arrival time between packets.

\section{Background}\label{rel}
Before we consider the MP and MH protocols in details, we present the SP AC-RLNC algorithm suggested in \cite{cohen2019adaptive} and the capacity achieving protocol described in \cite{LunMedKoeEff2008}.

\ifdouble
\begin{figure}
\centering
\includegraphics[trim=0cm 0.8cm 0cm 0cm,width=1\columnwidth]{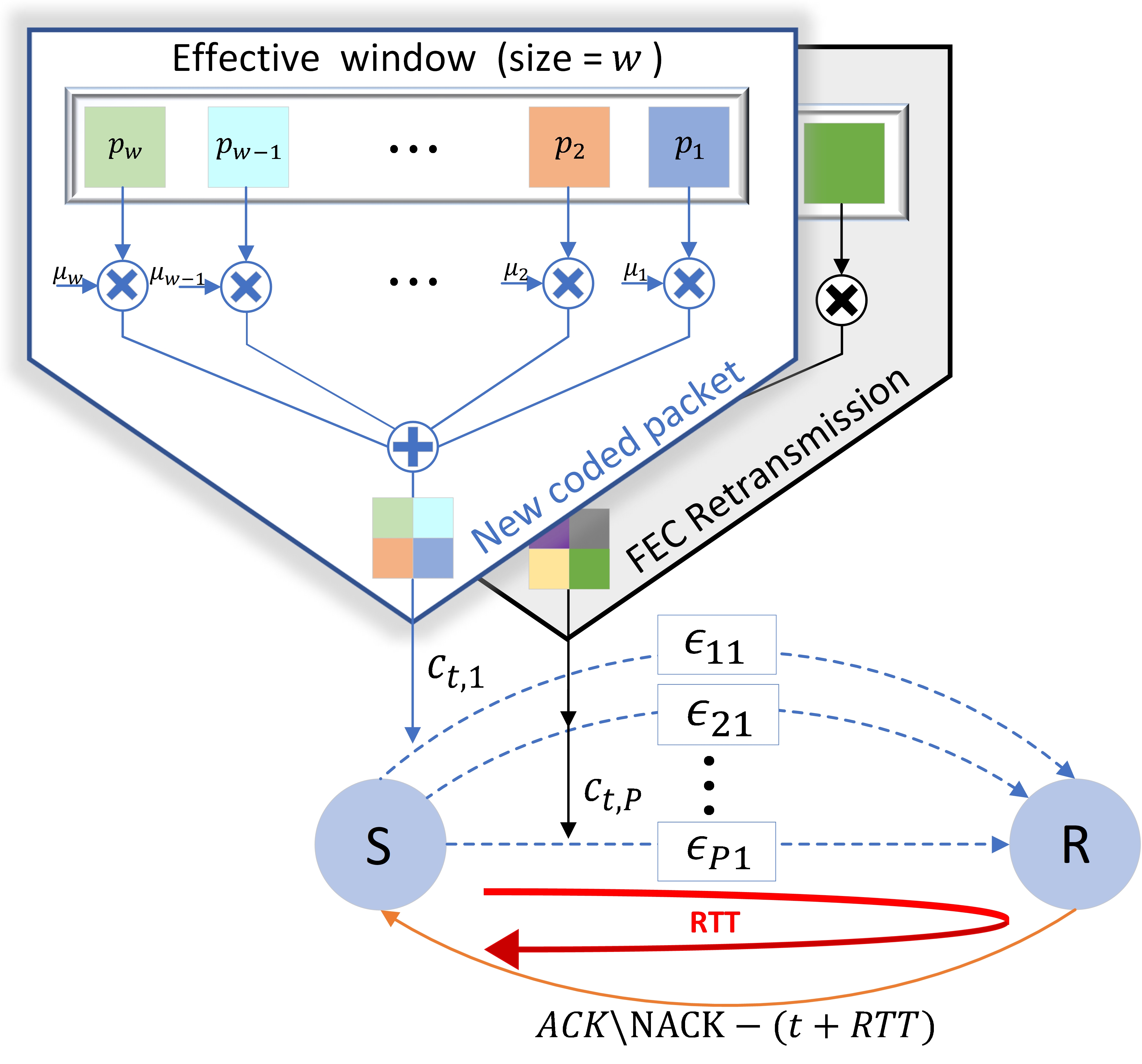}
     \caption{System model and encoding process of the coded RLNC combinations in MP network. In this example, for simplicity of notation, $w_{min}=1$.}
    \label{fig:encoding_pros}
\end{figure}
\else
\begin{figure}
\centering
\subfigure[]{\includegraphics[trim=0cm 0.8cm 0cm 0cm,width=0.4\columnwidth]{window_coding2.jpg}}
\subfigure[]{\includegraphics[trim=0cm 0.7cm 0cm 0cm,width=0.4\columnwidth]{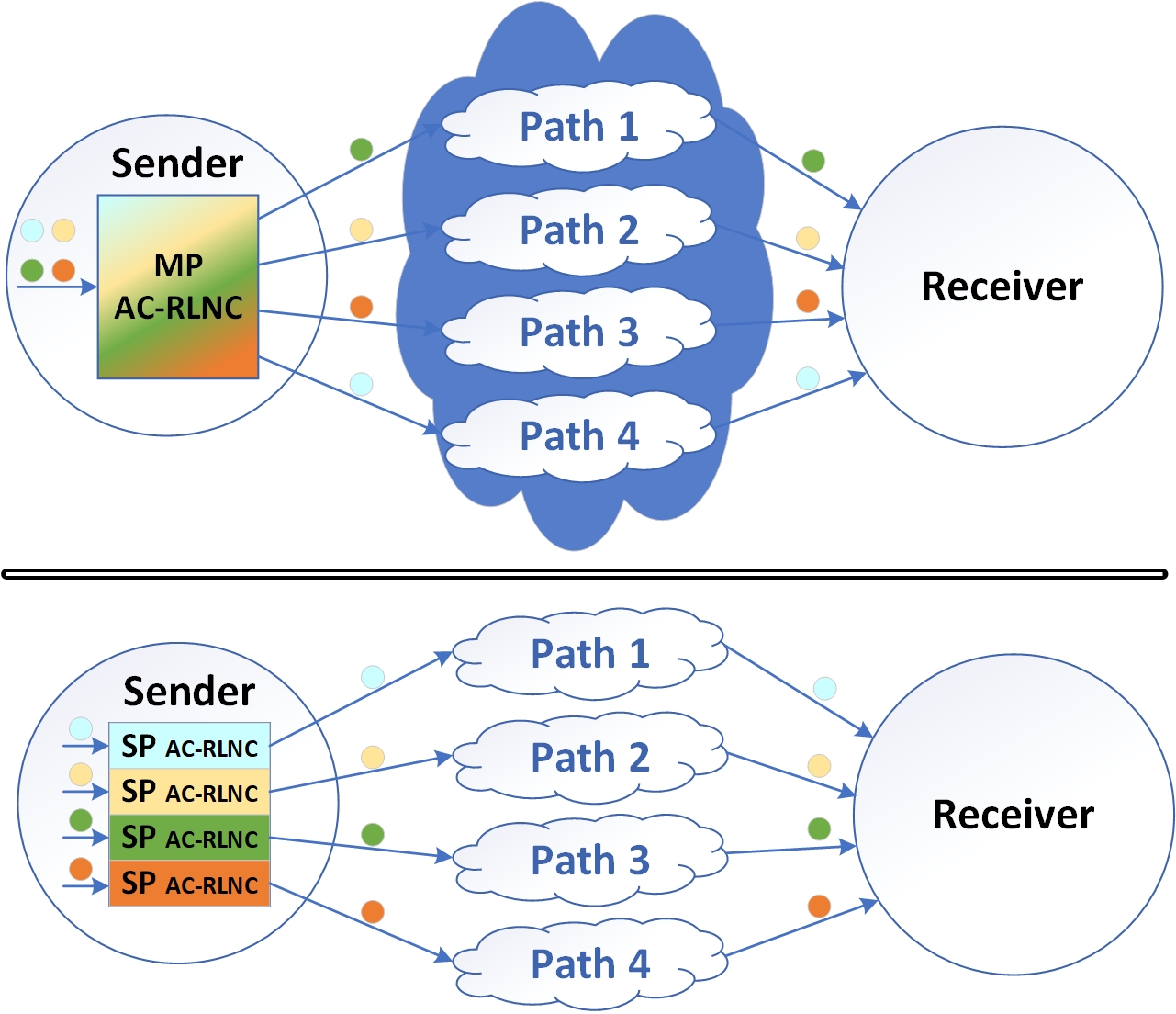}}
     \caption{(a) System model and encoding process of the coded RLNC combinations in MP network. In this example, for simplicity of notation, $w_{min}=1$. (b) \jr{Difference between MP AC-RLNC protocol (upper figure) and independent SP AC-RLNC protocols on each individual path (bottom figure)}. In this example, $H=1$ and $P=4$. By considering a joint MP protocol instead of independent SP ones, we reach better trade-offs between throughput and in-order delay.}
    \label{fig:encoding_pros}
\end{figure}
\fi

\paragraph{SP AC-RLNC algorithm} In \cite{cohen2019adaptive}, an adaptive causal random linear network coding  algorithm for a single-hop single-path setting is described. 
We review here several key features of that protocol.

\paragraph*{Coded packets and sliding window mechanism} Raw packets of information are encoded using RLNC. Each coded packet $c_t$, called a degree of freedom (DoF), is obtained as
\begin{eqnarray}
  \textstyle      c_t &=& \sum_{i=w_{min}}^{w_{max}}\mu_i \cdot x_i,
  \label{eq:coding_process}
\end{eqnarray}
with $\{\mu_i\}_{i=w_{min}}^{w_{max}}$, random coefficients in the field $\mathbb{F}_z$ of size z, $\{x_i\}_{i=w_{min}}^{w_{max}}$, the raw information packets and $w_{min}$ and $w_{max}$, the limits of the current window\footnote{   As shown in \cite{ho2006random}, if $z$ is large enough, a generation of $k$ raw information packets can be decoded with high probability using Gaussian elimination on $k$ coded packets.}. At each time step, the sender can either transmit a new coded packet or repeat the last sent combination\footnote{\rev{``}Same" and \rev{``}new" refer here to the raw packets of information contained in the linear combination. Sending the same linear combination thus means that the raw packets are the same but with different random coefficients.}. $w_{max}$ is thus incremented each time a new DoF is sent while $w_{min}$ corresponds to the oldest raw packet that is not yet decoded.  We denote by DoF($c_t$) the DoF's contained in $c_t$, i.e. the number of information packets in $c_t$.

\paragraph*{Tracking the path rate and DoF rate} Given the feedback acknowledgments, the sender can track the erasure probability $\epsilon$ (and thus the path rate $r$, defined as $r=1-\epsilon$) as well as the DoF rate $d = md/ad$, with $md$ and $ad$ being respectively the number of erased and repeated DoF's. These quantities are needed by the two FEC mechanisms that counteract the erasures.

\paragraph*{A priori mechanism} When $k = RTT-1$ new packets of information have been transmitted, $\lceil \epsilon \cdot k \rfloor$ repetitions\footnote{ $\lceil x \rfloor$ corresponds to rounding $x$ to the nearest integer.} of the same RLNC are sent in order to compensate for the expected erasures\footnote{In the  following, this mechanism is referred to as ``\textit{sending FEC's}''.}.
\paragraph*{A posteriori mechanism} When a priori repetitions are not sufficient, the retransmission  criterion $r-d\leq th$, with $th$ being a tunable parameter, determines when additional FEC's, called feedback FEC's (FB-FEC) are needed. Intuitively, when the DoF rate is higher (resp. lower) than the rate of the channel, \jrevm{then many (resp. few) coded packets are erased}\rev{,} and retransmissions are (resp. are not) needed.

\paragraph*{Size limit mechanism} In order to reduce $D_{max}$, the window size $w = w_{max}-w_{min}+1$ is limited to $\bar{o}$. Like $th$, $\bar{o}$ can also be tuned to achieve different trade-offs. When that limit is reached, the sender transmits the same RLNC until \rev{it} knows, \rev{as result of} the feedback \rev{acknowledgments}, that all information packets are decoded.

\paragraph{Multi-hop transmission}
As proved in \cite{mincutmaxflowtheorem}, the capacity of a network equals its maximum flow (or equivalently its minimum cut). Using random linear network coding, that capacity can be achieved for instance \rev{by} using the protocol suggested in \cite{LunMedKoeEff2008}. In this protocol, each node generates and sends random linear combinations of all packets in its memory whatever the structure of the network. Specifically to the MP and MH setting, at each time slot, each node (the sender included) stores received RLNC's (or the previously sent RLNC's for the encoder). Then, each node sends on the $P$ next paths a linear combination of all received RLNC's. In the asymptotic regime, that protocol achieves the capacity\cite{LunMedKoeEff2008}.

\section{Multipath Communication}\label{MP}
In this section, \jr{as illustrated in \ifdouble\Cref{fig:dif_sp_mp}}\else\Cref{fig:encoding_pros}-(b)\fi, we propose to merge the AC-RLNC solution described in \Cref{rel} (for \jr{each individual} SP), with an adaptive algorithm balancing the allocation of new RLNC's and FEC RLNC's on the paths. \ifdouble\Cref{fig:encoding_pros} \else\Cref{fig:encoding_pros}-(a) \fi shows the adaptive causal coding process on the single hop multipath network. The adaptive algorithm is based on a discrete water filling approach, i.e., bit-filling \rev{(BF)}, as given in \cite{papandreou2007bit}. However, the \rev{BF} is modified in order to take into account both rate and in-order delay objectives. To reach the desired trade-off between throughput and delay in the MP network, we suggest to utilize the key features of the SP AC-RLNC algorithm,  especially the a priori and a posteriori FEC mechanisms, as well as the tracking of the channel rate and the DoF rate via the feedback acknowledgments. Yet, in the MP network, to maximize the throughput while minimizing the in-order delay, allocation of new coded packets and retransmissions demands to consider adaptively the available resources across all the channels. The symbol definitions are provided in \ifdouble\Cref{table : definition}\else Table I\fi. The main components of the packet level protocol and the balanced allocation algorithm over the different paths are described next in \Cref{MPCodeConstruction}, \jrevm{while theoretical analyses to assess the achieved trade-offs are presented in \Cref{Analytical_results_MP}}.  In \Cref{MPsimulation}, the simulation results of the solution we suggest for the MP network are presented. \jr{The theoretical analyses and the provided simulation \jrevm{results provide insights} on the advantage of our MP protocol with regards to the SP ones, }applied independently on each path.
\subsection{Adaptive Coding Algorithm}\label{MPCodeConstruction}
Here we detail the MP solution, described in \ifdouble\Cref{pseudo_code}\else Algorithm 1\fi.
\paragraph{A priori mechanism (FEC)}
After the transmission of $k=P(RTT-1)$ new RLNC's, $m_p = \lceil\epsilon_p (RTT-1)\rfloor$ FEC's are sent on the $p$-th path. This mechanism tries to provide a sufficient number of DoF's to the receiver, by balancing the expected number of erasures. Note that $m_p$ may vary from path to path, according to the estimated erasure probability of each path.

\ifdouble
\begin{figure}
\centering
\includegraphics[trim=0cm 0.4cm 0cm 0cm,width=0.83\columnwidth]{mp_vs_sp1.jpg}
	\caption{\jr{Difference between MP AC-RLNC protocol (upper figure) and independent SP AC-RLNC protocols on each individual path (bottom figure)}. In this example, $H=1$ and $P=4$. By considering a joint MP protocol instead of independent SP ones, we reach better trade-offs between throughput and in-order delay.}
	\label{fig:dif_sp_mp}
\end{figure}
\fi

\ifdouble\else
\begin{figure}
    \centering
    \includegraphics[trim=0cm 0.7cm 0cm 0cm,width = 0.5 \columnwidth]{algo1}
\end{figure}
\fi

\paragraph{A posteriori mechanism (FB-FEC)}
\ar{The retransmission criterion of the FB-FEC mechanism has to reflect, at the sender's best knowledge, the ability of the receiver to decode RLNC's.
Letting $md_g$ be the number of missing DoF's (i.e. the number of new coded packets that have been erased) and $ad_g$ be the number of added DoF's (i.e. the number of repeated RLNC's that have reached the receiver), the retransmission criterion can be expressed as $md_g > ad_g$.
Indeed, if the number of erased new packets is not balanced by enough repetitions, then decoding is not possible.
However, at the sender side, $md_g$ and $ad_g$ cannot be computed exactly due to the $RTT$ delay. At time step $t$, the sender can only compute accurately these quantities for the RLNC's sent before $t^{-} = t-RTT$ and that have thus a feedback acknowledgment. But for those sent between $t^{-}$ and $t$, these two quantities have to be estimated, for instance using the average rate of each path. Thus, letting $md_1$ and $ad_1$ (resp. $md_2$ and $ad_2$) correspond to the RLNC's with (resp. without) feedback acknowledgments, $md_g = md_1+md_2$ and $ad_g = ad_1+ad_2$ are computed through (\ref{mdg}) and (\ref{adg}). Note that in these equations, the different sets\footnote{Note that $\mathcal{C} = \mathcal{C}^{r}\cup\mathcal{C}^{n} = \mathcal{A}\cup\mathcal{N}\cup \mathcal{F}$} are defined in \ifdouble\Cref{table : definition} \else Table I \fi and that $|\mathcal{S}|$ denotes the cardinality of set $\mathcal{S}$.
\begin{eqnarray}
     &&\hspace{-1.2cm}\textstyle md_1 = |\mathcal{N}\cap\mathcal{C}^{n}\cap\mathcal{U}|, \hspace{-0.0cm}\textstyle md_2 = \sum_{p=1}^{P}\epsilon_p|\mathcal{P}_p\cap\mathcal{C}^{n}\cap\mathcal{F}\cap\mathcal{U}| \label{mdg}\\
     &&\textstyle\hspace{-1.2cm} ad_1 = |\mathcal{A}\cap\mathcal{C}^{r}\cap\mathcal{U}|, \hspace{0.3cm}\textstyle ad_2 =  \sum_{p=1}^{P}r_p|\mathcal{P}_p\cap\mathcal{C}^{r}\cap\mathcal{F}\cap\mathcal{U}|\label{adg}
\end{eqnarray}
Now defining the DoF rate of MP network as $d= md_g/ad_g$, and using a tunable parameter $th$,  the retransmission criterion can be rewritten as $d-1> th$.
   \off{\begin{equation*}
       \text{FB-FEC:} \quad retransmisison \quad  \iff d-1> th.
   \end{equation*}}}
Finally, defining the DoF rate gap $\Delta$ as
\begin{equation}\label{criterion1}
    \textstyle \Delta = P\cdot \left(d-1-th\right),
\end{equation}
the FB-FEC criterion we suggest is
\begin{equation}\label{criterion}
   \textstyle  \text{ FB-FEC:} \quad retransmission \iff \Delta>0.
\end{equation}
\off{Setting $th$ to 0 will result in a protocol that follows the average erasure of the channel, while increasing (resp. decreasing) it will decrease (resp. increase) the throughput and the delay since more (resp. less) FB-FEC's will be sent.}

\paragraph{Packet allocation}
In order for the sender to decide on which paths allocate the new coded packets of information and retransmissions of FB-FEC's, we propose a new algorithm inspired by a \rev{BF} algorithm given in \ar{\cite{papandreou2005new} and \cite{papandreou2007bit}}.  However, unlike the optimization problem considered in \ar{\cite{papandreou2005new} and \cite{papandreou2007bit}}, in which there is one  objective in order to optimally allocate bits under power constraints, the optimization problem in this paper contains two objectives. On one hand, the throughput needs to be maximized through the allocation of new coded packets while on the other hand, the in-order delay has to be reduced through FB-FEC retransmissions. \Cref{bit_filling} illustrates the \rev{BF} packets allocation for a multipath network.
\ifdouble
\else
\footnotetext[10]{\label{note1}The paths that have not yet an assigned RLNC for that time step.}
\setcounter{footnote}{10} sets the counter to have the specified value.
\fi
\ifdouble
\begin{figure}
\centering
\includegraphics[trim=0cm 0.7cm 0cm 0.4cm,width=1\columnwidth]{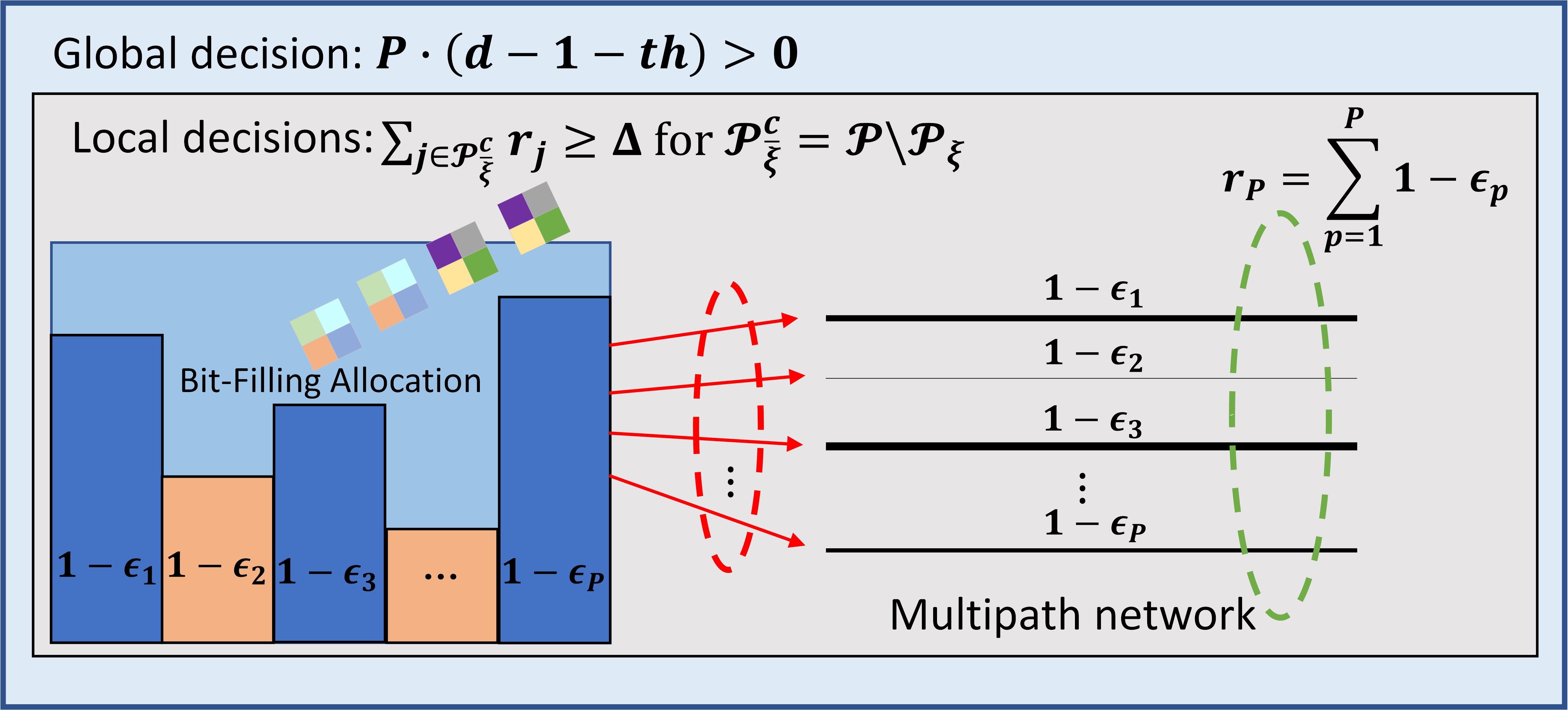}
     \caption{Bit-Filling packets allocation in multipath network. \rev{At the sender, given the estimated rates (green), according to the retransmission criterion given in \eqref{criterion}, first a global decision is made (i.e., check if retransmission is needed). Then if retransmission is needed, a local decision on which paths to send new packets of information and retransmissions (described by different colors as in \Cref{fig:encoding_pros}) are made  according to the bit-filling given in \Cref{PDWF} (red).}}
    \label{bit_filling}
\end{figure}
\else
\begin{figure}
\centering
\includegraphics[trim=0cm 0.7cm 0cm 0.4cm,width=0.65\columnwidth]{BitFilling6.jpg}
     \caption{Bit-Filling packets allocation in multipath network. \rev{At the sender, given the estimated rates (green), according to the retransmission criterion given in \eqref{criterion}, first a global decision is made (i.e., check if retransmission is needed). Then if retransmission is needed, a local decision on which paths to send new packets of information and retransmissions (described by different colors as in \ifdouble\Cref{fig:encoding_pros}\else \Cref{fig:encoding_pros}-(a)\fi) are made  according to the bit-filling given in \Cref{PDWF} (red).}}
    \label{bit_filling}
\end{figure}
\fi

We define the set of all the paths as $\mathcal{P}$, and the index of all the possible sub-sets as $\xi\in\{1,\ldots,2^{P}\}$.  We denote the possible subset of paths over which the sender will transmit the new coded packets of information as $\mathcal{P}_\xi$, and the possible subsets of paths over which the sender will transmit the retransmissions of FB-FEC packets as $\mathcal{P}_{\bar{\xi}}^c$.
\begin{prop}[Bit-Filling]\label{PDWF}
Given the estimated rates of all the paths, $r_p$ with $p\in\{1,\ldots,P\}$, the sender wants to maximize the throughput of the new packets of information. \jrev{The set of paths $\hat{\mathcal{P}}^{\xi}$ on which new packets are sent is obtained as}
\ifdouble
\begin{equation}
\label{DWF}
\begin{aligned}
&  \jrev{\hat{\mathcal{P}}^{\xi}} =\rev{\underset{(\mathcal{P}_\xi)}{\arg\max}}
& &  \sum_{i\in \mathcal{P}_\xi} r_i,   \\			
& \hspace{0.4cm}\text{s.t.}
& &\sum_{j\in \mathcal{P}_{\bar{\xi}}^c} r_j \geq \Delta \quad  \text{for} \quad \mathcal{P}_{\bar{\xi}}^c = \mathcal{P}\setminus\mathcal{P}_\xi,
\end{aligned}
\end{equation}
\else
\begin{equation}
\label{DWF}
\jrev{\hat{\mathcal{P}}^{\xi}} = \rev{\underset{(\mathcal{P}_\xi)}{\arg\max}} \sum_{i\in \mathcal{P}_\xi} r_i, \quad \text{s.t.} \quad \sum_{j\in \mathcal{P}_{\bar{\xi}}^c} r_j \geq \Delta \quad  \text{for} \quad \mathcal{P}_{\bar{\xi}}^c = \mathcal{P}\setminus\mathcal{P}_\xi,
\end{equation}
\fi
where the optimization problem minimizes the in-order delivery delay, by providing over the selected paths a sufficient number of DoF's for decoding. \off{\textcolor{red}{Alejandro : the following sentence is a bit redundant with the previous one no? I don't think we should keep both.}In other words, the sender needs to send sufficient retransmissions of FB-FEC packets over the selected paths so that the receiver is able to decode.}
\end{prop}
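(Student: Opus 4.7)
The plan is to show that the formulation in (\ref{DWF}) simultaneously captures the two informal objectives that precede it: maximizing throughput (via the objective function) and minimizing in-order delay through decodability at the receiver (via the constraint). I would treat the two separately and then note how the bit-filling algorithm of \cite{papandreou2005new,papandreou2007bit} efficiently solves the resulting combinatorial program.

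First, for the objective, I would observe that at a given time slot the sender commits exactly one packet per path, so the decision reduces to partitioning $\mathcal{P}$ into a subset $\mathcal{P}_\xi$ carrying new RLNCs and its complement $\mathcal{P}_{\bar{\xi}}^c$ carrying FB-FEC retransmissions. For each path $i\in\mathcal{P}_\xi$, the associated new RLNC reaches the receiver with probability $r_i$, and by the linear independence guarantee of RLNC over a large field \cite{ho2006random}, each successfully received new packet increases the decodable rank by one. Hence the expected number of new DoFs delivered per slot equals $\sum_{i\in\mathcal{P}_\xi} r_i$, which is precisely the instantaneous throughput of new information that the objective of (\ref{DWF}) maximizes.

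Second, for the constraint, I would appeal to the a posteriori retransmission criterion (\ref{criterion1})-(\ref{criterion}). The gap $\Delta = P(d-1-th)$ quantifies the average excess of missing DoFs over added DoFs that must be compensated for the receiver to decode in-order. Since each retransmission path $j\in\mathcal{P}_{\bar{\xi}}^c$ contributes a repeated RLNC that reaches the receiver with probability $r_j$, the expected number of useful retransmissions received in the current slot is $\sum_{j\in\mathcal{P}_{\bar{\xi}}^c} r_j$. Imposing that this sum is at least $\Delta$ guarantees that the decodability gap is closed within the slot in expectation, which in turn bounds the number of slots before an in-order decoding event and thus reduces both $D_{mean}$ and $D_{max}$.

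The main obstacle is that the feasible set is combinatorial in $\xi\in\{1,\ldots,2^P\}$, so a brute-force search is exponential in $P$. Here I would invoke the discrete water-filling logic of \cite{papandreou2007bit}: sort the paths in decreasing order of $r_p$, initialize $\mathcal{P}_\xi=\emptyset$, and greedily move the highest-rate path from $\mathcal{P}_{\bar{\xi}}^c$ into $\mathcal{P}_\xi$ as long as the remaining paths still satisfy $\sum_{j\in\mathcal{P}_{\bar{\xi}}^c} r_j \geq \Delta$. The greedy step is optimal because the objective is additive and monotone in the included paths, while swapping an included path for a lower-rate excluded one can only reduce the objective without relaxing the constraint. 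This exchange argument establishes both the correctness of the formulation as a faithful rendering of the throughput/delay objectives and the polynomial-time solvability of the resulting program.
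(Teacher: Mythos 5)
Your account of why the objective measures instantaneous new-DoF throughput and why the constraint $\sum_{j\in\mathcal{P}_{\bar{\xi}}^c} r_j \geq \Delta$ closes the expected decoding gap is a reasonable reading of the proposition, and it matches the rationale the paper states inline (the paper presents this proposition as a design formulation with the justification embedded in its statement rather than in a separate proof). The problem is your final paragraph: the greedy-plus-exchange argument you give for solving \eqref{DWF} in polynomial time is incorrect. Since $\sum_{p\in\mathcal{P}} r_p$ is fixed, maximizing $\sum_{i\in\mathcal{P}_\xi} r_i$ is equivalent to choosing an excluded set $\mathcal{P}_{\bar{\xi}}^c$ with $\sum_{j\in\mathcal{P}_{\bar{\xi}}^c} r_j \geq \Delta$ of minimum total rate, which is a subset-sum/knapsack problem. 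Your greedy fails on, e.g., rates $(0.8,0.5,0.5)$ with $\Delta = 0.8$: the greedy moves the $0.8$-rate path into $\mathcal{P}_\xi$ first (the remaining sum $1.0 \geq \Delta$), but then neither $0.5$-rate path can be added (remaining would be $0.5 < \Delta$), yielding objective $0.8$; the optimum excludes only the $0.8$-rate path and includes both $0.5$-rate paths, achieving $1.0$. Your exchange argument overlooks that a single included high-rate path can block the inclusion of several lower-rate paths whose combined rate exceeds it, so one-for-one swaps do not certify optimality. The paper itself does not make your optimality claim; it notes that for large $P$ one should fall back on Knapsack-type relaxations, implicitly acknowledging that the exact problem is enumerated over $\xi\in\{1,\ldots,2^P\}$ only when $P$ is small.
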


\ifdouble
\begin{algorithm}
\caption{multipath protocol for packet scheduling}
\label{pseudo_code}
\begin{algorithmic}[1]
\State \textbf{Initialize transmission:}
\While{packets to transmit}
    \If{Feedback available}
        \State Update $\epsilon_p$ for each path
        \State Update $md_g$ and $ad_g$
        \State Update $\Delta$
    \EndIf
    \State \textbf{Size limit transmissions:}
    \If{$w >\bar{o}$}
        \State Retransmit same RLNC until DoF($c_t$)$=0$
    \Else
    \State \textbf{FEC transmissions:}
        \ForAll{paths with $m_p>0$}
            \State Retransmit same RLNC
            \State $m_p = m_p-1$
        \EndFor
        \If{remaining paths\footnote{}}
            \State \textbf{FB-FEC transmissions:}
            \If{$\Delta>0$}
                \State Determine FB-FEC paths
                \State Transmit same RLNC on these paths
            \EndIf
            \State \textbf{New transmissions:}
            \ForAll{remaining paths}
                \If{not EW}
                    \State Transmit new RLNC
                \EndIf
            \EndFor
            \State \textbf{FEC transmission (initialization):}
            \If{EW}
                \State Set $m_p := \lceil \epsilon_p (RTT-1)\rfloor$
                \ForAll{remaining paths}
                    \State Transmit same RLNC
                    \State $m_p = m_p-1$
                \EndFor
            \EndIf
        \EndIf
    \EndIf
\EndWhile
\end{algorithmic}
\end{algorithm}
\fi
\ifdouble
\footnotetext[10]{\label{note1}The paths that have not yet an assigned RLNC for that time step.}
\fi

It is important to note that by \rev{tuning} the chosen parameter $th$ (reflected in $\Delta$ \rev{\eqref{criterion1}}) it is possible to obtain the desired throughput-delay trade-off.
Moreover, to maximize the performance of the proposed approach, it is required to solve problem (\ref{DWF}) only when the estimations of the rates change.
To reduce the complexity of the optimization problem, once the number of paths is high, we can consider a relaxation of the optimization, \rev{e.g.,} using Knapsack problem algorithms \ar{\cite{Kap_relax,Kap_relax2}}.

\subsection{Analytical Results for Delay and Throughput}\label{Analytical_results_MP}

\subsubsection{An Upper Bound for the Throughput}\label{SecThroughputMP}
Here,  the achieved throughput in the MP network is upper bounded with zero error probability, by generalizing the techniques given in \cite{cohen2019adaptive} for point-to-point communications. \jrev{Considering zero error probability throughput means that all packets need to be decoded in a given delay budget, both for the mean and maximum delay.} In the suggested AC-RLNC solution, the sender follows the retransmission criterion \eqref{criterion}, which is computed according to the acknowledgments provided by the feedback channel. Yet, due to the transmission delay, those acknowledgments are obtained at the sender with a delay of RTT. Thus, the estimated rates of the paths may be different from the actual ones.

We consider the case for which the actual sum-rate of the paths at time slot t, i.e. $\sum_{p=1}^{P} r_p(t)$, is higher than the estimated rate at the sender side, i.e. $\sum_{p=1}^{P} r_p(t^{-})$ with $t^-=t-RTT$. In this case, throughput will be spoilt as coded retransmissions will be sent while not being necessary for the decoding. Let us denote by $\textbf{c}_{p}=(c_{t^-,p},\ldots,c_{t,p})$ and by $\textbf{c}_{p}^{\prime}=(c_{t^-,p}^{\prime},\ldots,c_{t,p}^{\prime})$, the vectors of coded packets transmitted on the $p$-th path according to the retransmission criterion, given respectively the estimated rate $r_p(t^-)$ and the actual rate $r_p(t)$ available at the sender non-causally.

Various methods have been considered in order to bound the channel variations in the non-asymptotic regime \cite{polyanskiy2010channel,polyanskiy2011dispersion}. Here, to upper bound the throughput, we bound the distance between the realization during one RTT period given the actual rate and calculated rate at each path utilizing the following minimum Bhattacharyya distance \cite{shannon1967lower, viterbi2013principles,dalai2014elias,barg2005distance}.
\begin{definition}\label{BhattacharyyaD}
Given a probability density function $W(y)$ defined on a domain $\mathcal{Y}$, the Bhattacharyya distance between two sequences $c_{p}$ and $c_{p}'$ is given by \cite{dalai2014elias}
\begin{equation*}
    l(c_{p},c_{p}^{\prime})=-ln(BC(c_{p},c_{p}')),
\end{equation*}
where $BC(c_{p},c_{p}')$ is the Bhattacharyya coefficient, defined as
\begin{align}\label{eq:BC}
  BC(c_{p},c_{p}') = \sum_{y\in \mathcal{Y}}\sqrt{W(y|c_{p})W(y|c_{p}^{\prime})},
\end{align}
with $W(y|c_{p})$ and $W(y|c_{p}^{\prime})$ corresponding to $W(y)$ conditioned on the sequences $c_{p}$ and $c_{p}^{\prime}$, respectively.
\end{definition}

\begin{theorem}{\jrev{ An upper bound on the throughput of AC-RLNC in MP network is}}\label{theoremMP}
\begin{align}
    \eta\leq \sum_{p=1}^{P} \jrevm{\left (r_p(t^-) - l(r_p(t),r_p(t^-))\right )},
    \label{eq:bat_them}
\end{align}
where $l(\cdot,\cdot)$ is the Bhattacharyya distance \jrevm{and the rate $r_p(t)$ of the $p$-th path at time slot $t$ is bounded by \eqref{eq:max_rate}.}
\end{theorem}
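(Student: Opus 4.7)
The plan is to generalize the per-path throughput analysis of the single-path AC-RLNC algorithm from \cite{cohen2019adaptive} by first bounding the achievable throughput on each path individually in terms of the mismatch between the delayed rate estimate $r_p(t^-)$ and the actual rate $r_p(t)$, and then summing over the $P$ independent paths. The key observation is that the sender schedules new RLNC transmissions and FB-FEC retransmissions on path $p$ according to the criterion $\Delta > 0$ in \eqref{criterion}, which is evaluated using $r_p(t^-)$. If a genie-aided version of the protocol had access to the actual $r_p(t)$, it would produce a different vector $\mathbf{c}_p'$ that generally schedules a different number of retransmissions; the throughput lost on path $p$ is the price paid for this estimation mismatch.

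First, I would establish that if the sender used $r_p(t^-)$ as if it were the true rate, then the per-path throughput is upper bounded by $r_p(t^-)$. This step is essentially a ``genie-aided'' accounting: over any RTT window the number of slots devoted to new RLNCs (versus a priori FECs $m_p = \lceil \epsilon_p (RTT-1)\rfloor$ and a posteriori FB-FECs triggered by \eqref{criterion}) is at most $r_p(t^-)\cdot RTT$, which directly generalizes the corresponding SP derivation in \cite{cohen2019adaptive}.

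Second, I would quantify the loss caused by the mismatch between $\mathbf{c}_p$ and $\mathbf{c}_p'$. Using Definition~\ref{BhattacharyyaD}, the Bhattacharyya coefficient $BC(c_p,c_p')$ measures the overlap between the conditional output distributions $W(y\mid c_p)$ and $W(y\mid c_p')$ induced on path $p$ under the two rate assumptions. Following the non-asymptotic channel variation arguments of \cite{polyanskiy2010channel,polyanskiy2011dispersion,dalai2014elias,barg2005distance}, the per-path throughput loss is at least $l(r_p(t),r_p(t^-)) = -\ln BC(c_p,c_p')$, since this quantity exponentially controls how distinguishable the actually-scheduled sequence is from the genie-aided one and hence how many transmissions are ``wasted.'' Combining the per-path bound $r_p(t^-) - l(r_p(t),r_p(t^-))$ over the $P$ independent BEC paths, by linearity and independence, yields \eqref{eq:bat_them}.

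The main obstacle I anticipate is rigorously justifying that the Bhattacharyya distance is the correct functional to capture the mismatch penalty for the specific causal criterion $\Delta > 0$ combined with the bit-filling allocation of Proposition~\ref{PDWF}. Although $W(y\mid c)$ is simple in the BEC case, the induced joint distribution over $\mathbf{c}_p$ and $\mathbf{c}_p'$ depends intricately on the AC-RLNC scheduling decisions taken across the RTT window, including the allocation of new versus repeated RLNCs across paths. I would address this by conditioning on the scheduling decisions within each RTT (over which $r_p(t^-)$ is piecewise constant), applying the Bhattacharyya bound per window, and then taking a worst-case bound over admissible scheduling policies before summing across paths to recover \eqref{eq:bat_them}.
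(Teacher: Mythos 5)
Your proposal takes essentially the same approach as the paper's proof: bound the throughput on each path via the single-path result of \cite[Theorem 1]{cohen2019adaptive}, subtract a Bhattacharyya-distance penalty $l(r_p(t),r_p(t^-))$ that accounts for the mismatch between the delayed estimate $r_p(t^-)$ and the actual rate $r_p(t)$, and sum the per-path bounds over the $P$ independent paths. The one piece the paper uses that you do not make explicit is the role of the variance bound \eqref{eq:max_rate}, namely $r_p(t) \leq r_p(t^-) + \sqrt{V_p(t)}/(RTT-1+m_p)$: the paper substitutes $W(y\mid c_p')=r_p(t^-)$ and $W(y\mid c_p)=r_p(t)$ into the Bhattacharyya coefficient \eqref{eq:BC} with the summation ranging over the $RTT$ window, and then bounds $l(r_p(t),r_p(t^-))$ by inserting \eqref{eq:max_rate}; your more interpretive ``distinguishability controls wasted transmissions'' framing is a reasonable gloss, but the paper's actual step is this concrete substitution rather than a separate distinguishability argument. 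You also correctly flag that the rigor of tying the Bhattacharyya functional to the specific $\Delta>0$ scheduling policy is the weak point --- the paper itself does not resolve this and relies on the same informal substitution --- so your caveat applies to the original proof as much as to your own.
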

\begin{proof}
Bounding the rate of each individual path with \cite[Theorem 1]{cohen2019adaptive}, the throughput $\eta$ is upper bounded by the sum of these bounds. For completeness, the main steps of the single path proof are briefly reviewed here.
Given the calculated rate $r_p(t^-)$, the actual rate of each path in the MP network at time slot $t$ is bounded by
\begin{equation}\label{eq:max_rate}
     r_p(t) \leq r_p(t^-)+ \frac{\sqrt{V_p(t)}}{RTT-1+m_p},
\end{equation}
where $V_p(t)$ denote the variance of each path during the period of RTT.

To conclude, using the summation range in (\ref{eq:BC}) to be from $t=0$ to $RTT-1$, and letting $W(y\vert c_{p}')=r_p(t^-)$ and $W(y\vert c_{p})=r_p(t)$, we obtain
\ifdouble
\begin{align*}
    \eta\leq \sum_{p=1}^{P} \jrevm{\left(r_p(t^-) - l(r_p(t),r_p(t^-))\right)}.
\end{align*}
\else
    $\eta\leq \sum_{p=1}^{P} \jrevm{\left(r_p(t^-) - l(r_p(t),r_p(t^-))\right)}$.
\fi
\jrevm{
Note that by using the upper bound on the rate $r_p(t)$ given by \eqref{eq:max_rate} in the Bhattacharyya distance, the distance $l(r_p(t),r_p(t^-))$ is bounded. Thus enables to obtain the result in \Cref{theoremMP}.}
\end{proof}

\ifdouble
\begin{figure}
\centering
\includegraphics[trim=0cm 0.7cm 0cm 0cm,width=1\columnwidth]{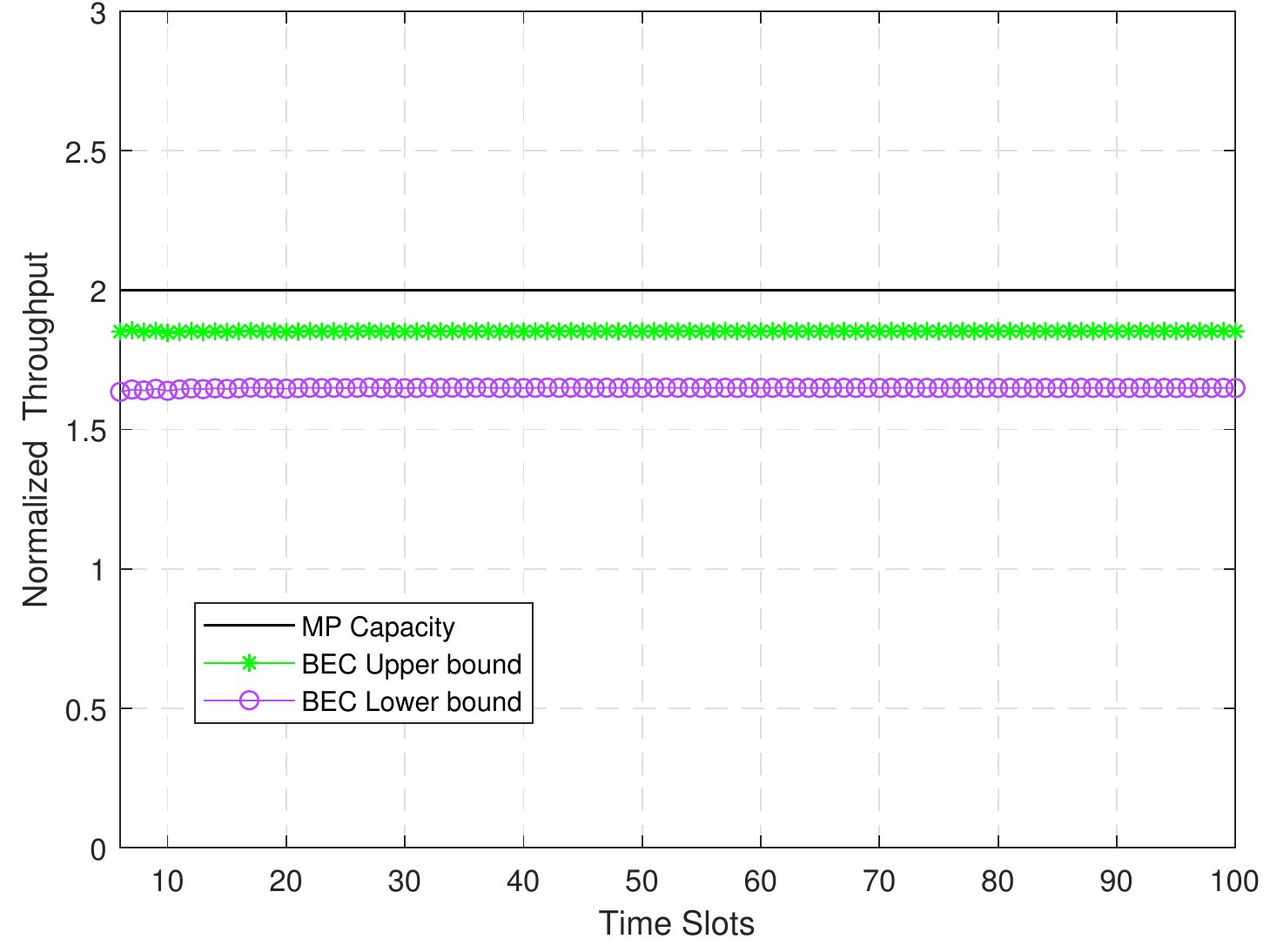}
	\caption{\jrev{Throughput upper and lower bounds in MP network with $H=1$ and $P=4$ for BEC channels with erasure probability of $\epsilon_{11}=0.2$, $\epsilon_{21}=0.4$, $\epsilon_{31}=0.6$ and $\epsilon_{41}=0.8$. Note that the range of the abscissa is from $2$ (the theoretical minimum of the $RTT$ delay) to $100$. Moreover, the throughput is not degraded by increasing the $RTT$. In the asymptotic regime, the MP AC-RLNC code may attain the capacity.}}
	\label{fig:ThroughputErrorBecMP}
\end{figure}

\begin{figure}
\centering
\includegraphics[trim=0cm 0.7cm 0cm 0cm,width=1\columnwidth]{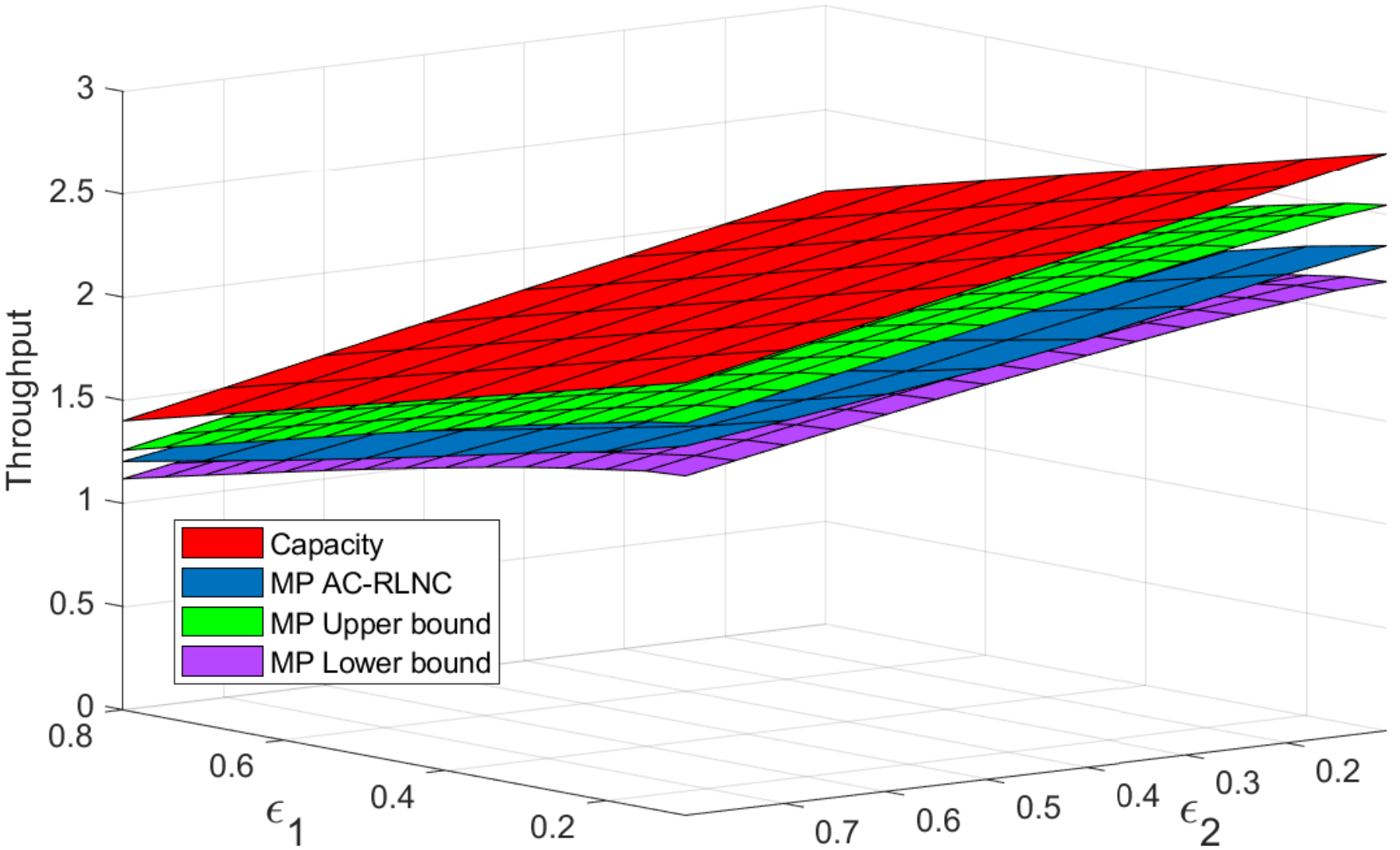}
	\caption{\jrev{Throughput upper and lower bounds for BECs in MP network with $H=1$, $P=4$, $RTT=20$, and with $\epsilon_{31}= 0.2$ and $\epsilon_{41} =0.8$, while the erasure probabilities of the two other paths ($\epsilon_{11}$ and $\epsilon_{21}$) vary in the range $[0.1 \;  0.8]$.}}
	\label{fig:ThroughputMP}
\end{figure}
\else
\begin{figure}
\centering
\subfigure[]{\includegraphics[trim=0cm 0.7cm 0cm 0cm,width=0.4\columnwidth]{BhattacharyyaDistance_lower}}
\subfigure[]{\includegraphics[trim=0cm 0.7cm 0cm 0cm,width=0.5\columnwidth]{ThroughputUpper_lower2}}
\subfigure[]{\includegraphics[trim=0cm 0.8cm 0cm 0cm,width=0.5\columnwidth]{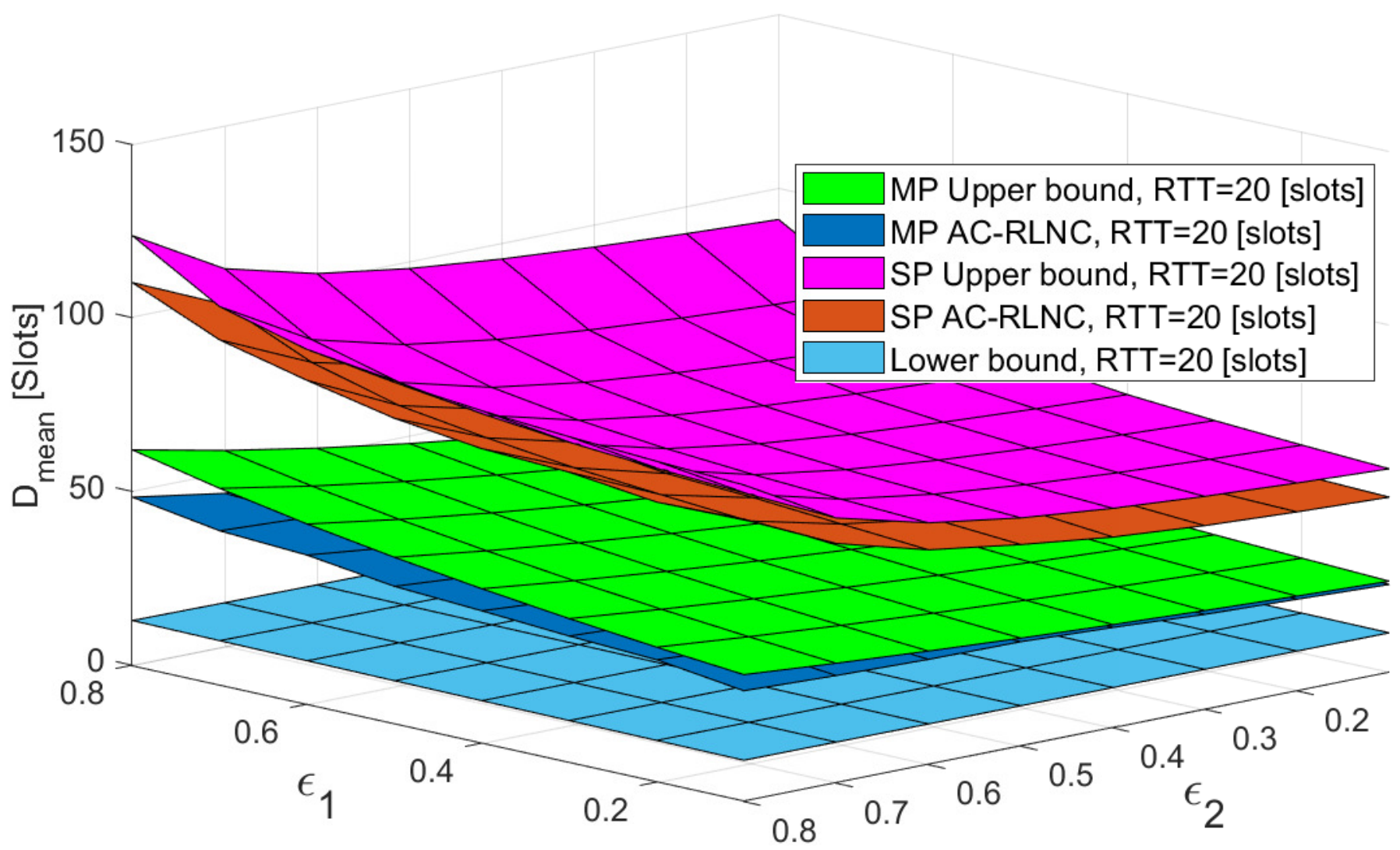}}
	\caption{(a) \jrev{Throughput upper and lower bounds in MP network with $H=1$ and $P=4$ for BEC channels with erasure probability of $\epsilon_{11}=0.2$, $\epsilon_{21}=0.4$, $\epsilon_{31}=0.6$ and $\epsilon_{41}=0.8$. Note that the range of the abscissa is from $2$ (the theoretical minimum of the $RTT$ delay) to $100$. Moreover, the throughput is not degraded by increasing the $RTT$. In the asymptotic regime, the MP AC-RLNC code may attain the capacity.} (b) \jrev{Throughput upper and lower bounds for BECs in MP network with $H=1$, $P=4$, $RTT=20$, and with $\epsilon_{31}= 0.2$ and $\epsilon_{41} =0.8$, while the erasure probabilities of the two other paths ($\epsilon_{11}$ and $\epsilon_{21}$) vary in the range $[0.1 \;  0.8]$.} (c) \jrev{Mean in-order delivery delay upper and lower bounds for a MP network with $H=1$ and $P=4$ for BEC's with erasure probability of $\epsilon_{11}=0.2$, $\epsilon_{21}=0.4$, $\epsilon_{31}=0.6$ and $\epsilon_{41}=0.8$, considering the new MP AC-RLNC protocol and independent SP AC-RLNC protocols of \cite{cohen2019adaptive}.}}
	\label{fig:ThroughputErrorBecMP}
\end{figure}
\fi

\begin{corollary}{\bf BEC.}\label{BEC_MP}
An upper bound on the throughput of AC-RLNC for BEC in the MP network is
\begin{eqnarray*}
    \eta^{BEC}\leq \sum_{p=1}^{P} \jrevm{\left(1-\epsilon_p - l(r_{p}^{BEC}(t),r_{p}^{BEC}(t^-))\right)}.
\end{eqnarray*}
\end{corollary}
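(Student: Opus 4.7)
The plan is to specialize \Cref{theoremMP} to the BEC setting, so the proof should amount to a direct substitution rather than any new analysis. First, I would observe that for a BEC with erasure probability $\epsilon_p$, the nominal path rate is by definition $r_p = 1-\epsilon_p$, so the rate estimated at the sender after the $RTT$ delay satisfies $r_p(t^-) = 1-\epsilon_p$. Plugging this identity into the generic upper bound \eqref{eq:bat_them} replaces the leading summand $r_p(t^-)$ by $1-\epsilon_p$, which is precisely the first term in the claimed BEC bound.

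Next, I would evaluate the Bhattacharyya distance $l(\cdot,\cdot)$ of \Cref{BhattacharyyaD} under the BEC-specific conditional distributions. Since the BEC output alphabet is ternary (the binary input augmented by an erasure symbol), the conditionals $W(y\vert c_p)$ and $W(y\vert c_p^{\prime})$ appearing in \eqref{eq:BC} reduce to elementary mass functions parameterized only by $r_p^{BEC}(t)$ and $r_p^{BEC}(t^-)$. Taking the negative logarithm of the resulting Bhattacharyya coefficient yields the quantity $l(r_p^{BEC}(t),r_p^{BEC}(t^-))$ appearing in the corollary, and summing over the $P$ paths produces the claimed bound.

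The main obstacle I anticipate is not analytical but one of bookkeeping and notational consistency: I have to verify that the rate used inside the a posteriori retransmission criterion \eqref{criterion} at the sender indeed coincides with the nominal BEC parameter $1-\epsilon_p$, so that the hypothesis of \Cref{theoremMP} is satisfied, and that the actual rate $r_p^{BEC}(t)$ continues to satisfy the envelope \eqref{eq:max_rate}, ensuring that the Bhattacharyya distance remains well-defined in the non-asymptotic regime. Once these consistency checks are in place, the corollary follows immediately by instantiating \Cref{theoremMP} with the BEC rates.
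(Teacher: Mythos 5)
Your proposal is correct and takes essentially the same route as the paper: it specializes \Cref{theoremMP} by substituting $r_p(t^-)=1-\epsilon_p$, carries over the rate envelope \eqref{eq:max_rate} with the BEC variance $V_p^{BEC}(t)=RTT(1-r_p^{BEC}(t^-))r_p^{BEC}(t^-)$, and evaluates the Bhattacharyya distance over one RTT to obtain the stated bound path by path. The only cosmetic difference is that you derive the conditionals $W(y\vert c_p)$ from the ternary BEC alphabet, whereas the paper simply sets $W(y\vert c')=r_p^{BEC}(t^-)$ and $W(y\vert c)=r_p^{BEC}(t)$ directly as in its single-path corollary; both instantiations yield the same $l(r_p^{BEC}(t),r_p^{BEC}(t^-))$.
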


\begin{proof}
In the same manner, using Bhattacharyya distance, the proof is obtained directly from \cite[Corollary 1]{cohen2019adaptive}. That is, we consider the sum upper bound throughput of all the available path in the BEC independently. Thus, in the BEC channel $r_p(t^-)=1-\epsilon_p$, and
\[
    r_{p}^{BEC}(t)\leq r_{p}^{BEC}(t^-) + \frac{\sqrt{V_{p}^{BEC}(t)}}{RTT-1+m_p},
\]
where $V_{p}^{BEC}(t)$ is the variance of the BEC during the period of RTT given by
\[
V_{p}^{BEC}(t) = RTT(1-r_{p}^{BEC}(t^-))r_{p}^{BEC}(t^-).
\]

Hence, using the summation range in (\ref{eq:BC}) to be from $t=0$ to $RTT-1$, and letting $W(y\vert c')=r_{p}^{BEC}(t^-)$ and $W(y\vert c)=r_{p}^{BEC}(t)$,
\begin{eqnarray*}
    \eta^{BEC}\leq \sum_{p=1}^{P} \jrevm{\left(1-\epsilon_p - l(r_{p}^{BEC}(t),r_{p}^{BEC}(t^-))\right)}.
\end{eqnarray*}
\end{proof}

The upper bound of \Cref{BEC_MP} can be analyzed in two ways. First, the impact of the $RTT$ on the upper bound is analyzed in \ifdouble\Cref{fig:ThroughputErrorBecMP}\else\Cref{fig:ThroughputErrorBecMP}-(a)\fi. Secondly, the upper bound and the throughput achieved by the suggested protocol are compared in \ifdouble\Cref{fig:ThroughputMP}\else\Cref{fig:ThroughputErrorBecMP}-(b)\fi.

In \ifdouble\Cref{fig:ThroughputErrorBecMP}\else\Cref{fig:ThroughputErrorBecMP}-(a)\fi, the throughput upper bound of AC-RLNC is shown as a function of $RTT$, for MP networks with 4 BEC's, respectively with rate $\epsilon_{11}=0.2$, $\epsilon_{21}=0.4$, $\epsilon_{31}=0.6$ and $\epsilon_{41}=0.8$. We can note that in this specific network, any AC-RLNC solution can at most achieve around \jrevm{$90\%$} of the capacity of the network. In \ifdouble\Cref{fig:ThroughputMP}\else\Cref{fig:ThroughputErrorBecMP}-(b)\fi, we present the MP throughput upper bound for BEC in a MP network with $H=1$, $P=4$, $RTT=20$, and with $\epsilon_{31}= 0.2$ and $\epsilon_{41} =0.8$, while the erasure probabilities of the two other paths ($\epsilon_{11}$ and $\epsilon_{21}$) vary in the range $[0.1 \;  0.8]$. With the chosen algorithm's parameters (namely, $th=0$ and $\bar{o} = 2k$), the achieved throughput is close to the upper bound. By changing these parameters, that upper bound may be obtained but to the detriment of the in-order delay. Hence, the throughput-delay trade-offs can be managed to meet specific application's constraints by tuning these two parameters. \jrev{Note that as  the bounds are obtained path by path, they also hold for the parallel AC-RLNC singlepath protocols whose performances are thus not represented in the figure.}

\subsubsection{\jrev{A Lower Bound for the Throughput}}\label{SecThroughputLowMP}
\off{To obtain an upper bound on the throughput, the rate loss caused by the additional FEC packets sent due to the delayed feedback is analyzed. Yet, with regards to this upper bound, it remains to determine the impact of the size limit mechanism, which will degrade the algorithm's performances. Analyzing this impact will therefore provide a lower bound on the throughput.}
\jrevm{A lower bound on the throughput can be obtained by determining the impact of the size limit mechanism and its degradation to the algorithm’s performance.}
\jrev{ If the window size limit is reached, the sender transmits FEC transmissions until he knows from the acknowledgments that the receiver managed to decode all the information packets in the window. However, the delayed feedback induces the transmission of useless repetitions as the decoder has already been able to decode. Letting $n^{\text{EW}}_{p}$ and $n^w_p$ denote respectively the number of such useless transmissions and the total number of transmissions during the window, per path, a lower bound on the throughput is obtained in the following by bounding above the performance loss with regards to the upper bound provided by the right-hand side of \eqref{eq:max_rate} and denoted by $r_{p,up}$.}

\jrev{\begin{theorem}{A lower bound on the throughput of AC-RLNC in MP network is}\label{theoremMPlb}
\begin{align}
    \eta_{lb} \geq \sum_{p=1}^{P} \jrevm{\left(r_{p,up} - \frac{ n^{\text{EW}}_{p}}{n^{w}_{p}}\right)}.
\end{align}
\end{theorem}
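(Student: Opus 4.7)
The plan is to bound the throughput path by path and then sum the per-path contributions by linearity. For a fixed path $p$, I would take $r_{p,up}$ (the right-hand side of \eqref{eq:max_rate}) as the starting point, since this quantity already internalizes the only loss analyzed in \Cref{theoremMP}: namely, the mismatch between the rate $r_p(t^{-})$ estimated from the $RTT$-delayed feedback and the actual rate $r_p(t)$, which drives the conservative a priori/a posteriori FEC decisions. Hence, from $r_{p,up}$, the only further penalty that must be subtracted to reach an achievable (and therefore lower-bounding) rate is the one induced by the size-limit mechanism recalled in \Cref{rel}.

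To quantify that penalty, I would count, within a single window of $n^{w}_{p}$ transmissions on path $p$, the number $n^{\text{EW}}_{p}$ of repetitions that occur after the receiver has already decoded the window but before the corresponding acknowledgment has propagated back through the delayed feedback. Because the sender keeps retransmitting the same RLNC once $w > \bar{o}$, those $n^{\text{EW}}_{p}$ slots carry no new degree of freedom, and each one displaces what could otherwise have been an informative transmission. The normalized throughput loss per slot on path $p$ is therefore exactly $n^{\text{EW}}_{p}/n^{w}_{p}$, so subtracting this fraction from $r_{p,up}$ yields an effective per-path rate. Summing over $p = 1, \ldots, P$ produces the claimed bound
\[
\eta_{lb} \geq \sum_{p=1}^{P} \left( r_{p,up} - \frac{n^{\text{EW}}_{p}}{n^{w}_{p}} \right).
\]

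The main obstacle, which will demand some care, is justifying that the two sources of loss -- the one already absorbed in $r_{p,up}$ (delayed-feedback-induced FEC/FB-FEC conservatism) and the one due to useless EW repetitions -- act on disjoint slots, so that their normalized penalties can be combined additively rather than double counted. I would handle this by observing that the FEC/FB-FEC rules triggered by criterion \eqref{criterion} govern slots for which $w \leq \bar{o}$, whereas the purely repetitive EW phase activates strictly after $w > \bar{o}$. Since the two mechanisms therefore partition the slots of a window, their normalized deductions from the per-path capacity sum, and linearity across the $P$ independent paths yields the theorem.
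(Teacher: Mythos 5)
Your high-level decomposition mirrors the paper's: bound each path by $r_{p,up}$ (the right-hand side of \eqref{eq:max_rate}), identify the only additional loss as the useless repetitions triggered by the size-limit mechanism, quantify it as the fraction $n^{\text{EW}}_p/n^w_p$ of the window, subtract path by path, and sum. Your remark about why the two loss sources occupy disjoint slots (FEC/FB-FEC for $w\leq\bar o$, end-of-window repetitions only for $w>\bar o$) is a reasonable justification that the paper leaves implicit, so no objection there.

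However, the bulk of the paper's proof — and the part that makes the theorem informative rather than definitional — is the derivation of explicit bounds on the two counters, which your proposal omits entirely. Specifically, the paper upper-bounds
$n^{\text{EW}}_p \leq \bigl(1-\mathrm{erf}(1/\sqrt 2)\bigr)(1-\epsilon_p)\,\text{RTT}+O(1/\sqrt t)$,
arguing that the size-limit state is only reached when the realized erasure rate deviates from its mean by more than one standard deviation (the $68$--$95$--$99.7$ rule, with the $O(1/\sqrt t)$ correction from the Berry--Esseen theorem), that only the $(1-\epsilon_p)$ fraction of received packets costs throughput, and that the dead time lasts one RTT after the decoding packet is sent. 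It also lower-bounds
$n^w_p \geq (k+k\epsilon_p+k\epsilon_p^2)f+1$,
accounting for the $k$ new packets, the a priori FEC, the a posteriori FEC, scaled by the window factor $f=\bar o/k$, and the one extra erasure that actually triggers the end-of-window state. Without these two concentration-style estimates, the expression $r_{p,up}-n^{\text{EW}}_p/n^w_p$ is just a restatement of what $n^{\text{EW}}_p$ and $n^w_p$ mean, not a usable lower bound. So you have correctly set up the skeleton, but the step that carries the mathematical weight — bounding the numerator above and the denominator below — is missing from your argument.
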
}

\begin{proof}
\jrev{First, an expression for $n^{\text{EW}}_{p}$ can be obtained by considering the sender has reached the window size limit. In this case, he sends the same packets until he receives an acknowledgment enabling it to know the receiver has managed to decode all packets. Suppose the packet allowing to decode was sent at time $t$. Then, transmissions between $t$ and $t+RTT$ do not bring any information to the receiver. Nevertheless, for erased packets in this interval, one does not suffer for throughput loss as the corresponding time slot is anyway useless. Hence, the number of useless packets is obtained as
\begin{equation*}
    \jrevm{n^{\text{EW}}_{p} \leq  \left(1-erf\left(\frac{1}{\sqrt{2}} \right)\right) (1-\epsilon_p) \text{RTT} + O\left(\frac{1}{\sqrt{t}}\right),}
\end{equation*}
\jrevm{where for $t$ sufficient large, $O\left(\frac{1}{\sqrt{t}}\right)$ approaches zero.} In this equation, the $(1-\epsilon_p)$ factor enables to only consider received packets while the $ \left(1-erf\left(\frac{1}{\sqrt{2}} \right)\right)$ factor takes into account the fact that the end of window is reached only if the erasure probability significantly deviates from the mean, as otherwise the retransmission mechanisms are sufficient to decode. As a first approximation, we consider that such events occur when the deviation is greater that one standard deviation, using the so called $68-95-99.7$ rule. \jrevm{Yet, such rule holds for sufficient large $t$. Applying the Berry-Essen theorem \cite[Theorem 13] {polyanskiy2010channel_thesis} to the independent erasure events, the error term of the asymptotic rule is proven to decrease as $\frac{1}{\sqrt{t}}$.}\\
~~\\This number needs to be compared to the total window size, which can be written as
\begin{equation}\label{eq:nw}
    \jrevm{n^{w}_{p} \geq \left(k +   k \epsilon_p + k \epsilon_p^2\right)  \cdot f  + 1},
\end{equation}
\jrevm{where $f =\bar{o}/k$ denotes the window size factor translating the ratio between the maximum window size $\bar{o}$ and the number of information packets per generation $k$. In the above equation, for each window, we consider the transmission of $k$ packets, the a priori mechanism and the a posteriori mechanism. Furthermore, we consider the window size limit is reached if these a priori and a posteriori retransmission mechanisms fail to provide enough DoFs, i.e. if, compared to expected number of erasures, one additional erasure occurs during the first window. }
\off{where $f = \bar{o}/k$ denotes the window size factor translating the ratio between the maximum window size $\bar{o}$ and the number of information packets per generation $k$, and where $t^{(\cdot)}$ denotes the corresponding set of $k$ new packets of information with $r_p(t^0) = 1-\epsilon_p$. In the above equation, for each generation $i$ in the window, the three terms of the sum correspond respectively to the $k$ transmission of new packets, the a priori FEC transmissions and the a posteriori FEC transmissions. As the window size limit is reached at the last generation, no FEC is considered for the last one. The number of a priori FECs for each generation depends on the rate achieved up to  the previous time step, denoted as $j=i-1$. Since the sender is tracking all the time the condition of the channel (with delay of $RTT$), the only possible case to reach the maximum window size limit is when the erasure events are worst in the next generation than in the previous one, as otherwise the feedback mechanism enables the decoding. Hence, in the same manner we analyzed the upper bound through the variance of the noise of the channel, we now lower bound the rate, but reducing at each set of $k$ new transmitted packets of information the possible variance of the noise. Therefore, the lower bound on the rate at each path for the last set before the sender reach the limit size of the window is
\begin{equation*}
 r_p\left(t^{f}\right) \geq \sum_{\substack{i=1\\j=i-1}}^{f} \jrevm{\left(r_p\left(t^{j}\right)  -  \frac{\sqrt{V_{p}\left(t^{j}\right) }}{k+ \left( 1 - r_p\left(t^{j}\right) \right)k }\right)}.
\end{equation*}
Considering these bounds for each path, the lower bound given in \eqref{theoremMPlb} is obtained.}}
\off{
\[
P = \sum_{k=\overline{o}+1}^{\infty} \binom{k-1}{\overline{o}-1} \binom{k-1}{k-\overline{o}} (1-\epsilon_p)^{\overline{o}} \epsilon_p^{k-\overline{o}}
\]
and
\[
    n_p^{EW}=P(1-\epsilon_{p})RTT
\]
}
\off{\jrev{Besides the FEC required to transmit, when the sender reach the limit of the window size because the last set, we have dependencies on the previous sets. The sender transmits for each set, an additional FEC, to compensate the erasures according to the current rate estimation he has. However, since at the following sets the channel situation is worst from the estimation the sender has, those FECs per set are not sufficient. The sum of the additional FEC required to compensate the FEC that are erasure in the following sets of transmissions, where $m_p(t^{0})=0$, is given by,
\begin{equation*}
    m_p\left(t^{f}\right) = \sum_{\substack{i=1\\j=i-1}}^{f} \left( 1 - r_p\left(t^{i}\right) \right) m_p\left(t^{j}\right) + \off{r_p\left(t^{i}\right)} \left( 1 - r_p\left(t^{j}\right) \right)k.
\end{equation*}
Thus, using the upper bound, $r_{p,up}$, as defended in \eqref{eq:max_rate}, the throughput lower bound for each path $p$ is
\begin{equation}\label{eq:lbp}
   r_{p,lb} \geq   r_{p,up} - \frac{ n^{db}_{p} + n^{df}_{p}}{n^{w}_{p}},
\end{equation}
where
\begin{equation}\label{eq:ndb}
    n^{db}_{p} = \frac{r_p\left(t^{f}\right)RTT}{2}
\end{equation}
corresponds to the not required FEC packets the sender transmit due to the feedback channel delay,
\begin{equation}\label{eq:ndf}
    n^{df}_{p} = \left[\frac{r_p\left(t^{f}\right)RTT}{2} - \left(\sqrt{V_{p}\left(t^{j}\right) } +  m_p\left(t^{f}\right)\right) \right]^{+}
\end{equation}
corresponds to the not required FEC packets the sender transmit in the case the FEC packets requires to decode are less than the packets transmitted due to the forward channel delay, and
\begin{equation}\label{eq:nw}
    n^{w}_{p} = \sum_{\substack{i=1\\j=i-1}}^{f} k + \left( 1 - r_p\left(t^{i}\right) \right) m_p\left(t^{j}\right) + \left( 1 - r_p\left(t^{j}\right) \right)k
\end{equation}
corresponds to the total number of transmissions in the case the sender reaches the limit of the window.}

\jrev{To conclude, from \eqref{eq:lbp} using \eqref{eq:ndb}, \eqref{eq:ndf} and \eqref{eq:nw} we obtain,
\begin{align*}
    \eta_{lb} \geq \sum_{p=1}^{P} r_{p,up} - \frac{ n^{db}_{p} + n^{df}_{p}}{n^{w}_{p}}.
\end{align*}}}
\end{proof}
\jrev{In \ifdouble\Cref{fig:ThroughputErrorBecMP}\else\Cref{fig:ThroughputErrorBecMP}-(a)\fi, the upper bound and lower bound are compared to the capacity of a multipath single-hop network with $4$ paths, for RTTs ranging from $2$ to $100$ time slots. From this figure, one can observe that increasing the RTT does not decrease the protocols performances. In \ifdouble\Cref{fig:ThroughputMP}\else\Cref{fig:ThroughputErrorBecMP}-(b)\fi, the same quantities are compared with regards to numerical results for a $RTT=20$ time slots. In this figure, one can observe that the obtained bounds are relatively tight and close to the capacity. Finally, in \Cref{fig:ThroughpuFactorMP}, we compare the obtained bounds for different window sizes through the parameter $f$. One the one hand, the ratio between the lower and upper bounds, computed as $F_{\eta}^{\text{AC-RLNC}} = 100 \cdot \frac{\eta_{lb}}{\eta_{ub}}$ is observed to tend towards $100\%$ when $f$ increases, showing thus that the bounds get tighter for large windows. On the other hand, the quantity $F_{\text{capacity}}^{\text{AC-RLNC}} = 100 \cdot \frac{\eta_{lb}}{\sum_{i=1}^{P} 1-\epsilon_p}$  compares the lower bound to the capacity, showing thus that the protocol may attain the capacity for large window size\jrevm{\footnote{\label{foot:numerical_evaluation}\jrevm{Both observations come from the numerical evaluation of the theoretical factors.}}}. To provide a comparison, SR-ARQ is at $45\%$ of the capacity for this network in our numerical validation (as one can observe from \Cref{MP_perf} and as represented in \Cref{fig:ThroughpuFactorMP}).}

\begin{figure}
\centering
\ifdouble
\includegraphics[trim=0cm 0.7cm 0cm 0cm,width=1\columnwidth]{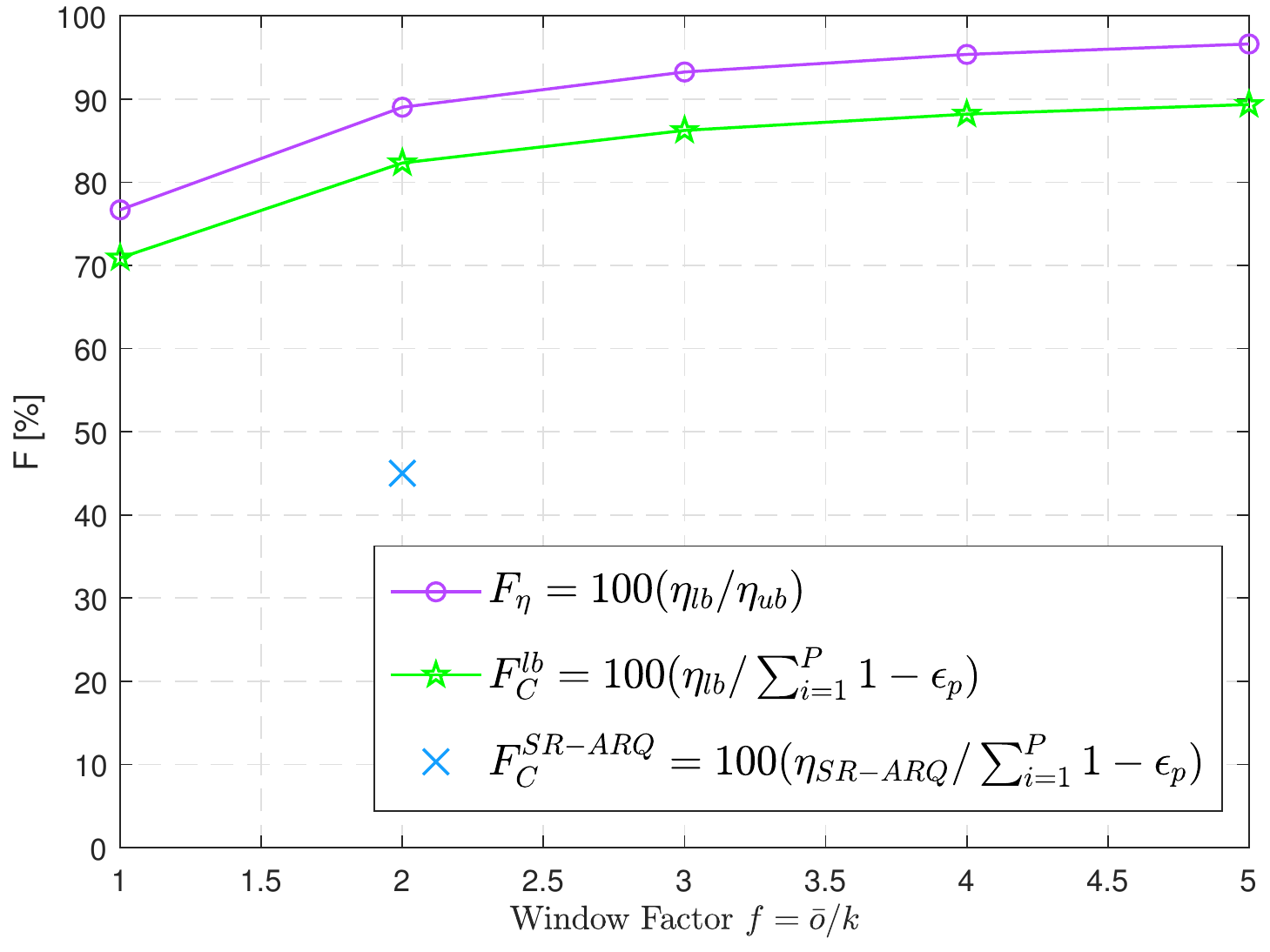}
\else
\includegraphics[trim=0cm 0.7cm 0cm 0cm,width=0.45\columnwidth]{MP_factor_rate2}
\fi
	\caption{\jrev{Throughput factor in MP network with $H=1$ and $P=4$ for BEC's with erasure probability of $\epsilon_{11}=0.2$, $\epsilon_{21}=0.4$, $\epsilon_{31}=0.6$ and $\epsilon_{41}=0.8$. Note that for large maximum size window, the lower bound tends towards the upper bound and the capacity. In this figure, the solid lines correspond to the bounds while the SR-ARQ point comes from the numerical validation of the protocol.}}
	\label{fig:ThroughpuFactorMP}
\end{figure}

~~\\\subsubsection{An Upper Bound for the Mean In-Order Delivery Delay}\label{mean_delay_mp}
In the adaptive coding algorithm suggested in \Cref{MPCodeConstruction}, the number of distinct information packets in $c_t$ is bounded by $\bar{o}$. Hence, for the analysis of the mean in-order delay, we consider the end of a window of $\bar{o}$ new packets, denoted by E$\bar{o}$W.

The retransmission criterion given in \eqref{criterion} reflects, at the sender's best knowledge, the total number of erased packets (i.e. taking into account all the paths together). Hence, the average erasure probability of the MP network $\bar{\epsilon}$ is defined as
\begin{equation}
\label{eq:ebar}
    \bar{\epsilon} = \frac{1}{P} \sum_{p=1}^{P}\epsilon_p .
\end{equation}
In the same manner the maximum rate in \eqref{eq:max_rate} is bounded, we bound the maximum of the mean erasure rate $\bar{\epsilon}_{\max}$ as
\begin{equation*}
    \bar{\epsilon}_{\max} \leq \bar{\epsilon} + \frac{\sqrt{\bar{V}}}{2RTT} =
     \bar{\epsilon} + \frac{\sqrt{2RTT(1-\bar{\epsilon})\bar{\epsilon}}}{2RTT},
\end{equation*}
with $\bar{V}$ denoting the average variance during the period of $2RTT$. In the BEC, $\bar{V}=\sqrt{2RTT(1-\bar{\epsilon})\bar{\epsilon}}$.

Now, using the techniques given in \cite{cohen2019adaptive} for the single path scenario, we upper bound the mean in-order delivery delay of a virtual path with the average erasure probability of the MP network. That is, for this virtual path we define $k_p=k/P$ and $m_e=\bar{o}\bar{\epsilon}=2k_p\bar{\epsilon}$, respectively corresponding to the number of new packets sent over one window on this virtual path and to the effective number of DoFs required by the receiver, and we then apply the procedure described in \cite{cohen2019adaptive} to that virtual path.

\off{\textcolor{blue}{Guillaume, I think that the result below looks strange. Let me try another way to do that after your proof.}

\jrevm{
\begin{theorem}
The mean in-order delivery delay $D_{\text{mean}}$ of AC-RLNC in MP network is upper-bounded as:
\ifdouble
\begin{multline}\label{eq:mean_bound}
D_{mean} \leq \lambda D_{{mean}[no\mbox{ }feedback]}+(1-\lambda )\\
(D_{{mean}[nack\mbox{ }feedback]} +D_{{mean}[ack\mbox{ }feedback]} ),
\end{multline}
\else
\begin{equation}\label{eq:mean_bound}
D_{mean} \leq \lambda D_{{mean}[no\mbox{ }feedback]}+(1-\lambda )\\
(D_{{mean}[nack\mbox{ }feedback]} +D_{{mean}[ack\mbox{ }feedback]} ),
\end{equation}
\fi
with $\lambda$ denoting the fraction of time without feedback compared to the total time of transmission, and with the mean in-order delivery delay in case of the different feedback states\footnote{Note that as the paths are grouped together in one virtual path, the meaning of NACK and ACK is replaced with an equivalent notion of average NACK and average ACK.} being bounded as:
\paragraph{No Feedback}
\ifdouble
\begin{multline}
\label{no_FB_BEC}
 D_{{mean[no\mbox{ }feedback]}} \leq \frac{1}{1-\bar{\epsilon}_{\max}}\\ \Big[ \mathbb{P}_{E\bar{o} W}(m_e+k_p)+(1-\mathbb{P}_{E\bar{o} W}) RTT \Big].
\end{multline}
\else
\begin{equation}
\label{no_FB_BEC}
 D_{{mean[no\mbox{ }feedback]}} \leq \frac{1}{1-\bar{\epsilon}_{\max}}\\ \Big[ \mathbb{P}_{E\bar{o} W}(m_e+k_p)+(1-\mathbb{P}_{E\bar{o} W}) RTT \Big],
\end{equation}
\fi
\paragraph{Average NACK}
\ifdouble
\begin{multline}
\label{NACK_BEC}
D_{{mean}[nack\mbox{ }feedback]} \leq \bar{\epsilon}_{\max}\frac{1}{1-\bar{\epsilon}_{\max}}\\ \Big[ \mathbb{P}_{\Delta<0}\big[ (1-\mathbb{P}_{E \bar{o} W})RTT+\mathbb{P}_{E\bar{o} W}(m_e+k_p)\big] \\  +(1-\mathbb{P}_{\Delta<0})\big[ RTT + \mathbb{P}_{E\bar{o} W}(m_e+k_p) \big]\Big],
\end{multline}
\else
\begin{multline}
\label{NACK_BEC}
D_{{mean}[nack\mbox{ }feedback]} \leq \bar{\epsilon}_{\max}\frac{1}{1-\bar{\epsilon}_{\max}} \Big[ \mathbb{P}_{\Delta<0}\big[ (1-\mathbb{P}_{E \bar{o} W})RTT+\mathbb{P}_{E\bar{o} W}(m_e+k_p)\big] \\  +(1-\mathbb{P}_{\Delta<0})\big[ RTT + \mathbb{P}_{E\bar{o} W}(m_e+k_p) \big]\Big],
\end{multline}
\fi
\paragraph{Average ACK}
\ifdouble
\begin{multline}
\label{ACK_BEC}
D_{{mean}[ack\mbox{ }feedback]} \leq (1-\bar{\epsilon}_{\max})\Big[ \mathbb{P}_{E\bar{o} W}(m_e+k_p)\\  + (\mathbb{P}_{\Delta<0})RTT  + (1-\mathbb{P}_{\Delta<0})RTT\Big],
\end{multline}
\else
\begin{equation}
\label{ACK_BEC}
D_{{mean}[ack\mbox{ }feedback]} \leq (1-\bar{\epsilon}_{\max})\Big[ \mathbb{P}_{E\bar{o} W}(m_e+k_p)\\  + (\mathbb{P}_{\Delta<0})RTT  + (1-\mathbb{P}_{\Delta<0})RTT\Big],
\end{equation}
\fi
and where $\mathbb{P}_{E\bar{o} W}$ and $\mathbb{P}_{\Delta<0}$, denoting for the virtual path respectively the probability that it is $E\bar{o} W$, and that $\Delta <0$,  are computed as
\begin{equation*}
    \mathbb{P}_{E\bar{o} W} = (1-\bar{\epsilon})^{\bar{o}},
\end{equation*}
\begin{equation*}
   \mathbb{P}_{\Delta<0} = \sum_{i=1}^{\lfloor \bar{o}\bar{\epsilon}_{\max}\rfloor} \binom{\bar{o}}{i}\bar{\epsilon}^{i}(1-\bar{\epsilon})^{\bar{o} -i}.
\end{equation*}
\end{theorem}}
\jrevm{\begin{proof}
The proof is based on \cite{cohen2019adaptive}, where the SP case is studied. Applying the results of the SP case on the virtual path, the theorem is readily obtained.
\end{proof}}
\off{~~\\First, the following probabilities are computed:
\\\noindent (1) {\bf Condition for starting a new generation.} The probability that it is E$\bar{o}$W for the virtual path is:
\begin{equation*}
    \mathbb{P}_{E\bar{o} W} = (1-\bar{\epsilon})^{\bar{o}}.
\end{equation*}

\noindent (2) {\bf Condition for retransmission.}  The probability that $\Delta<0$ for a virtual path is:
\begin{equation*}
   \mathbb{P}_{\Delta<0} = \sum_{i=1}^{\lfloor \bar{o}\bar{\epsilon}_{\max}\rfloor} \binom{\bar{o}}{i}\bar{\epsilon}^{i}(1-\bar{\epsilon})^{\bar{o} -i}.
\end{equation*}
Then, upper bounds for the mean in-order delay for BEC under different feedback states are derived\footnote{Note that as the paths are grouped together in one virtual path, the meaning of NACK and ACK is replaced with an equivalent notion of average NACK and average ACK.}.

\paragraph{No Feedback}
Given that there is no feedback in the virtual path, we have
\ifdouble
\begin{multline}
\label{no_FB_BEC}
 D_{{mean[no\mbox{ }feedback]}} \leq \frac{1}{1-\bar{\epsilon}_{\max}}\\ \Big[ \mathbb{P}_{E\bar{o} W}(m_e+k_p)+(1-\mathbb{P}_{E\bar{o} W}) RTT \Big].
\end{multline}
\else
\begin{equation}
\label{no_FB_BEC}
 D_{{mean[no\mbox{ }feedback]}} \leq \frac{1}{1-\bar{\epsilon}_{\max}}\\ \Big[ \mathbb{P}_{E\bar{o} W}(m_e+k_p)+(1-\mathbb{P}_{E\bar{o} W}) RTT \Big].
\end{equation}
\fi

\paragraph{Average NACK}
When the feedback message is an equivalent NACK for the virtual path,
\ifdouble
\begin{multline}
\label{NACK_BEC}
D_{{mean}[nack\mbox{ }feedback]} \leq \bar{\epsilon}_{\max}\frac{1}{1-\bar{\epsilon}_{\max}}\\ \Big[ \mathbb{P}_{\Delta<0}\big[ (1-\mathbb{P}_{E \bar{o} W})RTT+\mathbb{P}_{E\bar{o} W}(m_e+k_p)\big] \\  +(1-\mathbb{P}_{\Delta<0})\big[ RTT + \mathbb{P}_{E\bar{o} W}(m_e+k_p) \big]\Big].
\end{multline}
\else
\begin{multline}
\label{NACK_BEC}
D_{{mean}[nack\mbox{ }feedback]} \leq \bar{\epsilon}_{\max}\frac{1}{1-\bar{\epsilon}_{\max}} \Big[ \mathbb{P}_{\Delta<0}\big[ (1-\mathbb{P}_{E \bar{o} W})RTT+\mathbb{P}_{E\bar{o} W}(m_e+k_p)\big] \\  +(1-\mathbb{P}_{\Delta<0})\big[ RTT + \mathbb{P}_{E\bar{o} W}(m_e+k_p) \big]\Big].
\end{multline}
\fi}}

\ifdouble
\begin{figure}
\centering
\includegraphics[trim=0cm 0.0cm 0cm 0cm,width=1\columnwidth]{mp_mean_delay_4_lower}
	\caption{\jrev{Mean in-order delivery delay upper and lower bounds for a MP network with $H=1$ and $P=4$ for BEC's with erasure probability of $\epsilon_{11}=0.2$, $\epsilon_{21}=0.4$, $\epsilon_{31}=0.6$ and $\epsilon_{41}=0.8$, considering the new MP AC-RLNC protocol and independent SP AC-RLNC protocols of \cite{cohen2019adaptive}.}}
	\label{fig:meam_delay_MP}
\end{figure}
\fi

\off{
\paragraph{Average ACK}
When the feedback message is an equivalent ACK on the virtual path,
\ifdouble
\begin{multline}
\label{ACK_BEC}
D_{{mean}[ack\mbox{ }feedback]} \leq (1-\bar{\epsilon}_{\max})\Big[ \mathbb{P}_{E\bar{o} W}(m_e+k_p)\\  + (\mathbb{P}_{\Delta<0})RTT  + (1-\mathbb{P}_{\Delta<0})RTT\Big].
\end{multline}
\else
\begin{equation}
\label{ACK_BEC}
D_{{mean}[ack\mbox{ }feedback]} \leq (1-\bar{\epsilon}_{\max})\Big[ \mathbb{P}_{E\bar{o} W}(m_e+k_p)\\  + (\mathbb{P}_{\Delta<0})RTT  + (1-\mathbb{P}_{\Delta<0})RTT\Big].
\end{equation}
\fi
Grouping together the bounds \eqref{no_FB_BEC}, \eqref{NACK_BEC} and \eqref{ACK_BEC}, the mean delay is bounded by
\ifdouble
\begin{multline}
D_{mean} \leq \lambda D_{{mean}[no\mbox{ }feedback]}+(1-\lambda )\\
(D_{{mean}[nack\mbox{ }feedback]} +D_{{mean}[ack\mbox{ }feedback]} ),
\end{multline}
\else
\begin{equation}
D_{mean} \leq \lambda D_{{mean}[no\mbox{ }feedback]}+(1-\lambda )\\
(D_{{mean}[nack\mbox{ }feedback]} +D_{{mean}[ack\mbox{ }feedback]} ),
\end{equation}
\fi
with $\lambda$ denoting the fraction of time without feedback compared to the total time of transmission.
}

\jrevm{Letting $D_{mean}^{no\mbox{ }feedback}$, $D_{mean}^{nack\mbox{ }feedback}$ and $D_{mean}^{ack\mbox{ }feedback}$ denote respectively the mean in-order delivery delay in case of different feedback states\footnote{\jrevm{The bounds on the mean in-order delivery delay in case of different feedback states are given in \eqref{no_FB_BEC}, \eqref{NACK_BEC} and \eqref{ACK_BEC}.}}, namely with no feedback, NACK feedback and ACK feedback, the upper bound on the mean in-order delivery delay $D_{\text{mean}}$ in MP network is given by the following theorem.
\begin{theorem}\label{upper_mean_mp}
The mean in-order delivery delay $D_{\text{mean}}$ of AC-RLNC in MP network is upper bounded as:
\ifdouble
\begin{multline}\label{eq:mean_bound}
D_{mean} \leq \lambda D_{mean}^{no\mbox{ }feedback}+(1-\lambda )\\
(D_{mean}^{nack\mbox{ }feedback} +D_{mean}^{ack\mbox{ }feedback} ),
\end{multline}
\else
\begin{equation}\label{eq:mean_bound}
D_{mean} \leq \lambda D_{mean}^{no\mbox{ }feedback}+(1-\lambda )\\
(D_{mean}^{nack\mbox{ }feedback} +D_{mean}^{ack\mbox{ }feedback}),
\end{equation}
\fi
where $\lambda$ is the fraction of time without feedback compared to the total time of transmission and where the mean in-order delivery delay in case of the different feedback states are bounded by \eqref{no_FB_BEC}, \eqref{NACK_BEC} and \eqref{ACK_BEC}.
\end{theorem}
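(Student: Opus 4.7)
The plan is to reduce the multipath analysis to a single-path analysis applied to the virtual path introduced just before the theorem statement, and then to obtain the stated bound through a law-of-total-expectation argument over the feedback states. First, I would formally justify the virtual-path abstraction: since the retransmission criterion \eqref{criterion} aggregates DoF evidence across all $P$ paths, the relevant dynamics depend on the aggregated erasure behavior $\bar{\epsilon}$ defined in \eqref{eq:ebar}. The deviation of the empirical mean erasure rate over an $RTT$ window is controlled by the same variance argument used to bound $r_p(t)$ in \eqref{eq:max_rate}, giving the bound $\bar{\epsilon}_{\max}$ already stated in the excerpt. This step allows me to treat the MP system, for the purpose of upper-bounding $D_{\text{mean}}$, as a single path with effective per-window load $k_p = k/P$ new packets and $m_e = 2k_p\bar{\epsilon}$ effective DoFs required.

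Next, I would partition the timeline of the transmission into three disjoint phases according to the feedback state observed by the sender: (i) a fraction $\lambda$ of slots in which no feedback is yet available (typical of the start of transmission and of any burst where acknowledgments have not yet returned through the $RTT$ pipe), (ii) slots occurring under a NACK-type aggregated feedback, and (iii) slots occurring under an ACK-type aggregated feedback. By the law of total expectation, $D_{\text{mean}}$ is a convex combination of the conditional mean delays in each phase, with weights $\lambda$ and $(1-\lambda)$ for the ``no feedback'' and ``feedback available'' cases respectively. Within the feedback-available case, the NACK and ACK sub-cases do not need to be re-weighted in the statement because the bounds in \eqref{NACK_BEC} and \eqref{ACK_BEC} already carry the factors $\bar{\epsilon}_{\max}$ and $(1-\bar{\epsilon}_{\max})$ corresponding to the probability of each sub-case on the virtual path; hence their sum upper-bounds the conditional mean delay given feedback is available.

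For each of the three conditional quantities, I would directly invoke the single-path bounds \eqref{no_FB_BEC}, \eqref{NACK_BEC}, and \eqref{ACK_BEC}. These are obtained by mirroring the derivation of \cite{cohen2019adaptive} for the virtual path: one decomposes the delay contribution of a generation into (a) the time to finish transmitting the window of $\bar{o}$ packets, weighted by $\mathbb{P}_{E\bar{o} W}$, and (b) the additional $RTT$ waiting time when the window ends without enough DoFs, weighted by $(1-\mathbb{P}_{E\bar{o} W})$ or $(1-\mathbb{P}_{\Delta<0})$ according to whether FB-FEC is triggered. Multiplying these contributions by $\tfrac{1}{1-\bar{\epsilon}_{\max}}$ reflects the expected inter-arrival time on the virtual path. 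Substitution yields exactly the three displayed bounds.

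The main obstacle I expect is justifying that the reduction to a single virtual path actually produces an \emph{upper} bound rather than just an approximation. In particular, one must argue that (i) replacing the per-path erasure probabilities $\epsilon_p$ by the average $\bar{\epsilon}$ in the decoding condition does not under-count the number of retransmissions the AC-RLNC scheduler would issue (this is where the BF allocation in \Cref{PDWF} plays a role, since it allocates retransmissions proportionally to the rate deficits), and (ii) replacing $\bar{\epsilon}$ by $\bar{\epsilon}_{\max}$ in the prefactor $\tfrac{1}{1-\bar{\epsilon}_{\max}}$ dominates the realized inter-arrival times with high probability, using the concentration bound on the empirical erasure rate over $2\,RTT$ slots. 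Once (i) and (ii) are established, the three single-path bounds transfer verbatim and combining them through the law of total expectation yields \eqref{eq:mean_bound}.
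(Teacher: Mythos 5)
Your proposal follows essentially the same route as the paper's proof: reduce the MP system to a single virtual path with effective parameters $k_p$, $m_e$, $\bar{\epsilon}$, $\bar{\epsilon}_{\max}$, invoke the single-path bounds of \cite{cohen2019adaptive} conditionally on each of the three feedback states to obtain \eqref{no_FB_BEC}--\eqref{ACK_BEC}, and combine them with weights $\lambda$ and $1-\lambda$. You correctly observe (which the paper leaves implicit) that the NACK and ACK contributions need not be re-weighted in the final combination because the factors $\bar{\epsilon}_{\max}$ and $(1-\bar{\epsilon}_{\max})$ are already absorbed in \eqref{NACK_BEC} and \eqref{ACK_BEC}; the obstacles you flag at the end are real but are also glossed over in the paper's own proof, which relies on the techniques of \cite{cohen2019adaptive} without re-deriving them for the virtual path.
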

\begin{proof}
First, the following probabilities are computed:
\\\noindent (1) {\bf Condition for starting a new generation.} The probability that it is E$\bar{o}$W for the virtual path is:
\begin{equation*}
    \mathbb{P}_{E\bar{o} W} = (1-\bar{\epsilon})^{\bar{o}}.
\end{equation*}
\noindent (2) {\bf Condition for retransmission.}  The probability that $\Delta<0$ for a virtual path is:
\begin{equation*}
   \mathbb{P}_{\Delta<0} = \sum_{i=1}^{\lfloor \bar{o}\bar{\epsilon}_{\max}\rfloor} \binom{\bar{o}}{i}\bar{\epsilon}^{i}(1-\bar{\epsilon})^{\bar{o} -i}.
\end{equation*}
Then, upper bounds for the mean in-order delay for BEC under different feedback states are derived\footnote{Note that as the paths are grouped together in one virtual path, the meaning of NACK and ACK is replaced with an equivalent notion of average NACK and average ACK.}.
\paragraph{No Feedback}
Given that there is no feedback in the virtual path, we have
\ifdouble
\begin{multline}
\label{no_FB_BEC}
 D_{mean}^{no\mbox{ }feedback} \leq \frac{1}{1-\bar{\epsilon}_{\max}}\\ \Big[ \mathbb{P}_{E\bar{o} W}(m_e+k_p)+(1-\mathbb{P}_{E\bar{o} W}) RTT \Big].
\end{multline}
\else
\begin{equation}
\label{no_FB_BEC}
 D_{mean}^{no\mbox{ }feedback} \leq \frac{1}{1-\bar{\epsilon}_{\max}}\\ \Big[ \mathbb{P}_{E\bar{o} W}(m_e+k_p)+(1-\mathbb{P}_{E\bar{o} W}) RTT \Big].
\end{equation}
\fi
\paragraph{Average NACK}
When the feedback message is an equivalent NACK for the virtual path,
\ifdouble
\begin{multline}
\label{NACK_BEC}
D_{mean}^{nack\mbox{ }feedback} \leq \bar{\epsilon}_{\max}\frac{1}{1-\bar{\epsilon}_{\max}}\\ \Big[ \mathbb{P}_{\Delta<0}\big[ (1-\mathbb{P}_{E \bar{o} W})RTT+\mathbb{P}_{E\bar{o} W}(m_e+k_p)\big] \\  +(1-\mathbb{P}_{\Delta<0})\big[ RTT + \mathbb{P}_{E\bar{o} W}(m_e+k_p) \big]\Big].
\end{multline}
\else
\begin{multline}
\label{NACK_BEC}
D_{mean}^{nack\mbox{ }feedback} \leq \bar{\epsilon}_{\max}\frac{1}{1-\bar{\epsilon}_{\max}} \Big[ \mathbb{P}_{\Delta<0}\big[ (1-\mathbb{P}_{E \bar{o} W})RTT+\mathbb{P}_{E\bar{o} W}(m_e+k_p)\big] \\  +(1-\mathbb{P}_{\Delta<0})\big[ RTT + \mathbb{P}_{E\bar{o} W}(m_e+k_p) \big]\Big].
\end{multline}
\fi
\paragraph{Average ACK}
When the feedback message is an equivalent ACK on the virtual path,
\ifdouble
\begin{multline}
\label{ACK_BEC}
D_{mean}^{ack\mbox{ }feedback} \leq (1-\bar{\epsilon}_{\max})\Big[ \mathbb{P}_{E\bar{o} W}(m_e+k_p)\\  + (\mathbb{P}_{\Delta<0})RTT  + (1-\mathbb{P}_{\Delta<0})RTT\Big].
\end{multline}
\else
\begin{equation}
\label{ACK_BEC}
D_{mean}^{ack\mbox{ }feedback} \leq (1-\bar{\epsilon}_{\max})\Big[ \mathbb{P}_{E\bar{o} W}(m_e+k_p)\\  + (\mathbb{P}_{\Delta<0})RTT  + (1-\mathbb{P}_{\Delta<0})RTT\Big].
\end{equation}
\fi
Grouping together the bounds \eqref{no_FB_BEC}, \eqref{NACK_BEC} and \eqref{ACK_BEC}, we achieves the upper bound provided in Theorem \ref{upper_mean_mp}, that is the mean delay is bounded
\ifdouble
\begin{multline}
D_{mean} \leq \lambda D_{mean}^{no\mbox{ }feedback}+(1-\lambda )\\
(D_{mean}^{nack\mbox{ }feedback} +D_{mean}^{ack\mbox{ }feedback} ),
\end{multline}
\else
\begin{equation}
D_{mean} \leq \lambda D_{mean}^{no\mbox{ }feedback}+(1-\lambda )\\
(D_{mean}^{nack\mbox{ }feedback} +D_{mean}^{ack\mbox{ }feedback}),
\end{equation}
\fi
with $\lambda$ denoting the fraction of time without feedback compared to the total time of transmission.
\end{proof}
}

~~\\In the following, the upper bound \eqref{eq:mean_bound} is compared to the MP AC-RLNC simulations in \ifdouble\Cref{fig:meam_delay_MP}\else\Cref{fig:ThroughputErrorBecMP}-(c)\fi, where the bound is shown in green and  the simulation results of \Cref{MPsimulation} in blue. We can note that the  analytical results are in agreement with the simulation of the MP AC-RLNC solution. To gain understanding on the advantage of the MP solution with regards to the SP solution, the MP AC-RLNC performances are also compared with independent SP AC-RLNC protocols suggested in \cite{cohen2019adaptive} on each of the paths, as represented on the bottom of \ifdouble\Cref{fig:dif_sp_mp}\else\Cref{fig:encoding_pros}-(b)\fi. The mean in-order delivery bound (in pink) and the simulation result (in orange) highlight the advantage of using the MP solution suggested in this paper compared to independent SP solutions proposed in \cite{cohen2019adaptive}.

\begin{figure}
    \centering
    \includegraphics[trim=0cm 0.0cm 0cm 0cm,width=1\columnwidth]{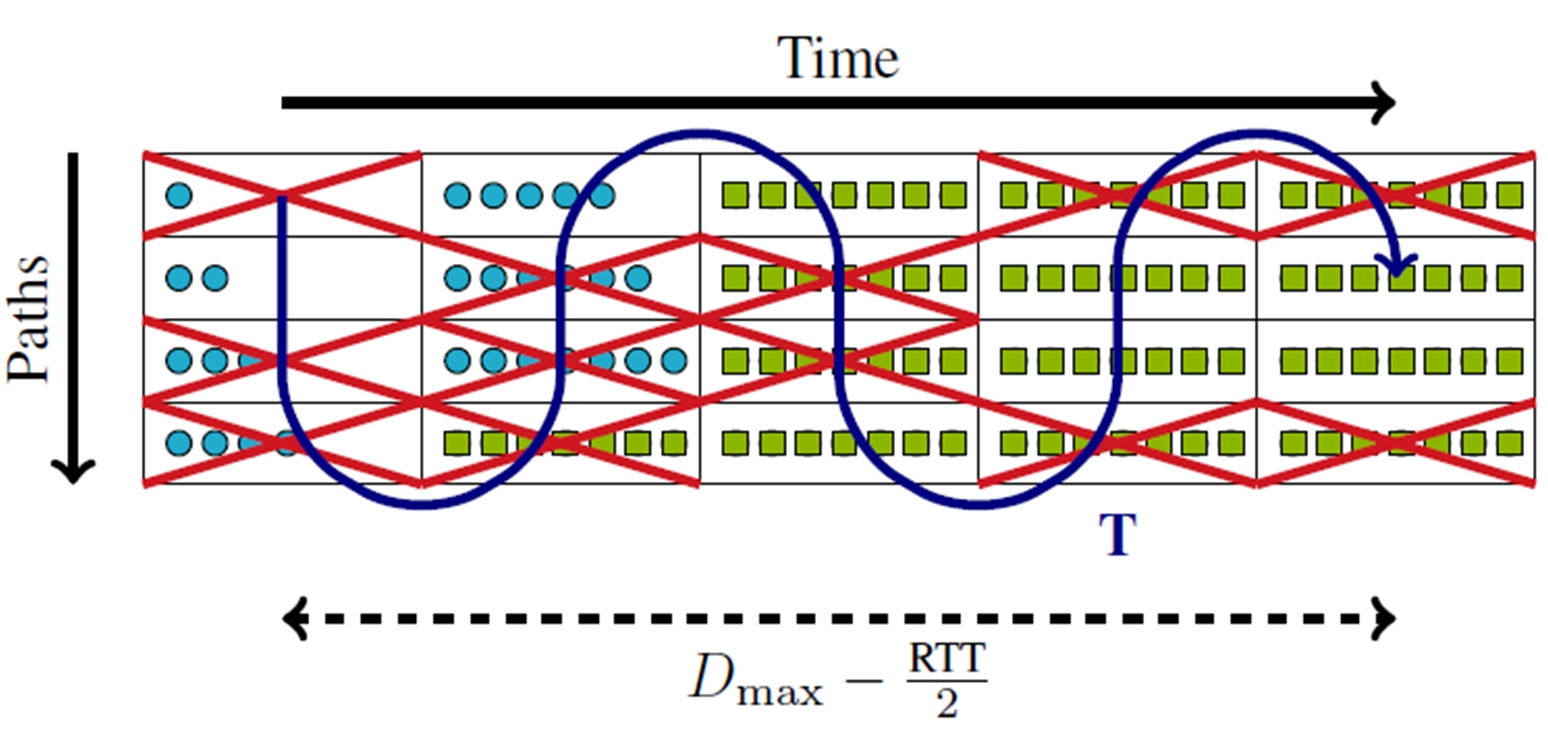}
\caption{Maximum delay transmission, with $\bar{o}=7$ : for each time slot and path, the circles or squares represent the raw packets of information coded together in the RLNC.  The blue circles highlight the new transmissions while the green squares correspond to repetitions due to the \textit{size-limit} mechanism. The red crosses show which RLNC's are erased. One can check that at the last time slot, 7 RLNC's have been collected, hence allowing the decoding. The interval of time before decoding corresponds to the in-order delivery delay up to a $\nicefrac{\text{RTT}}{2}$ term, as the transmission time has to be taken into account. In this example, the number of needed transmissions is $T = 18$, and $D_{\max} = 5 + \nicefrac{\text{RTT}}{2}$ . }\label{fig:max_delay}
\end{figure}
~~\subsubsection{An Upper Bound for the Maximum In-Order Delivery Delay}
Contrarily to the two previous analyses, the maximum in-order delivery delay is bounded in a different way than in \cite{cohen2019adaptive}, as the SP bounds cannot be easily generalized to the MP network.
\\Considering the transmission of a new generation of raw packets, the decoding of the first one can occur at 4 different moments (ranked from the earliest to the latest): (1) after the first transmission; (2) after a FEC transmission; (3) after a FB-FEC transmission; (4) after a transmission due to the \textit{size-limit mechanism}. In a worst case approach, the FEC and FB-FEC mechanisms are neglected, and the transmissions hence occur as represented in \Cref{fig:max_delay}: the $\bar{o}$ first transmissions each contain a new packet of information (blue circles).  Once the size limit is reached, the same RLNC is sent till successful decoding (green squares), after $T$ transmissions.

\jrevm{\begin{theorem}
The maximum in-order delivery delay $D_{\max}$ of AC-RLNC in MP network is upper-bounded with error probability $P_e$ as:
\begin{equation}
 D_{\max}\leq \left \lceil \frac{RTT}{2} \right \rceil +\left \lceil \frac{T_{\max}}{P}  \right \rceil,
 \label{eq:Dmaxbound}
\end{equation}
with the maximum number of transmissions being defined as
\ifdouble
\begin{equation}
\begin{split}
    T_{\max} \triangleq \left \lceil 1+\frac{\bar{o}-1}{1-\bar{\epsilon}}\right.\hspace{4.5cm}\\\hspace{1cm}\left.+\frac{\alpha}{4(1-\bar{\epsilon})^2} + \frac{\sqrt{\alpha\left(\alpha+4(1-\bar{\epsilon})(\bar{o}-1)\right) }}  {2} \right \rceil,
    \label{eq:Tmax1}
\end{split}
\end{equation}
\else
\begin{equation}
    T_{\max} = \left \lceil 1+\frac{\bar{o}-1}{1-\bar{\epsilon}}+\frac{\alpha}{4(1-\bar{\epsilon})^2} + \frac{\sqrt{\alpha\left(\alpha+4(1-\bar{\epsilon})(\bar{o}-1)\right) }}  {2} \right \rceil,
    \label{eq:Tmax1}
\end{equation}
\fi
where $\alpha \triangleq  \log\left(\frac{\epsilon_{\max}}{P_e}\right)$ and $\epsilon_{\max} \triangleq \max_{p=1...P}\epsilon_p  $.
\end{theorem}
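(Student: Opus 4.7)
I would structure the proof in three steps: (i) identify the worst case realizing $D_{\max}$, (ii) decompose the delay into a transmission and a propagation term, and (iii) bound the maximum number of transmissions $T_{\max}$ needed to decode the first generation with probability at least $1-P_e$. For step (i), following the discussion preceding the theorem and \Cref{fig:max_delay}, I would argue that the worst case is obtained by ignoring both the a priori FEC and the a posteriori FB-FEC mechanisms: the sender transmits $\bar{o}$ distinct new RLNC's, and once the size limit is reached, keeps repeating the same coded combination until successful decoding. By the large-field RLNC property recalled in \Cref{rel}, the receiver decodes the entire generation as soon as $\bar{o}$ coded packets have been received.

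For step (ii), I would decompose $D_{\max}$ directly from \Cref{dife:delay}. Since $P$ packets are transmitted per time slot, the number of slots required to perform $T_{\max}$ transmissions is $\lceil T_{\max}/P\rceil$; a mid-slot transmission still occupies the whole slot, which justifies the ceiling. A one-way contribution $\lceil RTT/2\rceil$ must be added to account for the transmission plus propagation of the last useful packet from the sender to the receiver, while the reverse acknowledgment delay is excluded by the definition of the in-order delivery delay. Combining these two contributions yields the structural form of \eqref{eq:Dmaxbound}.

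For step (iii), I would determine the smallest $T_{\max}$ such that
\begin{equation*}
\textstyle \mathbb{P}\!\left(\sum_{i=1}^{T_{\max}} X_i < \bar{o}\right) \le P_e,
\end{equation*}
where $X_i\in\{0,1\}$ indicates that the $i$-th transmission is received, with per-path success probabilities lower bounded by $1-\epsilon_{\max}$ and average success rate $1-\bar{\epsilon}$ over the $P$ paths. I would apply a Chernoff--Hoeffding type concentration inequality to the heterogeneous sum, set the resulting exponential tail equal to $P_e$, and solve the ensuing quadratic inequality in $T_{\max}$; its positive root, rounded up, should match \eqref{eq:Tmax1}. The two distinct erasure parameters are already visible in the target formula: $\bar{\epsilon}$ enters through the mean $\mu=(1-\bar{\epsilon})T$ of the sum, while $\epsilon_{\max}$ enters $\alpha=\log(\epsilon_{\max}/P_e)$ through the worst-case failure probability of an individual reception event. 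The main obstacle will be selecting the exact concentration inequality whose inversion reproduces the coefficients $1/(4r^2)$ and the discriminant $\alpha(\alpha+4r(\bar{o}-1))$ appearing in \eqref{eq:Tmax1}, rather than some loose variant; handling the heterogeneity of the $P$ Bernoulli variables (as opposed to the i.i.d.\ case of \cite{cohen2019adaptive}) is the step requiring the most care.
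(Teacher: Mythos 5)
Your steps (i) and (ii) match the paper exactly: the worst case is the size-limit scenario of \Cref{fig:max_delay} (FEC and FB-FEC neglected), and the delay decomposes as one-way propagation $\lceil RTT/2\rceil$ plus $\lceil T_{\max}/P\rceil$ transmission slots. Your step (iii) has the right broad outline (concentration inequality on a Poisson Binomial sum, inverted for $T_{\max}$), and you correctly flag that Hoeffding handles the heterogeneous Bernoulli case (it only needs independence and bounded support, not i.i.d.). You also flag that the exact constants and the appearance of both $\bar{\epsilon}$ and $\epsilon_{\max}$ are unclear — and that is precisely where the gap is.

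The missing idea is a factorization before the concentration step. Your event $\sum_{i=1}^{T_{\max}}X_i<\bar{o}$ treats the $T_{\max}$ trials symmetrically, and a Hoeffding bound on it would yield $\alpha'=\log(1/P_e)$, not $\alpha=\log(\epsilon_{\max}/P_e)$. The paper instead observes that, in the worst-case scenario, the first transmission is a coded packet containing only the first information packet, so if the first transmission is received the first packet decodes immediately and $T=1$. Hence $\{T>T_{\max}\}$ requires (a) the first transmission erased, and (b) at most $\bar{o}-1$ successes among transmissions $2,\dots,T_{\max}$; by independence,
\begin{equation*}
\mathbb{P}[T>T_{\max}]\;\le\;\mathbb{P}[E_1]\,\mathbb{P}\!\left[\sum_{i=2}^{T_{\max}}E_i\ge T_{\max}-\bar{o}\right],
\end{equation*}
with $\mathbb{P}[E_1]\le\epsilon_{\max}$. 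Only the second factor is bounded via Hoeffding applied to the empirical erasure rate $S_e$ over the remaining $T_{\max}-1$ trials, centered at $\bar{\epsilon}$. Requiring $\epsilon_{\max}\cdot\exp\bigl(-2(T_{\max}-1)(\tfrac{T_{\max}-\bar{o}}{T_{\max}-1}-\bar{\epsilon})^2\bigr)\le P_e$ is exactly what makes $\epsilon_{\max}$ enter through $\alpha$ and produces, after solving the resulting quadratic in $T_{\max}$, the coefficients $\alpha/(4(1-\bar{\epsilon})^2)$ and the discriminant $\alpha(\alpha+4(1-\bar{\epsilon})(\bar{o}-1))$ in \eqref{eq:Tmax1}. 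Without separating out the first trial, you cannot recover the $\epsilon_{\max}/P_e$ ratio, so your "worst-case failure probability of an individual reception event" intuition is pointing at the right object but needs to be realized as this explicit conditioning, not folded into the concentration bound.
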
}
\jrevm{\begin{proof}
Given an error probability $P_e$\off{(typically $10^{-3}$)}, one is interested in determining the number of transmissions $T_{\max}$ needed to decode the first packet with probability $1-P_e$:
\ifdouble
 \begin{equation*}
     T_{\max}\quad s.t.\quad  \mathbb{P}\left[T>T_{\max}\right]\leq P_e.
 \end{equation*}
\else
 $T_{\max}\quad s.t.\quad  \mathbb{P}\left[T>T_{\max}\right]\leq P_e.$
\fi
Decoding of the first packet is not possible after $T_{\max}$ transmissions if two conditions are fulfilled: (1) the first transmission is erased; (2) Among the $T_{\max}-1$ remaining transmissions, at most $\bar{o}-1$ successful transmissions occur (or equivalently, at least $T_{\max}-\bar{o}$ erasures occur). Indeed, once the first packet is erased, no decoding is possible before reaching the size limit, as the number of received RLNC's will always be at least one step behind the number of raw packets coded in the RLNC. Hence, the $\bar{o}$ packets will be decoded jointly. Letting  $E_i$ be the random variable equal to $1$ if the $i$th transmission is erased and $0$ otherwise, the probability of no-decoding can be bounded as\footnote{The inequality comes from the fact that the FEC and FB-FEC mechanisms are neglected.},
\begin{equation}
    \mathbb{P}\left[T>T_{\max}\right] \leq  \mathbb{P}\left[E_1\right]\mathbb{P}\left[\sum_{i=2}^{T_{\max}}E_i\geq T_{\max} - \bar{o}\right].
    \label{eq:tmax}
\end{equation}
Defining
\[S_e \triangleq \frac{1}{T_{\max}-1} \sum_{i=2}^{T_{\max}}E_i,\]
$T_{\max}$ can be identified through the cumulative distribution function (CDF) of that random variable, whose expectation is $\bar{\epsilon}$, defined in \eqref{eq:ebar}. However, since the erasure probability of each path is different, $S_e$ is the average of independent, but not identically distributed, random variables. It hence follows a Poisson Binomial distribution \cite{PoissonBinomial}, whose CDF becomes quickly difficult to compute in an efficient way. Hence, to obtain a closed-form expression of $T_{\max}$, the Hoeffding inequality \cite{Hoeffding} is used to obtain the following,
\ifdouble
\begin{multline*}
\mathbb{P}\left[S_e\geq \frac{T_{\max} - \bar{o}}{T_{\max}-1}\right]
    \leq \\ \exp{\left(-2\left(T_{\max}-1\right)\left(\frac{T_{\max}-\bar{o}}{T_{\max}-1} - \bar{\epsilon}\right)^2\right)}.
\label{eq:hoeffding}
\end{multline*}
\else
\begin{equation*}
\mathbb{P}\left[S_e\geq \frac{T_{\max} - \bar{o}}{T_{\max}-1}\right]
    \leq \\ \exp{\left(-2\left(T_{\max}-1\right)\left(\frac{T_{\max}-\bar{o}}{T_{\max}-1} - \bar{\epsilon}\right)^2\right)}.
\label{eq:hoeffding}
\end{equation*}
\fi
Now, since $\mathbb{P}\left[E_1\right] \leq  \epsilon_{\max}$, requiring the upper bound of \eqref{eq:tmax} to be smaller than $P_e$, $T_{\max}$ is such that
\begin{equation*}
    \left(T_{\max}-1\right)\left(\frac{T_{\max}-\bar{o}}{T_{\max}-1} - \bar{\epsilon}\right)^2\geq \frac{1}{2}\log\left(\frac{\epsilon_{\max}}{P_e}\right).
\end{equation*}
Using the definition of $\alpha$ and solving for $T_{\max}$, \eqref{eq:Tmax1} is obtained. Directly from \Cref{fig:max_delay}, the maximum delay is thus bounded, with a probability $P_e$,  as given in \eqref{eq:Dmaxbound}, with the $RTT$ factor coming from the transmission time.
\end{proof}}

\off{Given an error probability $P_e$\off{(typically $10^{-3}$)}, one is interested in determining the number of transmissions $T_{\max}$ needed to decode the first packet with probability $1-P_e$:
\ifdouble
 \begin{equation*}
     T_{\max}\quad s.t.\quad  \mathbb{P}\left[T>T_{\max}\right]\leq P_e.
 \end{equation*}
\else
 $T_{\max}\quad s.t.\quad  \mathbb{P}\left[T>T_{\max}\right]\leq P_e.$
\fi
Decoding of the first packet is not possible after $T_{\max}$ transmissions if two conditions are fulfilled: (1) the first transmission is erased; (2) Among the $T_{\max}-1$ remaining transmissions, at most $\bar{o}-1$ successful transmissions occur (or equivalently, at least $T_{\max}-\bar{o}$ erasures occur). Indeed, once the first packet is erased, no decoding is possible before reaching the size limit, as the number of received RLNC's will always be at least one step behind the number of raw packets coded in the RLNC. Hence, the $\bar{o}$ packets will be decoded jointly. Letting  $E_i$ be the random variable equal to $1$ if the $i$th transmission is erased and $0$ otherwise, the probability of no-decoding can be bounded as\footnote{The inequality comes from the fact that the FEC and FB-FEC mechanisms are neglected.},
\begin{equation}
    \mathbb{P}\left[T>T_{\max}\right] \leq  \mathbb{P}\left[E_1\right]\mathbb{P}\left[\sum_{i=2}^{T_{\max}}E_i\geq T_{\max} - \bar{o}\right].
    \label{eq:tmax}
\end{equation}
Defining
\[S_e \triangleq \frac{1}{T_{\max}-1} \sum_{i=2}^{T_{\max}}E_i,\]
$T_{\max}$ can be identified through the cumulative distribution function (CDF) of that random variable, whose expectation is $\bar{\epsilon}$, defined in \eqref{eq:ebar}. However, since the erasure probability of each path is different, $S_e$ is the average of independent, but not identically distributed, random variables. It hence follows a Poisson Binomial distribution \cite{PoissonBinomial}, whose CDF becomes quickly difficult to compute in an efficient way. Hence, to obtain a closed-form expression of $T_{\max}$, the Hoeffding inequality \cite{Hoeffding} is used to obtain the following,
\ifdouble
\begin{multline*}
\mathbb{P}\left[S_e\geq \frac{T_{\max} - \bar{o}}{T_{\max}-1}\right]
    \leq \\ \exp{\left(-2\left(T_{\max}-1\right)\left(\frac{T_{\max}-\bar{o}}{T_{\max}-1} - \bar{\epsilon}\right)^2\right)}.
\label{eq:hoeffding}
\end{multline*}
\else
\begin{equation*}
\mathbb{P}\left[S_e\geq \frac{T_{\max} - \bar{o}}{T_{\max}-1}\right]
    \leq \\ \exp{\left(-2\left(T_{\max}-1\right)\left(\frac{T_{\max}-\bar{o}}{T_{\max}-1} - \bar{\epsilon}\right)^2\right)}.
\label{eq:hoeffding}
\end{equation*}
\fi
Now, since $\mathbb{P}\left[E_1\right] \leq \max_{p=1...P}\epsilon_p \triangleq \epsilon_{\max}$, requiring the upper bound of \eqref{eq:tmax} to be smaller than $P_e$, $T_{\max}$ is such that
\begin{equation*}
    \left(T_{\max}-1\right)\left(\frac{T_{\max}-\bar{o}}{T_{\max}-1} - \bar{\epsilon}\right)^2\geq \frac{1}{2}\log\left(\frac{\epsilon_{\max}}{P_e}\right).
\end{equation*}
\jrevm{Letting $ \log\left(\frac{\epsilon_{\max}}{P_e}\right)\triangleq\alpha$,} one finally obtains
\ifdouble
\begin{equation}
\begin{split}
    T_{\max} = \left \lceil 1+\frac{\bar{o}-1}{1-\bar{\epsilon}}\right.\hspace{4.5cm}\\\hspace{1cm}\left.+\frac{\alpha}{4(1-\bar{\epsilon})^2} + \frac{\sqrt{\alpha\left(\alpha+4(1-\bar{\epsilon})(\bar{o}-1)\right) }}  {2} \right \rceil.
    \label{eq:Tmax1}
\end{split}
\end{equation}
\else
\begin{equation}
    T_{\max} = \left \lceil 1+\frac{\bar{o}-1}{1-\bar{\epsilon}}+\frac{\alpha}{4(1-\bar{\epsilon})^2} + \frac{\sqrt{\alpha\left(\alpha+4(1-\bar{\epsilon})(\bar{o}-1)\right) }}  {2} \right \rceil.
    \label{eq:Tmax1}
\end{equation}
\fi
Directly from \Cref{fig:max_delay}, the maximum delay is thus bounded, with a probability $P_e$,  as
\begin{equation}
    D_{\max}\leq \left \lceil \frac{RTT}{2} \right \rceil +\left \lceil \frac{T_{\max}}{P}  \right \rceil,
    \label{eq:Dmaxbound}
\end{equation}
with the $RTT$ factor coming from the transmission time.} From (\ref{eq:Tmax1}) and (\ref{eq:Dmaxbound}), one can see that the benefit lies in the average of the erasure probabilities, as $T_{\max}$ grows rapidly when $\bar{\epsilon}$ is close to 1. However, as $\bar{\epsilon}$ is the average erasure probability, the worst paths will be balanced by the better ones, hence pushing $\bar{\epsilon}$ away from $1$, and leading to smaller delays.

\begin{figure}
\centering
\ifdouble
\includegraphics[trim=0cm 0.4cm 0cm 0cm,width=1\columnwidth]{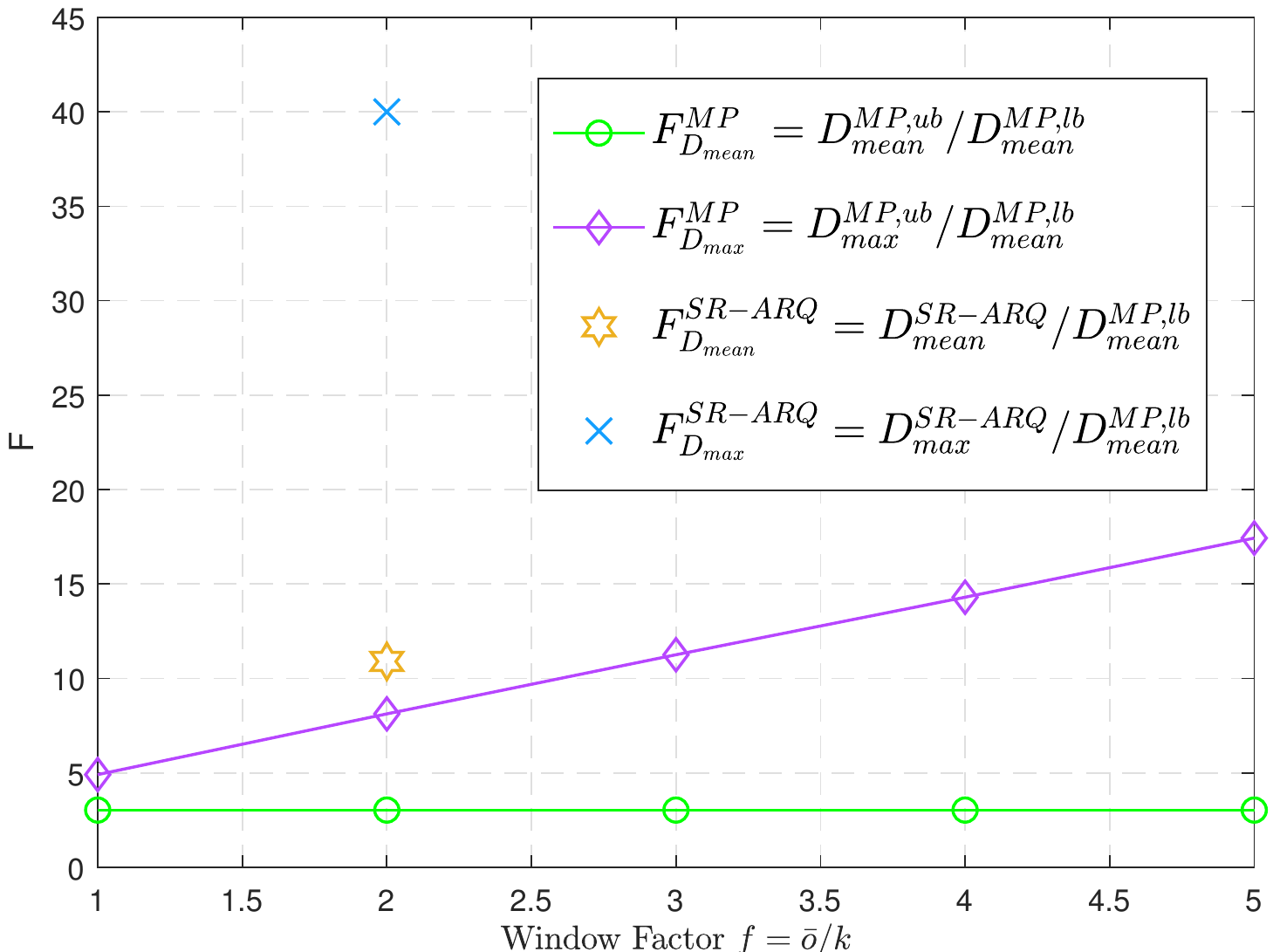}
\else
\includegraphics[trim=0cm 0.5cm 0cm 0cm,width=0.45\columnwidth]{MP_delay_factor2}
\fi
	\caption{\jrev{Delay factor in MP network with $H=1$ and $P=4$ for BEC channels with erasure probability of $\epsilon_{11}=0.2$, $\epsilon_{21}=0.4$, $\epsilon_{31}=0.6$ and $\epsilon_{41}=0.8$. Note that the mean delay is not influenced by the window size factor while the maximum delay increases linearly with it. In this figure, the solid lines correspond to the bounds while the SR-ARQ points come from the numerical validation of the protocol. }}
	\label{fig:DelayFactorMP}
\end{figure}

\ifdouble
\begin{figure*}[!ht]
\centering
\includegraphics[trim=0cm 0.0cm 0cm 0cm,width=1\textwidth]{RTT20_MP_paper_v3}
    \caption{Performances of multipath protocol with RTT$=20$ [slots], $th=0$ and $\bar{o}=2k$, on 4 paths with $\epsilon_3 = 0.2$ and $\epsilon_4 = 0.8$, averaged on $150$ iterations. The vertical bars correspond to the standard deviation of the simulated results.}
    \label{MP_perf}
\end{figure*}
\fi

\subsubsection{\jrev{A Lower Bound on the Mean and Maximum in-Order Delivery Delay}}\label{mean_delay_mp_lb}
\jrev{To decrease the delay as much as possible, one could send the same packet on all the paths at each time slot until an acknowledgment is received. In that case, that probability that the packet is not received at time $t$ can be written as
\begin{equation*}
    P[\text{Not received}] = P[\text{Erased on all paths}] = \prod_{p=1}^P\epsilon_p,
\end{equation*}
which we define as $\epsilon_{\text{prod}}$. As this is clearly the best anyone can do, the in-order delay $D$ is thus bounded as
\begin{equation}
    D \geq \frac{\text{RTT}}{2}+\frac{1}{1-\epsilon_{\text{prod}}},
    \label{eq:bound_prod},
\end{equation}
where the delay suffers from the propagation delay and the expected number of transmissions needed to receive the packet.\\
Note that if the paths are of relatively good quality, $\epsilon_{\text{prod}}$ will be very close to $0$, the bound reducing hence to half of the RTT. In the following, we make the choice not to compare the in-order delay results to this bound, but instead to assess them in light of the optimal genie-aided lower delay of the system when different packets are sent on each of the paths.}\jrevm{
\begin{theorem}
The in-order delivery delay $D$ of AC-RLNC in MP network is lower-bounded as
\begin{equation*}
    D^{\text{genie-aided}} \geq \frac{\text{RTT}}{2}+\frac{1}{1-\bar{\epsilon}},
\end{equation*}
if different packets are sent on each of the paths.
\end{theorem}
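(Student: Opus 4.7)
The plan is to decompose the in-order delivery delay into two additive non-negative components: the propagation delay, which is at least $\text{RTT}/2$ since the packet must physically travel from sender to receiver, and the transmission delay, which counts the number of time slots until successful decoding. Because the scenario is genie-aided, the sender is assumed to know the erasure pattern in advance and can retransmit immediately without waiting out the $\text{RTT}$ feedback delay, so each attempt contributes exactly one time slot to the transmission delay. This decomposition is analogous to the one used in the preceding bound \eqref{eq:bound_prod} for the ``same packet on all paths'' case; the difference here will come from the per-packet success probability.

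Under the constraint that $P$ distinct packets are simultaneously transmitted on the $P$ paths at each slot, a packet currently assigned to path $p$ succeeds with probability $1-\epsilon_p$. By the standard geometric-distribution argument, the expected number of attempts before success is then $1/(1-\epsilon_p)$. Averaging over the $P$ paths, which is natural since over a long horizon a generic packet is assigned to each path with equal marginal frequency under any non-discriminating policy, gives an average expected transmission delay of $\frac{1}{P}\sum_{p=1}^{P}\frac{1}{1-\epsilon_p}$. Applying Jensen's inequality to the convex function $f(x) = 1/(1-x)$ on $[0,1)$ then yields
\[
\frac{1}{P}\sum_{p=1}^{P}\frac{1}{1-\epsilon_p} \;\geq\; \frac{1}{1-\bar{\epsilon}},
\]
and adding the propagation term $\text{RTT}/2$ recovers the claimed lower bound.

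The hard part will be to justify that even a genie-aided scheduler, knowing every future erasure, cannot exploit this knowledge to go below the Jensen bound while still obeying the ``distinct packets per path'' constraint. I expect the cleanest way to close this gap is an exchange argument: any reassignment of a packet from a path with smaller $\epsilon_p$ to a path with larger $\epsilon_p$ must, by the distinct-packet constraint, be compensated by the reverse reassignment for some other packet, so that the per-packet expected number of attempts, averaged over the $P$ paths, cannot drop below $\frac{1}{P}\sum_{p}\frac{1}{1-\epsilon_p}$. Combining this invariance with the convexity of $f$ secures the stated bound on $D^{\text{genie-aided}}$.
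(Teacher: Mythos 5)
Your Jensen step runs in the wrong direction for the conclusion you need. You compute the expected delay of a specific, suboptimal policy (static uniform assignment with fixed-path retry), obtaining $\frac{1}{P}\sum_{p}\frac{1}{1-\epsilon_p}$, and then show by Jensen that this quantity is $\geq \frac{1}{1-\bar{\epsilon}}$. But a lower bound on a particular policy's delay says nothing about the genie-aided delay; the genie is at least as good as any fixed policy, so if anything one expects $D^{\text{genie-aided}}\leq \frac{1}{P}\sum_p\frac{1}{1-\epsilon_p}$, which makes the chain $D^{\text{genie-aided}}\geq \frac{1}{P}\sum_p \frac{1}{1-\epsilon_p}\geq \frac{1}{1-\bar{\epsilon}}$ break at its first link. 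You flag this yourself as ``the hard part,'' but the exchange argument you sketch does not repair it: swapping two packets' static path assignments indeed leaves $\sum_p\frac{1}{1-\epsilon_p}$ invariant, yet the genie is not confined to static assignments with fixed-path retry — it can route a packet to whichever slot-path pair it knows will succeed, precisely the freedom the exchange argument cannot rule out.

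The paper sidesteps this entirely. Its proof never introduces the harmonic-type quantity $\frac{1}{P}\sum_p\frac{1}{1-\epsilon_p}$ or invokes Jensen. Instead it observes that, when distinct packets are sent on every path, the $P$ paths act collectively as a single virtual path with erasure probability $\bar{\epsilon}$ — a rate argument: each slot offers $P$ transmission opportunities of which $P(1-\bar{\epsilon})$ succeed on average, so the per-opportunity success probability is $1-\bar{\epsilon}$ — and then substitutes $\bar{\epsilon}$ for $\epsilon_{\text{prod}}$ in \eqref{eq:bound_prod}. That substitution is what ``building on \eqref{eq:bound_prod}'' means. The resulting bound constrains any scheduler, genie-aided or not, because it rests on the aggregate number of successful transmissions rather than on the expected attempts of any one assignment rule. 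To salvage your route you would have to prove directly that $D^{\text{genie-aided}}\geq \frac{\text{RTT}}{2}+\frac{1}{1-\bar{\epsilon}}$ from the capacity constraint, at which point Jensen and the per-path harmonic mean become unnecessary.
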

\begin{proof}
If different packets are sent on each path, the MP network can be seen from the delay point of view as a virtual path suffering from the mean erasure probability. Building on \eqref{eq:bound_prod}, the theorem is obtained.
\end{proof}
It is important to note that the above bound can only be achieved when the sender is able to predict which packets will not be delivered to the receiver (hence the name \textit{genie-aided}).} \jrev{In \ifdouble\Cref{fig:meam_delay_MP}\else\Cref{fig:ThroughputErrorBecMP}-(c)\fi, this lower bound is compared to the performance of the MP AC-RLNC protocol for a 4-path network. As one can expect, a non-negligible gap remains between the bound and the achieved performances. This gap is highlighted in details in \Cref{fig:DelayFactorMP}, where one can observe that the mean delay bound does not depend on the window size factor while the max delay bound increases linearly with it. For the comparison, the SR-ARQ numerical validation leads to an in-order delay $10.9$ times above the bound for the mean delay and $40$ times above for the maximum one. }

\off{
To obtain a optimal lower bound for the mean in-order delivery delay we will consider the average erasure probability given in \eqref{eq:ebar}. To compensate the erasure packets on average it is required $1/(1-\bar{\epsilon})$ transmissions. Hence, when the delay in the forward channel between the sender to the receiver is $RTT/2$, the lower bound for the mean in-order delivery delay is given by,
\begin{equation*}
    D_{mean}^{lb} \geq \frac{RTT/2}{1-\bar{\epsilon}}.
\end{equation*}
See \Cref{fig:meam_delay_MP}. Need to explain the results.... Comparison with approximation factor: $F_{D_{mean}}^{\text{AC-RLNC}} = \frac{D_{mean}^{ub}}{D_{mean}^{lb}}$ (Note that this factor is equal to 2, probably since our solution is causal from the feedback acknowledgments. Hence we now have the effect of the feedback channel delay, which is in our analysis precisely the same as the delay in the forward channel, $RTT/2$) with $F_{D_{mean}}^{\text{SR-ARQ}} = \frac{D_{mean,\text{SR-ARQ}}}{D_{mean}^{lb}}$ (for SR-ARQ we need to add example from the simulations).
\revGui{
Consider one packet is sent on 1 path with erasure probability $\epsilon$,
then in average, the number of retransmissions $m$ will be the solution of
\begin{equation*}
    (1-\epsilon)n = k,
\end{equation*}
leading to $m=\frac{\epsilon}{1-\epsilon}$. Adding the propagation delay, we have}}

\ifdouble\else
\begin{figure*}
\centering
\includegraphics[trim=0cm 0.0cm 0cm 0cm,width=1\textwidth]{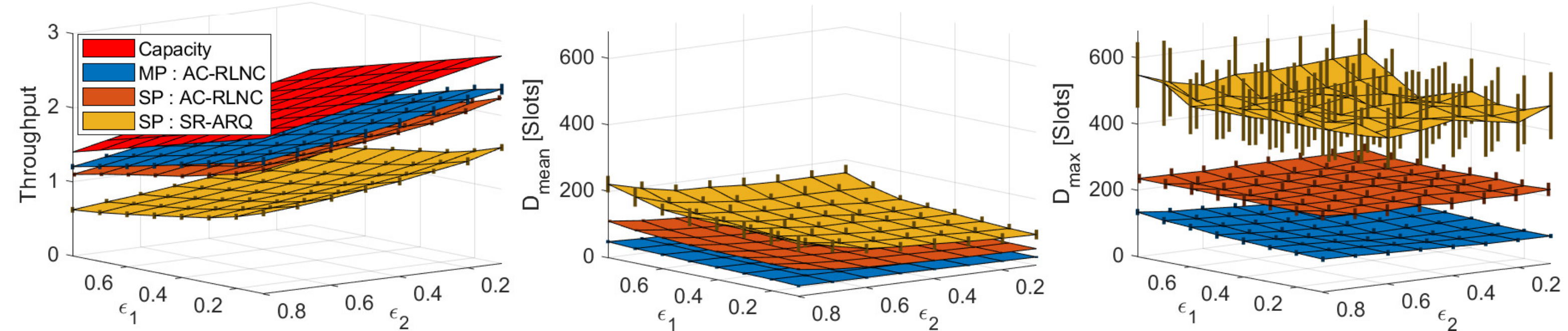}
    \caption{Performances of multipath protocol with RTT$=20$ [slots], $th=0$ and $\bar{o}=2k$, on 4 paths with $\epsilon_3 = 0.2$ and $\epsilon_4 = 0.8$, averaged on $150$ iterations. The vertical bars correspond to the standard deviation of the simulated results.}
    \label{MP_perf}
\end{figure*}
\fi

\off{Similarly to the discussion of the lower bound for the mean in-order delay, we prefer to put the protocols performances in comparison to the optimal genie-aided lower in-order delay of the system, as this comparison would not give relevant results.}
\off{\revGui{Alejandro, don't you think we could remove this part and just provide the bound for the delay in general after the two upper bounds?}}
\off{\jrev{The max in-order delivery delay in the AC-RLNC solution proposed is limited by the maximum window size. Hence, considering the maximum window size, in the same manner, we bounded the optimal lower bound for the mean in-order delivery, the max in-order delivery delay is given by,
\begin{equation*}
    D_{max}^{lb}\left(t^{f}\right) \geq f \frac{RTT/2}{1-\bar{\epsilon}}.
\end{equation*}
See \Cref{fig:DelayFactorMP}. Need to explain the results.... Comparison with approximation factor: $F_{D_{max}}^{\text{AC-RLNC}} = \frac{D_{max}^{ub}}{D_{max}^{lb}}$ with $F_{D_{max}}^{\text{SR-ARQ}} = \frac{D_{max,\text{SR-ARQ}}}{D_{max}^{lb}}$ (for SR-ARQ we need to add example from the simulations).} \revGui{Instead of $f$, $\bar{o}$?}}

\off{we bounded the rate at each set of $k$ transmissions for the throughput lower bound, we can bound the erasure probability considering the variance of the noise at each set of transmission during the maximum window size, thus
\begin{equation*}
 \bar{\epsilon} \left(t^{f}\right) \geq \sum_{\substack{i=1\\j=i-1}}^{f} \bar{\epsilon} \left(t^{j}\right) +  \frac{\sqrt{V_{p}\left(t^{j}\right) }}{k+\bar{\epsilon}\left(t^{j} \right)k },
\end{equation*}
where $\bar{\epsilon} \left(t^{0}\right) =\bar{\epsilon}$ as given in \eqref{eq:ebar}.}

\off{\jrev{Now, since the optimal max in-order delay is on the sum of the sets transmitted during the maximum window size, we obtain that,
\begin{equation*}
    D_{max}^{lb}\left(t^{f}\right) \geq \sum_{\substack{i=1\\j=i-1}}^{f} D_{max}^{lb}\left(t^{j}\right)  + \frac{RTT/2}{1-\bar{\epsilon}\left(t^{i}\right)},
\end{equation*}
where, $D_{max}^{lb}\left(t^{0}\right) = 0$.}}
\off{\begin{figure}
\centering
\includegraphics[trim=0cm 0.0cm 0cm 0cm,width=1\columnwidth]{mp_max_delay_2_lower}
	\caption{\jrev{Max in-order delivery delay lower bound for a MP network with $H=1$ and $P=4$ for BEC's with erasure probability of $\epsilon_{11}=0.2$, $\epsilon_{21}=0.4$, $\epsilon_{31}=0.6$ and $\epsilon_{41}=0.8$, considering the new MP AC-RLNC protocol and independent SP AC-RLNC protocols of \cite{cohen2019adaptive}.}}
	\label{fig:max_delay_MP_lb}
\end{figure}}

\subsection{Simulation Results}\label{MPsimulation}

The performance of the MP AC-RLNC protocol is compared with two other protocols, as presented in \Cref{MP_perf}.

\paragraph{Setting and protocols}
We consider the setting of \Cref{MH_MP_setting}, with $H=1$, $P=4$, $RTT=20$, and with $\epsilon_{31}= 0.2$ and $\epsilon_{41} =0.8$, while the erasure probabilities of the two other paths ($\epsilon_{11}$ and $\epsilon_{21}$) vary in the range of $[0.1 \;  0.8]$. \jrev{These erasure probabilities correspond to those observed in the controlled-congested setup considered by Intel for WiFi standards tests which is described in our singlepath AC-RLNC experimental validation \cite{cohen2019adaptive}.}

The MP AC-RLNC protocol has been simulated with $th=0$ and $\bar{o}=2w$. To emphasize the gain we get by tracking the channel condition, adjusting the retransmissions, and with the discrete \rev{BF} algorithm, we compare the MP protocol with the SP AC-RLNC \cite{cohen2019adaptive}, applied independently on each path as described in the bottom part of \ifdouble\Cref{fig:dif_sp_mp}\else\Cref{fig:encoding_pros}-(b)\fi. The tunable parameters of the SP AC-RLNC protocol are the same as the MP ones. Furthermore, we compare these protocols with SR-ARQ \ar{\cite{weldon1982improved,anagnostou1986performance,ausavapattanakun2007analysis}}, to show the gain we get with adaptive and causal network coding. Again, the SR-ARQ protocol is used independently on each path. \jrev{Other traditional coding solutions could also be applied independently on each path. For a discussion on the performance on such solutions, we refer the reader to \cite{cohen2019adaptive} where delay-aware coding techniques are presented and compared to SP AC-RLNC.} The results of \Cref{MP_perf} have been averaged on 150 different channel realizations, where the filled curves correspond to the mean performances while the error bars represent the standard deviation from the mean.

\paragraph{Results}
Based on the results of \Cref{MP_perf}, the throughput is slightly increased (around $15\%$) compared to the SP AC-RLNC protocol but nearly doubled with regards to SR-ARQ, independently of the erasure probabilities. On the delay point of view, both the mean and max in-order delay are dramatically reduced with the MP protocol. More precisely, compared to the SP AC-RLNC protocol, the MP protocol performs nearly twice as better in term of mean delay and between 2 to 3 times better in terms of max delay. With regards to the SR-ARQ protocol, performances are nearly 4 times better for the mean delay and between 4 to 6 times better for the max one. For higher RTT's, the gap between our solution and other protocols increases\footnote{Those results are not shown here \rev{due to limited space}.}. \jrev{The above simulation results as well as the bounds ensure that the protocol performances do not decrease dramatically when the RTT increases. This also supports situations in which the RTT fluctuates, as in the case, one can consider in a worst case approach the largest RTT and still have guaranteed throughput and in-order delay. Finally, if the feedback channel suffers from erasures, then one can use a cumulative feedback (e.g. \cite{malak2019tiny}) which will translates the feedback erasures in a larger RTT as the ACK or NACK will simply be delayed. Hence, the bounds also provide guarantees in this situation.}
\ifdouble
\ar{\begin{table}
    \centering
    \normalsize
    \begin{tabular}{|l|l|}
        \hline
        \textbf{Param.} & \textbf{Definition}\\\hline
        $C$ & capacity\\\hline
        $L$, $G$ & local and global matching \\\hline
        $\mathcal{L}$, $\mathcal{G}$ & set of admissible local and global matchings\\\hline
        $\eta_{max}$ & maximum throughput of the global paths\\\hline
        $r_{G(p,h)h}$ & rate of the $G(p,h)$-th path in the $h$-th hop \\\hline
        $r_{G_p}$ & rate of each $p$-th global path $\forall p=\{1;\ldots;P\}$ \\\hline
    \end{tabular}
    \vspace{0.1cm}
    \caption{Symbol definition for the multihop protocol}
    \label{table : definition_MH}
\end{table}}
\else
\begin{figure}
    \centering
    \includegraphics[trim=0cm 0.7cm 0cm 0cm,width = 1 \columnwidth]{tabal_algo_2.jpg}
\end{figure}
\fi

\section{Multi-hop Multipath Communication}\label{MH}
In this section, we generalize the MP solution given in \Cref{MP} to the MP and MH setting introduced in \Cref{sys}, in which each node can estimate the erasure probability of the incoming and outgoing paths according to the local feedback. To present our MH protocol, we use the example of \Cref{ex_balancing}.

In that MH setting, in the asymptotic regime, using RLNC, the min-cut max-flow capacity $C$ ($2.6$ in the example) can be achieved by mixing together coded packets from all the paths at each intermediate node (see \Cref{rel}). Hence, one could use $P$ parallel SP AC-RLNC protocols with the node recoding protocol to get a throughput very close to the min-cut max-flow capacity. However, due to the mixing between the paths, dependencies are introduced between the FEC's and the new RLNC's. This will thus result in a high in-order delay.

To reduce the in-order delay, the MP algorithm we suggest in \Cref{MP} can be used on $P$ global paths, using RLNC independently on each path. A naive choice of global paths is shown in the upper part of \Cref{ex_balancing}\footnote{The paths are described by their color and their type of arrow.}. In that setting, due to the min-cut max-flow capacity, the maximum throughput of each path is limited by its bottleneck\footnote{ In \Cref{ex_balancing}, bottlenecks are denoted by a curvy symbol behind the rate.} (i.e. the link with the smallest rate). Doing so, the maximum throughput (i.e. the sum of the min-cut of each path) can thus be much lower than the capacity of the network, as it can be seen from the example of \Cref{ex_balancing}, in which the throughput is $1.5$.

\rev{With these two previous attempts minded}, we finally suggest to determine the global paths using a decentralized balancing algorithm whose \jrevm{aim is to maximize the throughput} of the network. The second part of \Cref{ex_balancing} shows the global paths resulting from that balancing, and as a result a maximum throughput equal to $2.4$. Moreover, it can be seen that only 2 global paths are now affected by the bottleneck links. Once these global paths are defined, the MP AC-RLNC protocol can be used as described in \ifdouble\Cref{pseudo_code}\else Algorithm 1\fi.
\ifdouble
\begin{figure}
\centering
\includegraphics[trim=0cm 0.7cm 0cm 0cm,width=1\columnwidth]{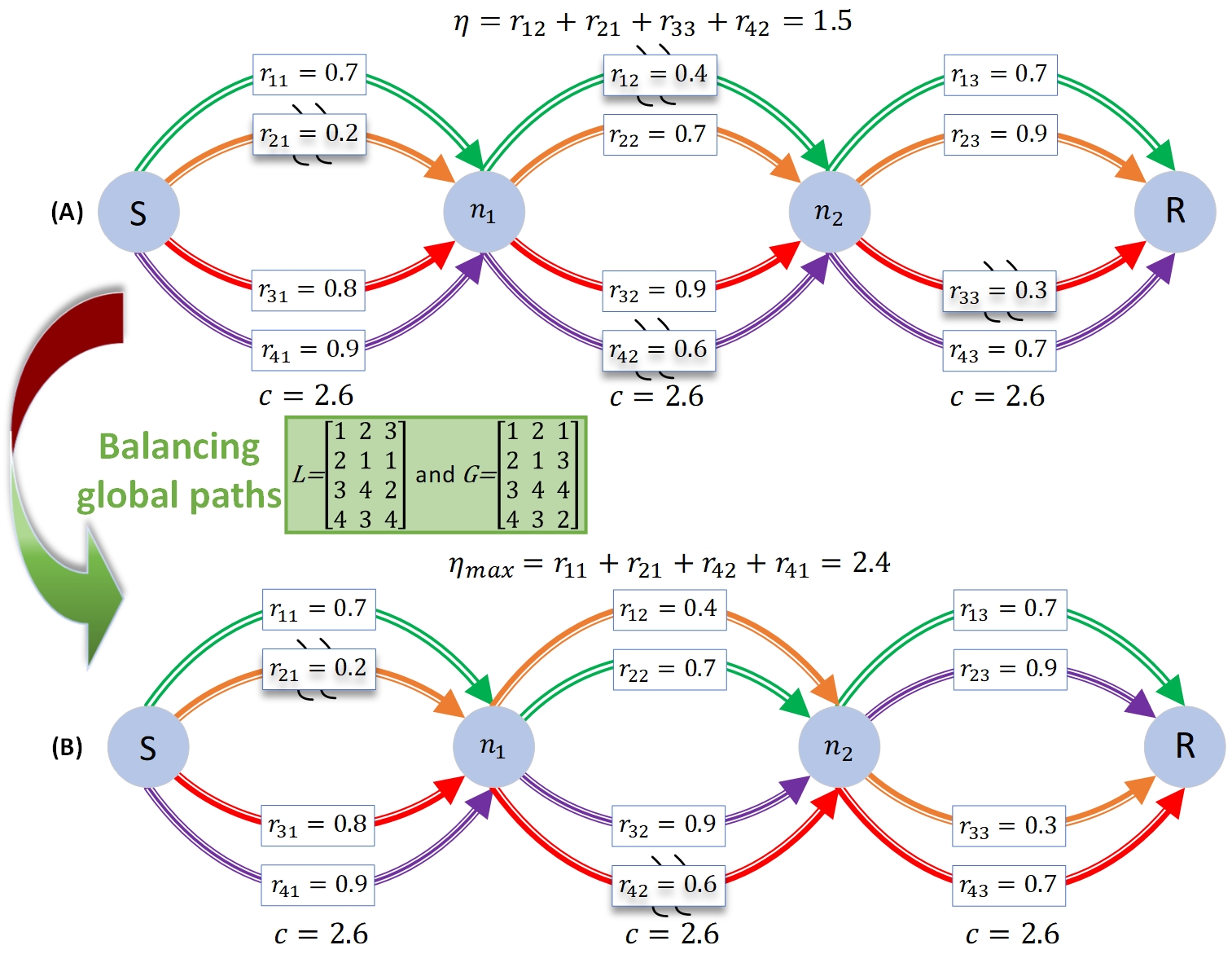}
     \caption{Balancing optimization and global paths example. The numbers correspond to the rates of the paths.}
    \label{ex_balancing}
\end{figure}
\else
\begin{figure}
\centering
\includegraphics[trim=0cm 0.7cm 0cm 0cm,width=0.55\columnwidth]{example1.jpg}
     \caption{Balancing optimization and global paths example. The numbers correspond to the rates of the paths.}
    \label{ex_balancing}
\end{figure}
\fi

In section \ref{MHCodeConstruction}, the definition of the global paths is described as well as the full multi-hop algorithm, which is then analyzed theoretically in section \ref{Analytical_results_MH} and finally simulated in \Cref{MHsimulation}. \ifdouble\Cref{table : definition_MH} \else Table II \fi summarizes the symbol definitions we use in this section.


\subsection{Adaptive Coding Algorithm}\label{MHCodeConstruction}
Here we describe the suggested MP and MH solution.

\paragraph{Global paths - problem formulation}
In order for the $h$-th node to transmit packets over the paths maximizing the rate, it needs to know the local matching $L(p,h)$, such that, $L(p,h) = j$ implies that the $j$-th path of the $(h+1)$-th hop is matched with the $p$-th path of the $h$-th hop. The definition of the global paths can be done equivalently through a global matching $G(p,h)$, such that, $G(p,h) = j$ implies that the $j$-th path of the $h$-th hop belongs to the $p$-th global path\footnote{The global matching of the first hop is such that the $p$-th local path belongs to the $p$-th global path, i.e. $G(p,1)=p \, \forall p=1...P$.}. We point out that even if these two definitions are equivalent, the local matching is particularly convenient to express the global paths in a decentralized way. Moreover, it is important to note that, for $L$ and $G$ to be an admissible matching, each local path must be matched with exactly one other local path at each node. Hence, $\mathcal{L}$ and $\mathcal{G}$ are defined respectively as the set of admissible local and global matchings.
The values of $L$ and $G$, \rev{shown} in the example given in \Cref{ex_balancing}, are respectively equal to
\begin{equation*}
    \textstyle L = \begin{pmatrix}
    1&2&3 \\
    2&1&1 \\
    3&4&2 \\
    4&3&4 \\
    \end{pmatrix}  \textstyle \quad \text{and} \textstyle \quad G = \begin{pmatrix}
    1&2&1 \\
    2&1&3 \\
    3&4&4 \\
    4&3&2 \\
    \end{pmatrix}.
\end{equation*}
Once admissible global paths are determined, the maximum achievable throughput $\eta_{max}$ corresponds to the sum of the min-cut of each global path \cite{mincutmaxflowtheorem,LunMedKoeEff2008}. Defining $r_{G(p,h)h}$ as the rate of the $G(p,h)$-th path of the $h$-th hop, $\eta_{max}$ is thus equal to
\begin{equation*}
     \eta_{max}(G) =  \sum_{p=1}^P\min_{h=1...H}r_{G(p,h)h}.
\end{equation*}
\ar{Consequently, since $G$ and $L$ are equivalent, the global path problem can be expressed as
\begin{equation} \label{prob}
    L = \arg\max_{\Tilde{L}\in\mathcal{L}}\eta_{max}(\Tilde{L}).
\end{equation}}
Note that this problem admits in general several solutions, as one can see from \Cref{ex_balancing}. For instance, letting $r_{11}$ be matched with $r_{42}$ and $r_{31}$ with $r_{22}$, $\eta_{max}$ is unchanged. In the following, a decentralized solution of (\ref{prob}) is first presented and secondly, we show that this solution minimizes the bottlenecks of the network.

\paragraph{Global paths - decentralized solution}
\begin{theorem}[Optimal matching]
Considering paths are sorted in rate-decreasing order at each hop (i.e. $r_{1,h}\geq r_{2,h}\geq ... \geq r_{P,h} \, \forall h=1...H$), a matching solving (\ref{prob}) is such that $r_{p,h}$ is matched with $r_{p,h+1} \quad \forall p=1...P,\,h=1...H$. This matching is called the natural matching.
\label{natural_matching}
\end{theorem}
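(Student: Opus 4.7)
The plan is a majorization argument on the sorted vector of bottleneck rates, which avoids the need for a two-hop exchange argument followed by an induction on $H$. For any admissible matching $G$, I would define the bottleneck of the $p$-th global path as $b_p(G) \triangleq \min_{h} r_{G(p,h),h}$ and let $b_{(1)}(G) \geq b_{(2)}(G) \geq \cdots \geq b_{(P)}(G)$ denote its decreasing rearrangement. Because the sum $\eta_{\max}(G) = \sum_{p} b_p(G) = \sum_{p} b_{(p)}(G)$ is invariant under any relabelling of global paths, it suffices to establish the componentwise bound $b_{(p)}(G) \leq b_{(p)}(G^*)$ for every $p$, where $G^*$ is the natural matching.

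The core step is to show that $b_{(p)}(G) \leq \min_{h} r_{p,h}$ for any admissible $G$. The argument is a pigeonhole: let $S_p$ be the set of the $p$ global paths of $G$ with the largest bottlenecks. At each hop $h$, these paths occupy $p$ distinct local indices drawn from $\{1,\ldots,P\}$, so at least one of them has index $\geq p$, which under the sorted-in-decreasing-order convention has rate $\leq r_{p,h}$. The bottleneck of the corresponding global path is therefore $\leq r_{p,h}$, and since $b_{(p)}(G)$ is the smallest bottleneck in $S_p$, we get $b_{(p)}(G) \leq r_{p,h}$ for every $h$, hence $b_{(p)}(G) \leq \min_{h} r_{p,h}$.

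To close the argument, I would observe that for the natural matching $b_p(G^*) = \min_{h} r_{p,h}$ directly from $G^*(p,h)=p$, and that these values are already non-increasing in $p$ because $r_{p,h}$ is non-increasing in $p$ at every hop. Therefore $b_{(p)}(G^*) = \min_{h} r_{p,h}$, and combined with the previous step this yields $b_{(p)}(G) \leq b_{(p)}(G^*)$ for each $p$. Summing over $p$ gives $\eta_{\max}(G) \leq \eta_{\max}(G^*)$, so $G^*$ solves \eqref{prob}.

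The only delicate step I anticipate is the pigeonhole justification: one has to argue cleanly that among the $p$ distinct local indices used at hop $h$ by the top-$p$ global paths of $G$, at least one must be $\geq p$, and then translate this back into a rate inequality through the sorted ordering. Once this is in place, the rest of the proof is bookkeeping and, pleasantly, is uniform in $H$, which sidesteps the combinatorial blow-up of a direct inductive exchange argument across hops.
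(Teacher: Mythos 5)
Your proof is correct and takes a genuinely different route from the paper. The paper argues by contradiction with a local exchange argument: assuming an optimal non-natural matching $L^*$, it swaps two crossing links at the last hop without decreasing the sum-rate, then recurses backward through the hops (the appendix carries out the $(H-1)$-th hop and sketches the recursion). Your argument instead establishes a componentwise dominance between the sorted bottleneck vectors: $b_{(p)}(G) \leq b_{(p)}(G^*)$ for every $p$, via a clean pigeonhole at each hop. The key step holds up: the $p$ global paths of $G$ with the largest bottlenecks occupy $p$ distinct local indices at hop $h$, and since only $p-1$ indices are strictly less than $p$, at least one of these paths uses an index $\geq p$ and hence a rate $\leq r_{p,h}$; this forces $b_{(p)}(G) \leq r_{p,h}$ for every $h$. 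Combined with the observation that $\min_h r_{p,h}$ is itself non-increasing in $p$ (so the natural matching's bottlenecks are already sorted), the sum bound follows immediately. What your approach buys is a proof that is uniform in $H$ — no induction across hops, no bookkeeping of partially achieved rates $R_{p,h}$ — and it actually proves something strictly stronger than the theorem (termwise dominance of the sorted bottleneck profile, not just of the sum). The paper's exchange argument is more local and perhaps more suggestive of a decentralized greedy implementation, but it requires the reader to track the recursion carefully across \Cref{fig:matching_last_hop} and \Cref{fig:matching_hop}. One small terminological note: what you prove is componentwise dominance of the decreasing rearrangements rather than majorization in the strict sense (there is no equal-sums constraint here), but the substance of the argument is unaffected.
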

\begin{proof}
By contradiction, suppose an optimal matching $L^*$ resulting in a strictly higher $\eta_{max}$ than the natural matching $L$:
\begin{equation}
    \eta_{max}\left(L^*\right) > \eta_{max}\left(L\right).
    \label{to_contradict}
\end{equation}
In the following, first we prove that at the last hop, the natural matching has at least a rate equal to the optimal one. Then, in appendix, the proof is extended to the previous hops, hence contradicting \eqref{to_contradict}.

\textbf{Last hop : }
\begin{figure}[!h]
\centering
\resizebox{5cm}{!}{\begin{tikzpicture}[transform shape]

\draw[draw=black] (0,0.5) rectangle ++(2,6);

\node[] at (1,3.5) {\begin{huge} $L^*$ \end{huge}};

\draw (2, 1) -- (3, 1);
\draw (2, 3) -- (3, 3);
\draw (2, 4) -- (3, 4);
\draw (2,6) -- (3, 6);

\node[circle,fill=black,label=above:{$R_{PH-1}$}] at (3, 1)   (n21) {};
\node[circle,fill=black,label=above:{$R_{jH-1}$}] at (3, 3)   (n22) {};
\node[circle,fill=black,label=above:{$R_{iH-1}$}] at (3, 4)   (n23) {};
\node[circle,fill=black,label=above:{$R_{1H-1}$}] at (3, 6)   (n24) {};

\node[circle,fill=black,label=above:{$r_{PH}$}] at (6,1) (n31) {};
\node[circle,fill=black,label=above:{$r_{lH}$}] at (6, 2.5)   (n32) {};
\node[circle,fill=black,label=above:{$r_{kH}$}] at (6, 4.5)   (n33) {};
\node[circle,fill=black,label=above:{$r_{1H}$}] at (6, 6)   (n34) {};

\draw[ultra thick, draw=blue,loosely dashed] (n21) -- (n31) ;
\draw[ultra thick, draw=red,densely dotted] (3, 1) .. controls (4.5,1.4) .. (6,1);
\draw[ultra thick, draw=red,densely dotted] (n22) -- (n32) ;
\draw[ultra thick, draw=blue,loosely dashed] (n22) -- (n33) ;
\draw[ultra thick, draw=blue,loosely dashed] (n23) -- (n32) ;
\draw[ultra thick, draw=red,densely dotted] (n23) -- (n33) ;
\draw[ultra thick, draw=blue,loosely dashed] (n24) -- (n34) ;
\draw[ultra thick, draw=red,densely dotted] (3, 6) .. controls (4.5,5.4) .. (6,6);

\end{tikzpicture}}
\caption{Matching of the last hop. The blue dashed lines highlight the non-natural matching while the red dotted lines correspond to the natural one.}\label{fig:matching_last_hop}
\end{figure}
Letting the achieved rate up to the $h$-th hop be $R_{ph} = min_{f=1...h}r_{G^*(p,f)f}$ and considering $R_{ph}$ is sorted in rate-decreasing order, suppose $L^*$ does not give a natural matching at the last hop, as represented in \Cref{fig:matching_last_hop} with blue dashed lines. Hence $\exists \, i>j,k>l$ such that $R_{iH-1}\, (resp.\, R_{jH-1})$ is matched with $R_{lH-1}\, (resp.\,R_{kH-1})$. We prove below that matching $R_{iH-1}\, (resp.\, R_{jH-1})$ with $r_{kH}\, (resp.\,r_{lH})$ does not decrease the sum-rate\footnote{Since unmodified links do not modify the rate, they don't need to be taken into account in the proof. By abuse of notation, we let $\eta_{max}$ be the rate restricted to the modified links.}
\begin{equation*}
   \eta_{max}\left(L^*\right) =  \min\left(R_{iH-1},r_{lH}\right) + \min\left(R_{jH-1},r_{kH}\right).
\end{equation*}
  Two cases are possible :
\\If $r_{lH} \leq R_{jH-1}$, since $R_{jH-1}<R_{iH-1}$,
\ifdouble
\begin{equation*}
   \eta_{max}\left(L^*\right) \leq \min\left(R_{jH-1},r_{lH}\right) + \min\left(R_{iH-1},r_{kH}\right).
\end{equation*}
\else
    $\eta_{max}\left(L^*\right) \leq \min\left(R_{jH-1},r_{lH}\right) + \min\left(R_{iH-1},r_{kH}\right)$.
\fi

\noindent If $R_{jH-1} < r_{lH}$, since $r_{lH}<r_{kH}$,
\ifdouble
\begin{equation*}
   \eta_{max}\left(L^*\right) \leq \min\left(R_{iH-1},r_{kH}\right) + \min\left(R_{jH-1},r_{lH}\right).
\end{equation*}
\else
    $\eta_{max}\left(L^*\right) \leq \min\left(R_{iH-1},r_{kH}\right) + \min\left(R_{jH-1},r_{lH}\right)$.
\fi

\noindent In both cases, $\eta_{max}\left( L^*\right)$ is upper bounded by the rate achieved by the natural matching, highlighted by red dotted lines in \Cref{fig:matching_last_hop}.
Applying this reasoning recursively until no $i,j,k,l$ can be found, it appears $L^*$ can be replaced by the natural matching at the last hop without decreasing $\eta_{max}$.

~~\\\jrevm{\textbf{Previous hop:}} The proof for the previous hops uses the same ideas as for the last one recursively, and is presented in Appendix \ref{appendix:appendixMatching}.
\end{proof}
The resulting matching procedure is presented in \ifdouble \Cref{greedy_balancing}\else Algorithm 2\fi, where the paths are greedily matched from the first to the last hop. Note that this procedure is almost fully decentralized, as sorting the hop rates and matching them can be done locally, at each node. The only needed communication arise\jrevm{s} when two links have the same rate, since it that case the node needs to know which of the links have been considered as \rev{``}best" at the previous node. Hence, by default, the order of the paths is forwarded to the next node in \ifdouble\Cref{greedy_balancing}\else Algorithm 2\fi. We also stress out that this algorithm is much \jrev{simpler} than the one previously suggested in \cite{cohen2019MPMH}, as it does not require to apply the Hungarian algorithm \cite{assignment_problem} at each node.
\ifdouble
\begin{algorithm}
\caption{Greedy balancing protocol}
\label{greedy_balancing}
\begin{algorithmic}[1]
\For{$h=1...H-1$}
    \State Match the paths using the natural matching
    \State Send the order of the output paths to the next node.
\EndFor
\end{algorithmic}
\end{algorithm}
\fi

\paragraph{Global paths - bottleneck effect}
\ifdouble\Cref{greedy_balancing} \else Algorithm 2 \fi gives a very efficient procedure to determine the global paths. In the following, we show that this solution minimizes the bottleneck effect that arises when two links with different rates are matched.
\off{Looking at each global path independently, it appears that rate is lost each time two segments of the global paths are unbalanced (i.e. having  different rates). Indeed, if the first segment has a greater rate than the second one, more packets arrive at the intermediate node than what the second link is capable of transmitting. On the contrary, in the opposite setting, the second link can send packets at a high rate but incoming packets arrive at a low rate. Therefore, the second link is limited to the rate of the first segment. Hence, in order to match the paths in a balanced way, we thus suggest the following optimization problem.}
\begin{prop}[Balancing Optimization]
Given incoming paths with rate $\textbf{r}_{in}$ and outgoing paths with rate $\textbf{r}_{out}$, the following problems are equivalent.
\begin{eqnarray}
    \textstyle \textbf{l}_1 &=& \rev{\underset{\Tilde{\textbf{l}}\in Perm(P)}{\arg\max}} \sum_{p=1}^P \min\left(\textbf{r}_{in}\left(p\right),\textbf{r}_{out}(\tilde{\textbf{l}}(p))\right),\label{P1}\\
    \textstyle \textbf{l}_2 &=& \rev{\underset{\Tilde{\textbf{l}}\in Perm(P)}{\arg\min}} \sum_{p=1}^P \left|\textbf{r}_{in}\left(p\right)-\textbf{r}_{out}(\tilde{\textbf{l}}(p))\right| \label{P3},
\end{eqnarray}
with $\textbf{r}_{out}(\tilde{\textbf{l}}(p))$ the rate of the $\tilde{\textbf{l}}(p)$-th outgoing path and $Perm(P)$ the set of all permutations\footnote{This restriction prevents non-admissible matchings.} of the vector $[1,2,...,P]$.
\label{prob_v2}
\end{prop}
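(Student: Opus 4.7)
The plan is to prove the equivalence by using the elementary identity $\min(a,b) = \frac{1}{2}(a+b-|a-b|)$, which converts the $\min$-sum objective into an $\ell_1$-style objective up to a permutation-independent additive constant.

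First, I would apply the identity term by term, so that for any permutation $\tilde{\textbf{l}}\in\text{Perm}(P)$,
\begin{equation*}
    \sum_{p=1}^P \min\!\left(\textbf{r}_{in}(p),\textbf{r}_{out}(\tilde{\textbf{l}}(p))\right)
    = \tfrac{1}{2}\sum_{p=1}^P \!\left[\textbf{r}_{in}(p) + \textbf{r}_{out}(\tilde{\textbf{l}}(p)) - \left|\textbf{r}_{in}(p)-\textbf{r}_{out}(\tilde{\textbf{l}}(p))\right|\right].
\end{equation*}
The next step is to observe that the first term $\sum_p \textbf{r}_{in}(p)$ clearly does not depend on $\tilde{\textbf{l}}$. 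The key observation for the second term is that because $\tilde{\textbf{l}}$ is a permutation of $[1,\ldots,P]$, the sum $\sum_{p=1}^P \textbf{r}_{out}(\tilde{\textbf{l}}(p))$ is merely a reordering of $\sum_{p=1}^P \textbf{r}_{out}(p)$, and therefore is also independent of the chosen permutation.

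Consequently, the maximization in \eqref{P1} reduces to
\begin{equation*}
    \textbf{l}_1 = \underset{\Tilde{\textbf{l}}\in \text{Perm}(P)}{\arg\max}\, \left(\text{const.} \;-\; \tfrac{1}{2}\sum_{p=1}^P \left|\textbf{r}_{in}(p)-\textbf{r}_{out}(\tilde{\textbf{l}}(p))\right|\right),
\end{equation*}
which is manifestly equivalent to the minimization problem in \eqref{P3}. Both problems hence share the same set of optimal permutations, giving $\textbf{l}_1 = \textbf{l}_2$ (up to ties between permutations yielding identical objective values).

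I do not expect any real obstacle here: the proof is essentially a one-line algebraic manipulation combined with the invariance of sums under a permutation of the summation indices. The only subtlety worth flagging is that both \eqref{P1} and \eqref{P3} can admit multiple optima, so the statement should be read as equality of argmin/argmax sets rather than of unique minimizers; the identity above shows these sets coincide exactly.
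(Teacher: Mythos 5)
Your proof is correct, and it reaches the same conclusion as the paper's, but via a cleaner algebraic route. The paper establishes the same affine relationship between the two objectives more laboriously: it partitions the index set into $\mathcal{P}^+(\tilde{\textbf{l}}) = \{p : \textbf{r}_{in}(p)-\textbf{r}_{out}(\tilde{\textbf{l}}(p))>0\}$ and its complement, observes that $\sum_p (\textbf{r}_{in}(p)-\textbf{r}_{out}(\tilde{\textbf{l}}(p))) = K$ is a permutation-independent constant, deduces that $\sum_p |\textbf{r}_{in}(p)-\textbf{r}_{out}(\tilde{\textbf{l}}(p))| = 2\sum_{p\in\mathcal{P}^+}(\cdot)-K$, and separately identifies $\sum_p \textbf{r}_{in}(p) - \sum_p \min(\cdot)$ with $\sum_{p\in\mathcal{P}^+}(\cdot)$. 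Your single application of $\min(a,b) = \tfrac{1}{2}(a+b-|a-b|)$, combined with the observation that both $\sum_p \textbf{r}_{in}(p)$ and $\sum_p \textbf{r}_{out}(\tilde{\textbf{l}}(p))$ are invariant under permutation of indices, collapses all of that bookkeeping into one line. The underlying insight is the same in both cases — the two objectives differ by a permutation-independent affine transform with negative slope — so the proofs are equivalent in spirit, but yours is more elementary and avoids the explicit case split on the sign of the rate difference. Your closing remark that equivalence should be read as equality of argmax/argmin sets (not unique optimizers) is a worthwhile precision that the paper leaves implicit.
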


\off{\textstyle \textbf{l}_2 &=& \rev{\underset{\Tilde{\textbf{l}}\in Perm(P)}{\arg\min}} \sum_{p\in\mathcal{P}^+(\tilde{\textbf{l}})
, $\mathcal{P}^+(\textbf{l}) := \{p|\textbf{r}_{in}(p)-\textbf{r}_{out}(\textbf{l}(p))>0\}$}
\left[\textbf{r}_{in}\left(p\right)-\textbf{r}_{out}(\tilde{\textbf{l}}(p))\right],\label{P2}\\}

In the above proposition, one can recognize on one hand in (\ref{P1}) the problem solved by the natural matching, i.e. the maximization of the sum-rate. On the other hand, (\ref{P3}) minimizes the absolute rate differences (namely, the bottlenecks).

\begin{proof}
Letting $\mathcal{P}^+(\textbf{l}) := \{p|\textbf{r}_{in}(p)-\textbf{r}_{out}(\textbf{l}(p))>0\}$, and $\mathcal{P}^-(\textbf{l})$ be the complementary set, the objective function of (\ref{P3}) is rewritten as
\ifdouble
\begin{equation}
\sum_{p=1}^P \left|\textbf{r}_{in}\left(p\right)-\textbf{r}_{out}(\tilde{\textbf{l}}(p))\right|
 \label{eq:firstSplit}
 \end{equation}
\begin{equation*}
 = \sum_{p\in\mathcal{P}^+(\tilde{\textbf{l}})} \left[\textbf{r}_{in}(p)-\textbf{r}_{out}(\tilde{\textbf{l}}(p))\right]-\sum_{p\in\mathcal{P}^-(\tilde{\textbf{l}})} \left[\textbf{r}_{in}(p)-\textbf{r}_{out}(\tilde{\textbf{l}}(p))\right].
\end{equation*}
\else
\begin{equation}\label{eq:firstSplit}
\sum_{p=1}^P \left|\textbf{r}_{in}\left(p\right)-\textbf{r}_{out}(\tilde{\textbf{l}}(p))\right| = \sum_{p\in\mathcal{P}^+(\tilde{\textbf{l}})} \left[\textbf{r}_{in}(p)-\textbf{r}_{out}(\tilde{\textbf{l}}(p))\right]-\sum_{p\in\mathcal{P}^-(\tilde{\textbf{l}})} \left[\textbf{r}_{in}(p)-\textbf{r}_{out}(\tilde{\textbf{l}}(p))\right].
\end{equation}
\fi
Moreover, since
\begin{equation*}
    \sum_{p=1}^P \left[\textbf{r}_{in}\left(p\right)-\textbf{r}_{out}(\tilde{\textbf{\textbf{l}}}(p))\right]=\sum_{p=1}^P r_{in}\left(p\right)-\sum_{p=1}^P r_{out}(\tilde{\textbf{l}}(p)) = K,
\end{equation*}
with $K$ a constant independent from $\textbf{l}$, the following holds:
\ifdouble
\begin{equation*}
\begin{split}
  \sum_{p\in\mathcal{P}^-(\tilde{\textbf{l}})}\left[ \textbf{r}_{in}\left(p\right)-\textbf{r}_{out}(\tilde{\textbf{l}}(p))\right] \quad \quad \quad \quad \quad \quad \quad \quad \quad  \\
  = K - \sum_{p\in\mathcal{P}^+(\tilde{\textbf{l}})} \left[\textbf{r}_{in}\left(p\right)-\textbf{r}_{out}(\tilde{\textbf{l}}(p))\right].
\end{split}
\end{equation*}
\else
\begin{equation*}
  \sum_{p\in\mathcal{P}^-(\tilde{\textbf{l}})}\left[ \textbf{r}_{in}\left(p\right)-\textbf{r}_{out}(\tilde{\textbf{l}}(p))\right] = K - \sum_{p\in\mathcal{P}^+(\tilde{\textbf{l}})} \left[\textbf{r}_{in}\left(p\right)-\textbf{r}_{out}(\tilde{\textbf{l}}(p))\right].
\end{equation*}
\fi
Hence, (\ref{eq:firstSplit}) is rewritten as
\ifdouble
\begin{equation}
\begin{split}
\sum_{p=1}^P \left|\textbf{r}_{in}\left(p\right)-\textbf{r}_{out}(\tilde{\textbf{l}}(p))\right| \quad \quad \quad \quad \quad \quad \quad \quad \quad \\
 = 2\sum_{p\in\mathcal{P}^+(\tilde{\textbf{l}})} \left[\textbf{r}_{in}(p)-\textbf{r}_{out}(\tilde{\textbf{l}}(p))\right]-K.
\end{split}
\label{equi1}
\end{equation}
\else
\begin{equation}\label{equi1}
\sum_{p=1}^P \left|\textbf{r}_{in}\left(p\right)-\textbf{r}_{out}(\tilde{\textbf{l}}(p))\right| = 2 \sum_{p\in\mathcal{P}^+(\tilde{\textbf{l}})} \left[\textbf{r}_{in}(p)-\textbf{r}_{out}(\tilde{\textbf{l}}(p))\right]-K.
\end{equation}
\fi
Finally, the following holds
\ifdouble
\begin{equation}
\begin{split}
    \sum_{p=1}^P\textbf{r}_{in}(p)-\sum_{p=1}^P \min\left(\textbf{r}_{in}\left(p\right),\textbf{r}_{out}(\tilde{\textbf{l}}(p))\right)\quad \quad \quad \quad \quad \\
 = \sum_{p=1}^P \max\left(0,\textbf{r}_{in}\left(p\right)-\textbf{r}_{out}(\tilde{\textbf{l}}(p))\right)\\
 = \sum_{p\in\mathcal{P}^+(\tilde{\textbf{l}})} \left[\textbf{r}_{in}\left(p\right)-\textbf{r}_{out}(\tilde{\textbf{l}}(p))\right].
\end{split}
\label{equi2}
\end{equation}
\else
\begin{multline}\label{equi2}
\sum_{p=1}^P\textbf{r}_{in}(p)-\sum_{p=1}^P \min\left(\textbf{r}_{in}\left(p\right),\textbf{r}_{out}(\tilde{\textbf{l}}(p))\right)\\ = \sum_{p=1}^P \max\left(0,\textbf{r}_{in}\left(p\right)-\textbf{r}_{out}(\tilde{\textbf{l}}(p))\right) = \sum_{p\in\mathcal{P}^+(\tilde{\textbf{l}})} \left[\textbf{r}_{in}\left(p\right)-\textbf{r}_{out}(\tilde{\textbf{l}}(p))\right].
\end{multline}
\fi
Combining (\ref{equi1}) and (\ref{equi2}), one can observe that maximizing the objective function of (\ref{P1}) is equivalent to minimizing the one of (\ref{P3}), which completes the proof.
\end{proof}
\jrev{The above propositions are valid for scenarios with the same number of links per hop, where each incoming path is matched with exactly one outgoing path. To relax this constraint, the matching procedure could be defined to match subsets of paths with other subsets of paths of the following hop, possibly with a different number of paths per subset. One possible solution would be to consider at each hop the minimum number of paths (denoted by $n_{\min}$) between the incoming and outgoing links. Then, the matching procedure could be defined between the $n_{\min}$ paths on the one hand, and all possible path partitions of size $n_{\min}$ on the other hand. \jrevm{For instance, in a two hop network with two paths followed by three paths, two global paths could be defined. One of them would be singlepath while the second one would be made of one path for the first hop, and two paths for the second.  To handle the varying number of paths of the global paths, each intermediate node would receive and send linear combinations along the global paths in a traditional RLNC way. A rigorous definition of such a matching procedure}, as well as the analysis of other matching schemes is an interesting direction for future work.}

\off{Using (\ref{prob_v2}), the matching of the MP-MH channel is defined in the following way. The first node finds the first matching $l_1$ by solving (\ref{prob_v2}) with $r_{in}(p) = r_{p1}$ and $r_{out}(p) = r_{p2}$. Then, for the second node, the incoming rate $r_{in}$ can be computed as the min-cut of the partially built paths and $r_{out}$ is set to the rates of next hop. Again, (\ref{prob_v2}) can be solved to find $l_2$. This procedure can be repeated until the last node is reached and the local matching\footnote{\ar{We do not prove that \ifdouble\Cref{greedy_balancing} \else Algorithm 2 \fi solves exactly problem (\ref{prob}). Nevertheless, it provides efficiently a good solution to that problem.}}   $L$ is built as $L = [l_1,l_2,...,l_{H-1}]$. \ifdouble\Cref{greedy_balancing} \else Algorithm 2 \fi summarizes the protocol.
\begin{algorithm}
\caption{Greedy balancing protocol}
\label{greedy_balancing}
\begin{algorithmic}[1]
\State Set $r_{in}^{(1)}(p) = r_{p1}\rev{,} \quad \forall p=1...P$
\For{$h=1...H-1$}
    \State Set $r_{out}^{(1)}(p) = r_{p,h+1}\rev{,} \quad  \forall p=1...P$
    \State Find $l_h$ by solving (\ref{prob_v2})
    \State Permute $r_{in}^{(h)}$ according to $l_h$ to get $r^\star$, such that $r^\star(p)$ is matched with $r_{p,h+1}$
    \State Set $r_{in}^{(h+1)}(p) = \min(r^\star,r_{p,h+1})$
    \State Send $r_{in}^{(h+1)}$ to the next node
\EndFor
\end{algorithmic}
\end{algorithm}

Note that \ifdouble\Cref{greedy_balancing} \else Algorithm 2 \fi can be implemented in a decentralized way, with a minimum amount of communication between the nodes. \rev{Since} the nodes can estimate the erasure probabilities of the adjacent paths, the only needed communication at the $h$-th hop is the propagation of $r_{in}^{(h)}$.

\paragraph{Global paths - Efficient balancing}
In order to determine the matching in an efficient way, (\ref{prob_v2}) can be reformulated as a minimum weight matching in a bipartite graph, also called \textit{assignment problem} \cite{assignment_problem}. Indeed, a complete bipartite graph with two classes having $P$ vertices each, corresponding respectively to $r_{in}$ and $r_{out}$, can be built. The edge linking $r_{in}(p)$ and $r_{out}(p^\star)$ is weighted by $|r_{in}(p)-r_{out}(p^\star)|$. In that graph, (\ref{prob_v2}) is equivalent to finding a perfect matching (each input path is matched with exactly one output path) with a minimum weight (minimize the sum of the rate differences).
Several efficient solutions are suggested in the literature for balancing. For example, we can utilize the \rev{so called} Hungarian algorithm \cite{assignment_problem}, \ar{given in \Cref{hung_algo}}, its complexity being in $\mathcal{O}(n^3)$.
This algorithm works on the matrix $A$, whose rows (resp. columns) correspond to the first (resp. second) class and whose elements are the weights of the corresponding edges.
\ar{\begin{algorithm}[!ht]
\caption{Hungarian algorithm \cite{assignment_problem}}
\label{hung_algo}
\begin{algorithmic}[1]
\State Initialize $A$
\State Subtract from each row (resp. column) its minimum
\While{True}
\State Cover all zeros with a minimum number of lines
\If{$P$ lines are needed}
    \State break;
\Else
    \State Set $k$ to the minimum element of $A$
    \State Subtract $k$ from covered elements
    \State Add $k$ to covered twice elements
\EndIf
\EndWhile
\State Build the matching from the position of the 0's
\end{algorithmic}
\end{algorithm}}
}

\paragraph{Selective mixing}
The AC-RLNC MP protocol can be applied on the balanced global paths. Yet, mixing packets between some of the paths can improve the algorithm. \rev{From} \Cref{ex_balancing}, it can be seen on one hand that defining global paths reduces the maximum throughput. On the other hand, mixing all packets between all the paths suppresses the usefulness of FEC and FB-FEC transmissions. But if at intermediate nodes, new packets are mixed together on one hand and the FEC's and FB-FEC's on the other hand, the throughput will be increased without increasing the delay.
\subsection{Analytical Results for Delay and Throughput}\label{Analytical_results_MH}
In the following, the in-order delivery delay and throughput are analyzed in the case of the MP-MH network. The bounds derived in \Cref{Analytical_results_MP} are generalized by noticing that once the matching between the paths is defined, from the sender's point of view, the network is equivalent to a MP network with rates\off{\footnote{This result is valid only in the asymptotic regime, that is if large window sizes are considered. Since the bounds are derived considering the transmission of $\bar{o} = 2RTT$ packets, that hypothesis hold for moderate to high $RTT$'s.}}
\begin{equation*}
     r_{G_p} = \min_{h=1...H}r_{G(p,h)h}, \quad \forall p=\{1;\ldots;P\},
\end{equation*}
where $r_{G_p}$ denotes the rate of the $p$-th global path.

~~\\\subsubsection{An Upper Bound for the Throughput}
Once the rates of the global paths are defined, the minimum Bhattacharyya distance given in \Cref{BhattacharyyaD} can be used to bound the achievable rate, and hence directly leads to \Cref{theoremMH} and \Cref{BEC_MH}.

\ifdouble
\begin{figure}
\centering
\includegraphics[trim=0cm 0.7cm 0cm 0cm,width=1\columnwidth]{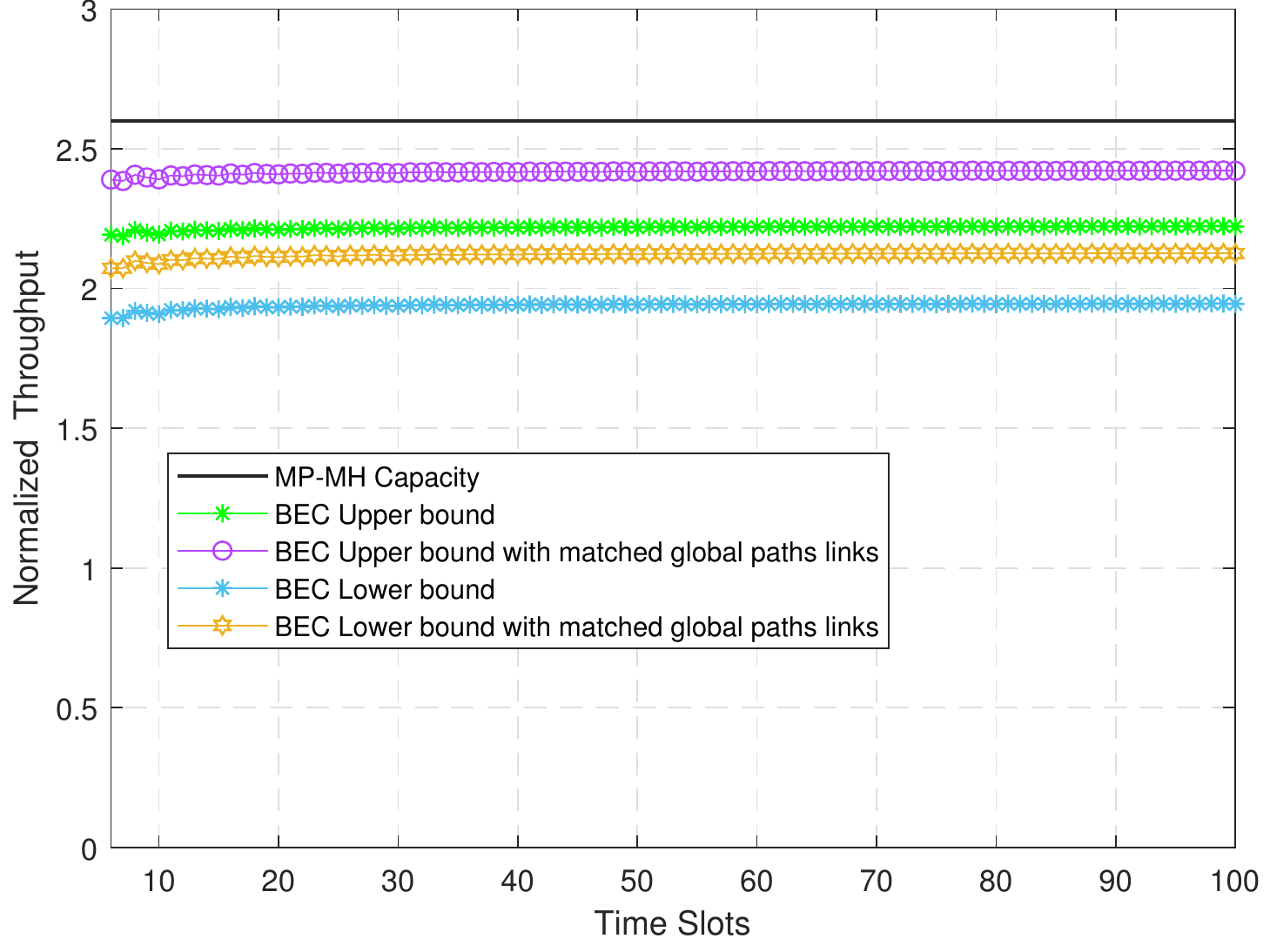}
	\caption{\jrev{Throughput upper and lower bounds in MP-MH network with $H=3$ and $P=4$ for BEC's with erasure probabilities of \Cref{ex_balancing}. Note that the range of the abscissa is from $4$ (the theoretical minimum of the $RTT$ delay\protect\footnotemark[17]) to $100$. Moreover, the throughput is not degraded by increasing the $RTT$. In the asymptotic regime the MP-MH AC-RLNC code may attain the capacity.
	In purple, the throughput upper bound is shown assuming all the paths are matched perfectly, without bottlenecks. That is, when the following rates of \Cref{ex_balancing} are changed to  $r_{21}=0.4$, $r_{31}=0.6$, $r_{33}=0.4$ and $r_{43}=0.6$.}}
	\label{fig:ThroughputErrorBecMPMH}
\end{figure}
\else
\begin{figure}
\centering
\subfigure[]{\includegraphics[trim=0cm 0.7cm 0cm 0cm,width=0.4\columnwidth]{MPMH_BhattacharyyaDistance2_lower_1}}
\subfigure[]{\includegraphics[trim=0cm 0.7cm 0cm 0cm,width=0.5\columnwidth]{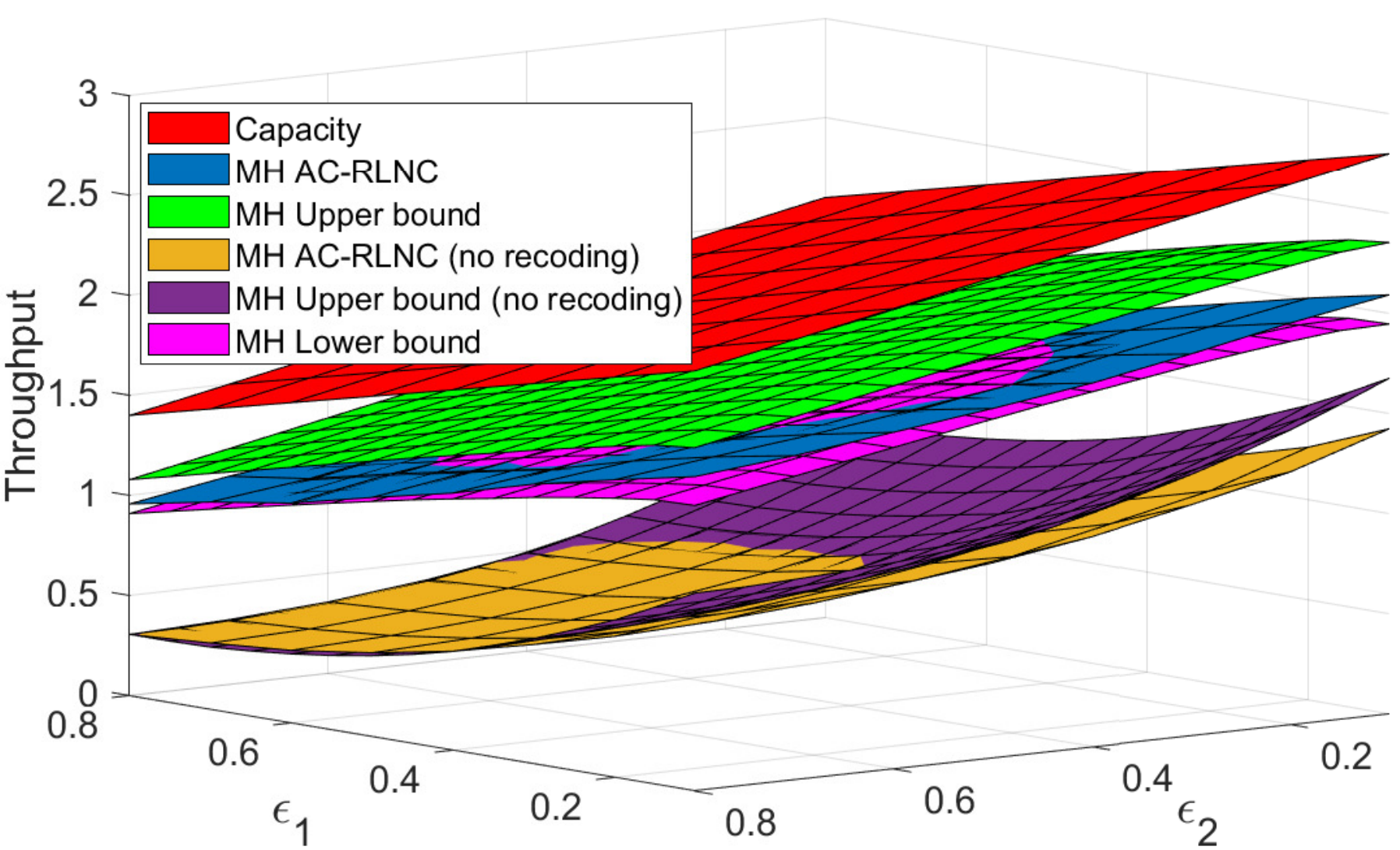}}
\subfigure[]{\includegraphics[trim=0cm 0.7cm 0cm 0cm,width=0.5\columnwidth]{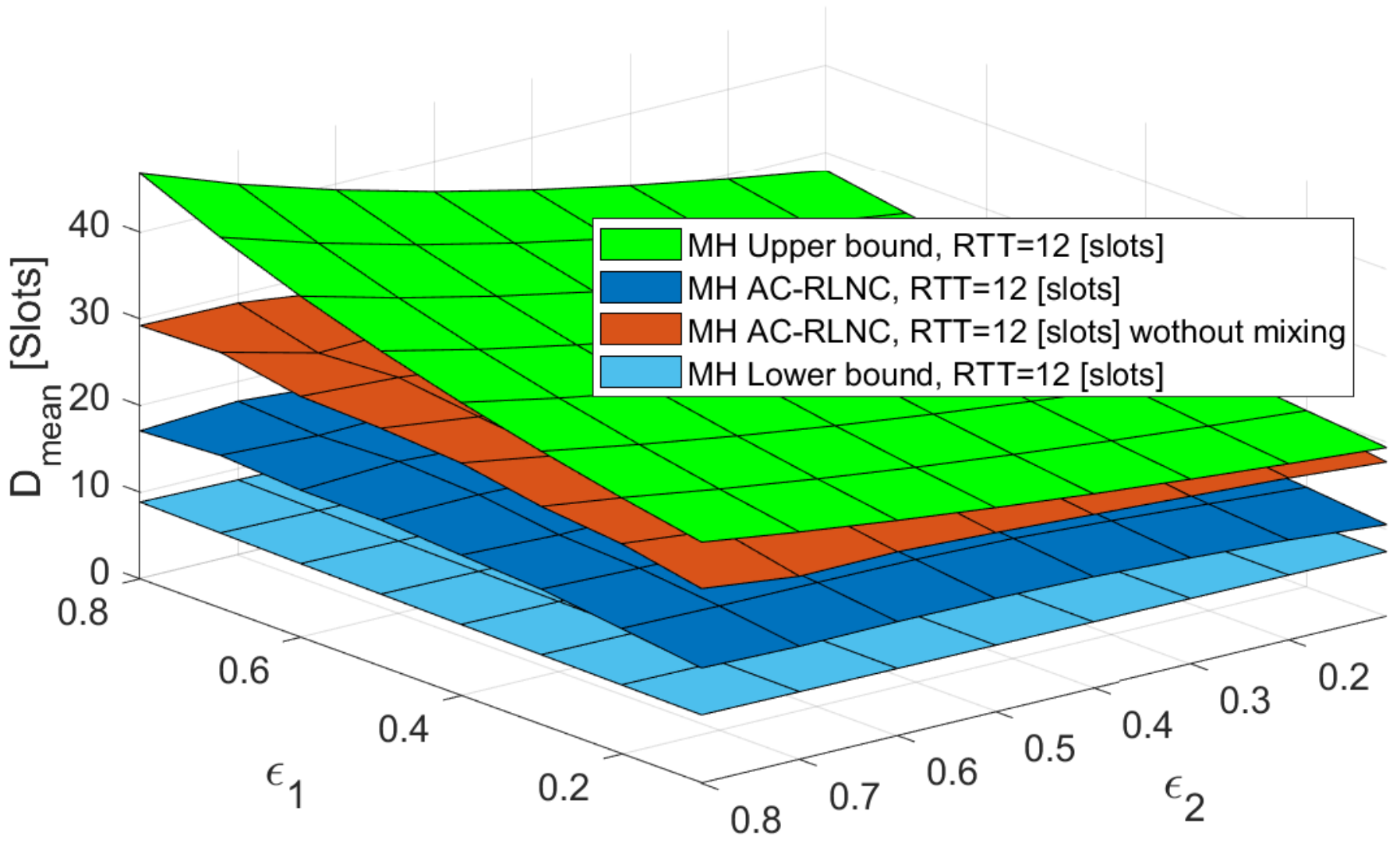}}
	\caption{(a) \jrev{Throughput upper and lower bounds for BEC in MP-MH network with $H=3$, $P=4$, $RTT=12$, and with
	$[\epsilon_{11}=\epsilon_{1}$ $\epsilon_{12}=0.6$ $\epsilon_{13}=0.3$; $\epsilon_{21}=0.8$ $\epsilon_{22}=\epsilon_{1}$ $\epsilon_{23}=\epsilon_{1}$; $\epsilon_{31}=0.2$ $\epsilon_{32}=\epsilon_{2}$ $\epsilon_{33}=0.7$; $\epsilon_{41}=\epsilon_{2}$ $\epsilon_{42}=0.4$ $\epsilon_{43}=\epsilon_{2}]$, while the erasure probabilities of $\epsilon_{1}$ and $\epsilon_{2}$ vary in the range of $[0.1 \;  0.8]$.} (b) {\jrev{Throughput upper and lower bounds for BEC in MP-MH network with $H=3$, $P=4$, $RTT=12$, and with
	$[\epsilon_{11}=\epsilon_{1}$ $\epsilon_{12}=0.6$ $\epsilon_{13}=0.3$; $\epsilon_{21}=0.8$ $\epsilon_{22}=\epsilon_{1}$ $\epsilon_{23}=\epsilon_{1}$; $\epsilon_{31}=0.2$ $\epsilon_{32}=\epsilon_{2}$ $\epsilon_{33}=0.7$; $\epsilon_{41}=\epsilon_{2}$ $\epsilon_{42}=0.4$ $\epsilon_{43}=\epsilon_{2}]$, while the erasure probabilities of $\epsilon_{1}$ and $\epsilon_{2}$ vary in the range of $[0.1 \;  0.8]$.}} (c)	\jrev{Mean in-order delivery delay upper bounds for BEC in MP-MH network with $H=3$, $P=4$, $RTT=12$, and with
	$[\epsilon_{11}=\epsilon_{1}$ $\epsilon_{12}=0.6$ $\epsilon_{13}=0.3$; $\epsilon_{21}=0.8$ $\epsilon_{22}=\epsilon_{1}$ $\epsilon_{23}=\epsilon_{1}$; $\epsilon_{31}=0.2$ $\epsilon_{32}=\epsilon_{2}$ $\epsilon_{33}=0.7$; $\epsilon_{41}=\epsilon_{2}$ $\epsilon_{42}=0.4$ $\epsilon_{43}=\epsilon_{2}]$, while the erasure probabilities of $\epsilon_{1}$ and $\epsilon_{2}$ vary in the range of $[0.1 \;  0.8]$.}}
	\label{fig:ThroughputErrorBecMPMH}
\end{figure}
\fi
\footnotetext[17]{This theoretical minimum is given considering that the processing at the nodes is immediate and that the feedback message is of negligible size. Hence, in this case, it takes three slots for the direct message to reach the receiver and one slot for the feedback message.}

\begin{theorem}\jrev{An upper bound on the throughput of AC-RLNC in MP-MH network is} \label{theoremMH}
\begin{align*}
    \eta\leq \sum_{p=1}^{P} \jrevm{\left(r_{G_p}(t^-) - l(r_{G_p}(t),r_{G_p}(t^-))\right)},
\end{align*}
where $l(\cdot,\cdot)$ is the Bhattacharyya distance.
\end{theorem}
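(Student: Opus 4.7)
The plan is to reduce the MP-MH setting to an equivalent MP setting, and then invoke \Cref{theoremMP} directly on each global path. First I would fix the global matching produced by the decentralized balancing procedure of \Cref{greedy_balancing}, which partitions the $P \cdot H$ physical links into $P$ global paths indexed by $p$. Along the $p$-th global path, the per-hop rates are $r_{G(p,1)1}, r_{G(p,2)2}, \ldots, r_{G(p,H)H}$; by the min-cut max-flow theorem \cite{mincutmaxflowtheorem,LunMedKoeEff2008}, in the large blocklength regime the end-to-end rate attainable along that path is exactly $r_{G_p} = \min_{h=1,\ldots,H} r_{G(p,h)h}$, since the sequence of links behaves as a single BEC with effective erasure probability $1 - r_{G_p}$.

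Next I would argue that, from the sender's point of view, once the global paths are defined and the selective recoding at intermediate nodes ensures that new RLNC's (resp. FEC/FB-FEC RLNC's) are not mixed with repetitions belonging to different global paths, the MP-MH channel is indistinguishable from a MP channel with $P$ parallel BECs of rates $r_{G_1}, \ldots, r_{G_P}$. This reduction is the key step, and it is essentially what the comment preceding \Cref{theoremMH} claims. The main obstacle here is making this equivalence precise despite the multi-hop delay: the retransmission criterion \eqref{criterion} is computed based on acknowledgments arriving after a global $RTT$ that already accounts for $H t_{\text{prop},h}$, so the sender's tracking of $r_{G_p}(t^-)$ is operationally identical to tracking a single-hop rate at delay $t^- = t - RTT$. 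Provided the intermediate nodes operate in a memoryless pass-through fashion for the purposes of rate estimation (which the balancing algorithm ensures by preserving the one-to-one link matching), the channel variations seen by the sender on the $p$-th global path are statistically those of a single BEC of rate $r_{G_p}(t)$.

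Having reduced to the MP setting, I would then apply \Cref{theoremMP} verbatim to the rates $\{r_{G_p}(t^-)\}_{p=1}^P$ and $\{r_{G_p}(t)\}_{p=1}^P$. In particular, the single-path bound \cite[Theorem 1]{cohen2019adaptive} gives
\begin{equation*}
r_{G_p}(t) \leq r_{G_p}(t^-) + \frac{\sqrt{V_{G_p}(t)}}{RTT-1+m_p},
\end{equation*}
with $V_{G_p}(t)$ the variance of the effective BEC along the $p$-th global path over one $RTT$ window. Using this inequality inside the Bhattacharyya coefficient \eqref{eq:BC} with $W(y\vert c') = r_{G_p}(t^-)$ and $W(y\vert c) = r_{G_p}(t)$, summing the resulting per-path bounds, and recalling that $l(\cdot,\cdot) = -\ln BC(\cdot,\cdot)$ yields
\begin{equation*}
\eta \leq \sum_{p=1}^{P} \bigl(r_{G_p}(t^-) - l(r_{G_p}(t), r_{G_p}(t^-))\bigr),
\end{equation*}
which is the desired inequality. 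The corollary for BEC then follows by substituting $r_{G_p}(t^-) = 1 - \max_h \epsilon_{G(p,h)h}$ and computing $V_{G_p}^{BEC}(t)$ as in \Cref{BEC_MP}.
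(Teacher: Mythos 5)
Your proposal is correct and follows essentially the same route as the paper: observe that once the global matching is fixed, the MP-MH network reduces (from the sender's perspective, via min-cut and recoding at intermediate nodes) to an MP network with per-path rates $r_{G_p}=\min_{h}r_{G(p,h)h}$, and then apply \Cref{theoremMP} verbatim to those rates. The paper states this reduction tersely and cites \Cref{theoremMP} as a ``direct consequence''; you merely elaborate the same argument with additional operational detail about the RTT accounting and selective recoding.
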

\begin{proof}
Given the rate of each global path, $r_{G_p}$ $\forall p\in\{1;\ldots;P\}$, the proof of the upper bound on the throughput in the MP-MH network is a direct consequence of \Cref{theoremMP}
\end{proof}
\begin{corollary}{\bf BEC.}\label{BEC_MH}
An upper bound on the throughput of AC-RLNC for BEC in the MP-MH network is
\begin{eqnarray*}
    \eta^{BEC}\leq \sum_{p=1}^{P} \jrevm{\left(r_{G_p}^{BEC}(t) - l(r_{G_p}^{BEC}(t),r_{G_p}^{BEC}(t^-))\right)}.
\end{eqnarray*}
\end{corollary}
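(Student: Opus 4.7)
The plan is to reduce the MP-MH case to the already-proved MP case by exploiting the equivalence noted just before the statements of \Cref{theoremMH} and \Cref{BEC_MH}, namely that, once the decentralized matching of \ifdouble\Cref{greedy_balancing}\else Algorithm 2\fi{} has fixed the global paths, the network seen from the sender behaves, throughput-wise, as a multipath single-hop network whose $p$-th virtual path has rate $r_{G_p}=\min_{h=1\ldots H} r_{G(p,h)h}$. Since each physical hop is a BEC, the virtual $p$-th global path is again a BEC with erasure probability $\epsilon_{G_p} = 1-r_{G_p}$, so the hypotheses of \Cref{BEC_MP} are satisfied verbatim.

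First, I would invoke \Cref{theoremMH} to get the generic (non-BEC-specific) bound $\eta\leq\sum_{p=1}^{P}\bigl(r_{G_p}(t^-) - l(r_{G_p}(t),r_{G_p}(t^-))\bigr)$. Then, specializing the channel transition law $W(\cdot\mid\cdot)$ to the BEC on each global path, I would substitute $r_{G_p}^{BEC}(t^-)=1-\epsilon_{G_p}$ and use the BEC variance during one RTT window,
\[
V_{G_p}^{BEC}(t) = \text{RTT}\,(1-r_{G_p}^{BEC}(t^-))\,r_{G_p}^{BEC}(t^-),
\]
together with the single-path rate-deviation bound
\[
r_{G_p}^{BEC}(t)\leq r_{G_p}^{BEC}(t^-) + \frac{\sqrt{V_{G_p}^{BEC}(t)}}{\text{RTT}-1+m_{G_p}},
\]
which are exactly the ingredients used in \Cref{BEC_MP}. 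Summing the Bhattacharyya coefficient in \eqref{eq:BC} over $t=0,\ldots,\text{RTT}-1$ with $W(y\mid c') = r_{G_p}^{BEC}(t^-)$ and $W(y\mid c) = r_{G_p}^{BEC}(t)$ yields, path by global path, the stated bound.

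The only non-routine step is justifying that the rate of a global path can legitimately be replaced by its min-cut $r_{G_p}$ in the Bhattacharyya analysis. This is the observation already used for \Cref{theoremMH}: the selective mixing of new packets on one hand and of FEC/FB-FEC packets on the other hand ensures that each intermediate node preserves the structure along each global path, so that erasure events accumulate only through the bottleneck link. With this reduction in hand, the BEC specialization is obtained by summing the single-path bound over $p=1,\ldots,P$, giving
\[
\eta^{BEC}\leq \sum_{p=1}^{P}\left(1-\epsilon_{G_p} - l\bigl(r_{G_p}^{BEC}(t),r_{G_p}^{BEC}(t^-)\bigr)\right),
\]
which coincides with the statement since $r_{G_p}^{BEC}(t^-)=1-\epsilon_{G_p}$. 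The main obstacle is therefore purely conceptual rather than computational: formalizing the "virtual MP network" equivalence under the selective recoding rule so that the per-path Bhattacharyya bound from \Cref{BEC_MP} can be applied without further modification.
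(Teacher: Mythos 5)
Your proof is correct and matches the paper's (implicit) argument: the paper gives no explicit proof of \Cref{BEC_MH}, relying on the preceding remark that after the matching the MP-MH network is throughput-equivalent to a single-hop MP network with per-path rates $r_{G_p}=\min_{h} r_{G(p,h)h}$, after which \Cref{BEC_MP} applies verbatim; you fill in exactly those steps. One small point worth flagging: your derived first term is $r_{G_p}^{BEC}(t^-)=1-\epsilon_{G_p}$, consistent with \Cref{theoremMH} and \Cref{BEC_MP}, whereas the corollary as printed has $r_{G_p}^{BEC}(t)$ — this looks like a typo in the paper (it should be $t^-$), so your version is actually the more faithful one.

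Two minor remarks on your justification of the reduction. First, the claim that the effective global path ``is again a BEC with erasure probability $\epsilon_{G_p}=1-r_{G_p}$'' is the same idealization the paper itself makes (the concatenated channel with intermediate RLNC recoding is not literally a memoryless BEC; the equivalence holds at the level of achievable rate via the min-cut max-flow theorem, which is what the bound needs). Second, attributing the equivalence to the ``selective mixing'' rule is slightly misplaced: selective mixing is the delay-oriented refinement, while the rate equivalence to the min-cut comes from plain RLNC recoding at the intermediate nodes along each global path. Neither issue changes the validity of the reduction or the resulting bound.
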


\ifdouble
\begin{figure}
\centering
\includegraphics[trim=0cm 0.4cm 0cm 0cm,width=1\columnwidth]{MPMH_ThroughputUpper_2_lower}
	\caption{\jrev{Throughput upper and lower bounds for BEC in MP-MH network with $H=3$, $P=4$, $RTT=12$, and with
	$[\epsilon_{11}=\epsilon_{1}$ $\epsilon_{12}=0.6$ $\epsilon_{13}=0.3$; $\epsilon_{21}=0.8$ $\epsilon_{22}=\epsilon_{1}$ $\epsilon_{23}=\epsilon_{1}$; $\epsilon_{31}=0.2$ $\epsilon_{32}=\epsilon_{2}$ $\epsilon_{33}=0.7$; $\epsilon_{41}=\epsilon_{2}$ $\epsilon_{42}=0.4$ $\epsilon_{43}=\epsilon_{2}]$, while the erasure probabilities of $\epsilon_{1}$ and $\epsilon_{2}$ vary in the range of $[0.1 \;  0.8]$.}}
	\label{fig:ThroughputMPMH}
\end{figure}
\fi

\begin{figure}
\centering
\ifdouble
\includegraphics[trim=0cm 0.7cm 0cm 0cm,width=1\columnwidth]{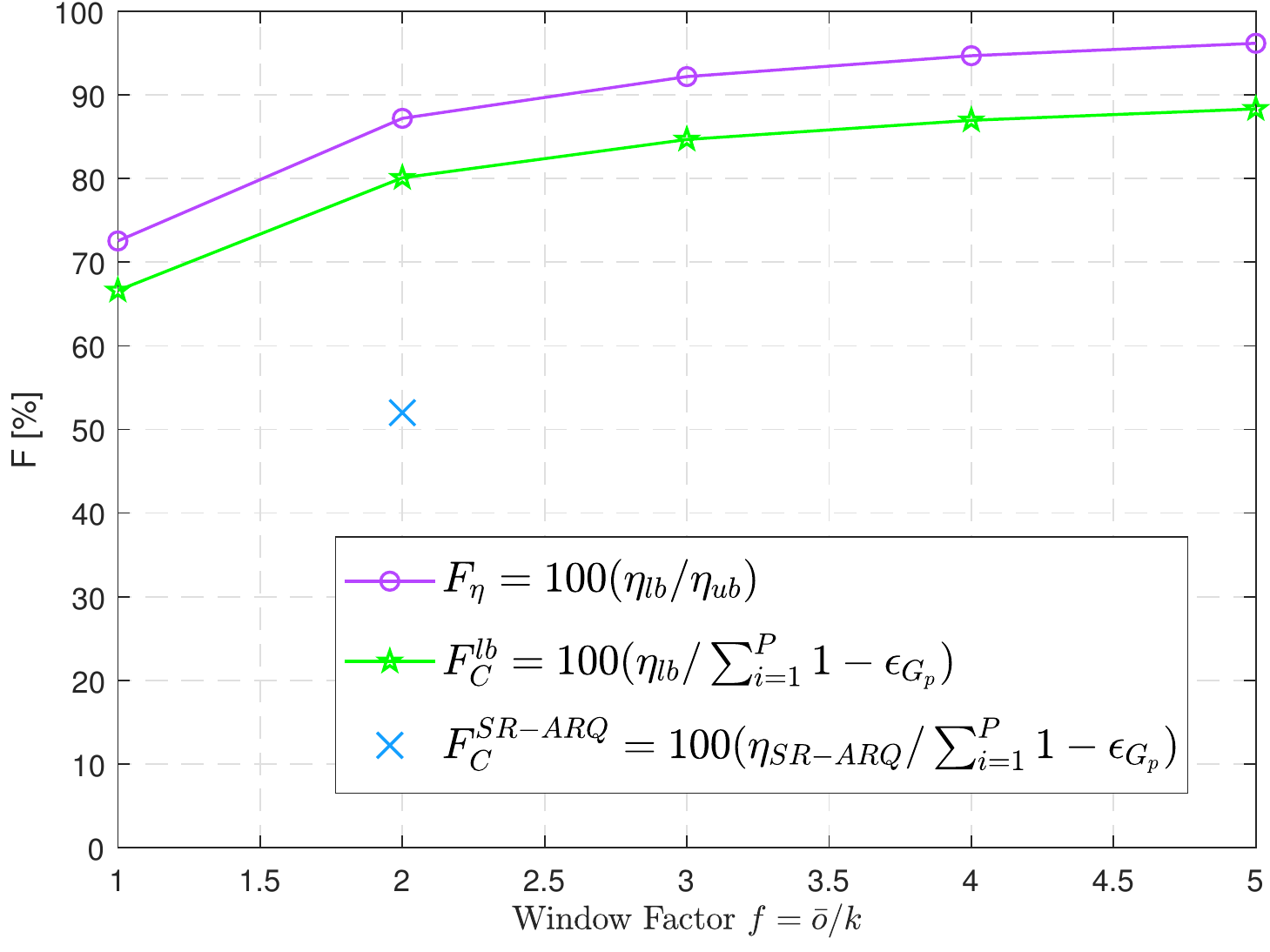}
\else
\includegraphics[trim=0cm 0.7cm 0cm 0cm,width=0.45\columnwidth]{MPMH_factor_rate2}
\fi
	\caption{\jrev{Throughput factor in MP-MH network with $H=3$, $P=4$ and $RTT=12$ for BEC channels with erasure probabilities of \Cref{ex_balancing}. Note that the performance ratios get closer to $100\%$ when the window size limit increases, hence showing that the upper and lower bounds get tighter for larger windows. In this figure, the solid lines correspond to the bounds while the SR-ARQ point comes from the numerical validation of the protocol.}}
	\label{fig:ThroughpuFactorMPMH}
\end{figure}

In \ifdouble \Cref{fig:ThroughputErrorBecMPMH}\else \Cref{fig:ThroughputErrorBecMPMH}-(a)\fi, the throughput upper bound of AC-RLNC protocols in a MP-MH network with $H=3$ and $P=4$ is shown as function of $RTT$ for BEC with erasure probabilities corresponding to \Cref{ex_balancing}. We can note that for this specific network, the MP-MH suggested solution could achieve around $85\%$ of the BEC capacity. With regards to the MP network, the MP-MH bound is worse as rate is spoilt due to the bottleneck effect. Indeed, even considering an optimal matching, residual bottlenecks remain, thus decreasing the achievable rate. Modifying the rates of \Cref{ex_balancing} to $r_{21}=0.4$, $r_{31}=0.6$, $r_{33}=0.4$ and $r_{43}=0.6$, bottlenecks are entirely removed, leading to  the purple result in \ifdouble \Cref{fig:ThroughputErrorBecMPMH}\else \Cref{fig:ThroughputErrorBecMPMH}-(a)\fi.
\\The results of \ifdouble\Cref{fig:ThroughputMPMH} \else \Cref{fig:ThroughputErrorBecMPMH}-(b) \fi allow to compare the upper bound to the actual achieved throughput. Comparing the green curve to the blue one, one can see that the achieved throughput is close to the upper bound. In yellow and purple, the results are shown when no recoding is performed at the intermediate nodes. Namely, packets are forwarded from hop to hop. The achieved results are much worse, both for the upper bound and the simulations results. This can be explained by the fact that, when packets are forwarded, the rate of each global path $\tilde{r}_{G_p}$ is not longer the minimum rate but the product of the rate of each links, i.e.,  $\tilde{r}_{G_p} = \prod_{h=1...H}r_{G(p,h)h}$, $\forall p=\{1;\ldots;P\}$, as in this setting, a packet is received only if all the transmissions of the global path are successful.

\subsubsection{\jrev{A Lower Bound for the Throughput}}\label{SecThroughputLowMH}
\jrev{Once the rates of the global paths are defined, the bounds derived in \Cref{SecThroughputLowMP} can be directly generalized to obtain a lower bound for the throughput in the MH-MP network.}

\jrev{\begin{theorem}{A lower bound on the throughput of AC-RLNC in MP-MH network is} \label{theoremMHlb}
\begin{align*}
    \eta_{lb} \geq \sum_{p=1}^{P}  \jrevm{\left(r_{G_p,up} - \frac{ n^{\text{EW}}_{G_p}}{n^{w}_{G_p}}\right)}.
\end{align*}
\end{theorem}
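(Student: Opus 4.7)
The plan is to leverage the equivalence, stated just before \Cref{theoremMH}, between the MP-MH network and a MP network with effective per-path rates $r_{G_p} = \min_{h=1,\ldots,H} r_{G(p,h)h}$. Once the matching produced by \Cref{greedy_balancing} is fixed, each global path behaves, from the sender's perspective, as a single BEC path whose rate is the min-cut along that global path. The throughput lower bound should therefore be obtained by transporting the argument of \Cref{theoremMPlb} to this equivalent MP representation, with $r_{p,up}$, $n^{\text{EW}}_p$ and $n^{w}_p$ replaced by their global-path counterparts $r_{G_p,up}$, $n^{\text{EW}}_{G_p}$ and $n^{w}_{G_p}$.

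Concretely, the proof would proceed in three steps. First, I would invoke the min-cut/max-flow argument recalled in \Cref{rel} together with the selective recoding at intermediate nodes described in \Cref{MHCodeConstruction} to justify formally that the sender may treat each global path as a virtual BEC whose erasure probability is $1 - r_{G_p}$; under this identification, the a priori FEC, a posteriori FB-FEC and size-limit mechanisms operate on the global paths exactly as they do on the physical paths in the MP case. Second, I would rewrite the per-path upper bound $r_{G_p,up}$ by plugging $r_{G_p}$ and the corresponding variance into the right-hand side of \eqref{eq:max_rate}. Third, I would adapt the end-of-window analysis of \Cref{theoremMPlb}: $n^{\text{EW}}_{G_p}$ is bounded above using the $68$--$95$--$99.7$-rule argument (with the Berry-Esseen correction $O(1/\sqrt{t})$) applied to the independent erasure events on the $p$-th global path, yielding $n^{\text{EW}}_{G_p} \leq (1 - \mathrm{erf}(1/\sqrt{2}))(1-\bar{\epsilon}_{G_p})\,\text{RTT} + O(1/\sqrt{t})$, while the total window length is lower bounded by $n^{w}_{G_p} \geq (k + k\bar{\epsilon}_{G_p} + k\bar{\epsilon}_{G_p}^2)f + 1$ as in \eqref{eq:nw}. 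Summing $r_{G_p,up} - n^{\text{EW}}_{G_p}/n^{w}_{G_p}$ over $p=1,\ldots,P$ then yields the claimed bound.

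The main obstacle I expect lies in the first step, namely in making the reduction to the equivalent MP network fully rigorous. The equivalence relies on two non-trivial ingredients: the selective recoding at intermediate nodes must preserve the algebraic independence of new packets and FEC packets so that the global-path view is faithful, and the delayed feedback along a chain of hops must still be compatible with the tracking of each $r_{G_p}$. Moreover, one must argue that residual bottlenecks are already absorbed into $r_{G_p} = \min_h r_{G(p,h)h}$, so that no additional loss term appears in the lower bound. Once that reduction is justified, the remaining end-of-window and variance computations are essentially mechanical and mirror those of \Cref{theoremMPlb}.
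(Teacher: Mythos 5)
Your proposal matches the paper's proof, which is a one-line reduction: once the matching fixes the global-path rates $r_{G_p}=\min_h r_{G(p,h)h}$, the MP-MH network is treated as an equivalent MP network and \Cref{theoremMPlb} is invoked directly on those rates. Your more detailed unpacking of the end-of-window and variance computations simply re-derives what is already contained in the MP theorem, so the argument coincides with the paper's.
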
}
\begin{proof}
\jrev{Given the rate of each global path, $r_{G_p}$ $\forall p\in\{1,\ldots,P\}$, the proof of the upper bound on the throughput in the MP-MH network is a direct consequence of \Cref{theoremMPlb}.}
\end{proof}
\jrev{Similarly to the MP network, \ifdouble \Cref{fig:ThroughputErrorBecMPMH} \else \Cref{fig:ThroughputErrorBecMPMH}-(a) \fi compares the capacity to the upper and lower bounds. One can observe that increasing the RTT does not decrease the performances, and that matching the paths allows to obtain results closer to the capacity. In \ifdouble\Cref{fig:ThroughputMPMH}\else \Cref{fig:ThroughputErrorBecMPMH}-(b)\fi, these bounds are validated numerically for a 3-hop 4-path network with a RTT of $12$ time slots. One can clearly observe the gain obtained with path matching. Finally, in \Cref{fig:ThroughpuFactorMPMH}, the performance factors $F_{\eta}^{\text{AC-RLNC}} = 100 \cdot \frac{\eta_{lb}}{\eta_{ub}}$ and $F_{\text{capacity}}^{\text{AC-RLNC}} = 100 \cdot \frac{\eta_{lb}}{\sum_{i=1}^{P} 1-\epsilon_p}$ compare the lower bound to the upper bound and the capacity respectively. From this figure, one can observe that for larger window size limit, the capacity may be achieved while the upper and lower bounds get tighter\jrevm{\footref{foot:numerical_evaluation}}. The numerical validation of the SR-ARQ protocol, for this network, gives results that are at $26\%$ of the capacity when applied to one unique path and at $52\%$ when applied to all global paths (as one can check from \Cref{MH_perf}).}

\ifdouble
\begin{figure}
\centering
\includegraphics[trim=0cm 0.7cm 0cm 0cm,width=1\columnwidth]{mh_mean_delay_4_lower}
	\caption{\jrev{Mean in-order delivery delay upper bounds for BEC in MP-MH network with $H=3$, $P=4$, $RTT=12$, and with
	$[\epsilon_{11}=\epsilon_{1}$ $\epsilon_{12}=0.6$ $\epsilon_{13}=0.3$; $\epsilon_{21}=0.8$ $\epsilon_{22}=\epsilon_{1}$ $\epsilon_{23}=\epsilon_{1}$; $\epsilon_{31}=0.2$ $\epsilon_{32}=\epsilon_{2}$ $\epsilon_{33}=0.7$; $\epsilon_{41}=\epsilon_{2}$ $\epsilon_{42}=0.4$ $\epsilon_{43}=\epsilon_{2}]$, while the erasure probabilities of $\epsilon_{1}$ and $\epsilon_{2}$ vary in the range of $[0.1 \;  0.8]$.}}
	\label{fig:meam_delay_MPMH}
\end{figure}
\else\fi

\begin{figure}
\centering
\ifdouble
\includegraphics[trim=0cm 0.4cm 0cm 0cm,width=1\columnwidth]{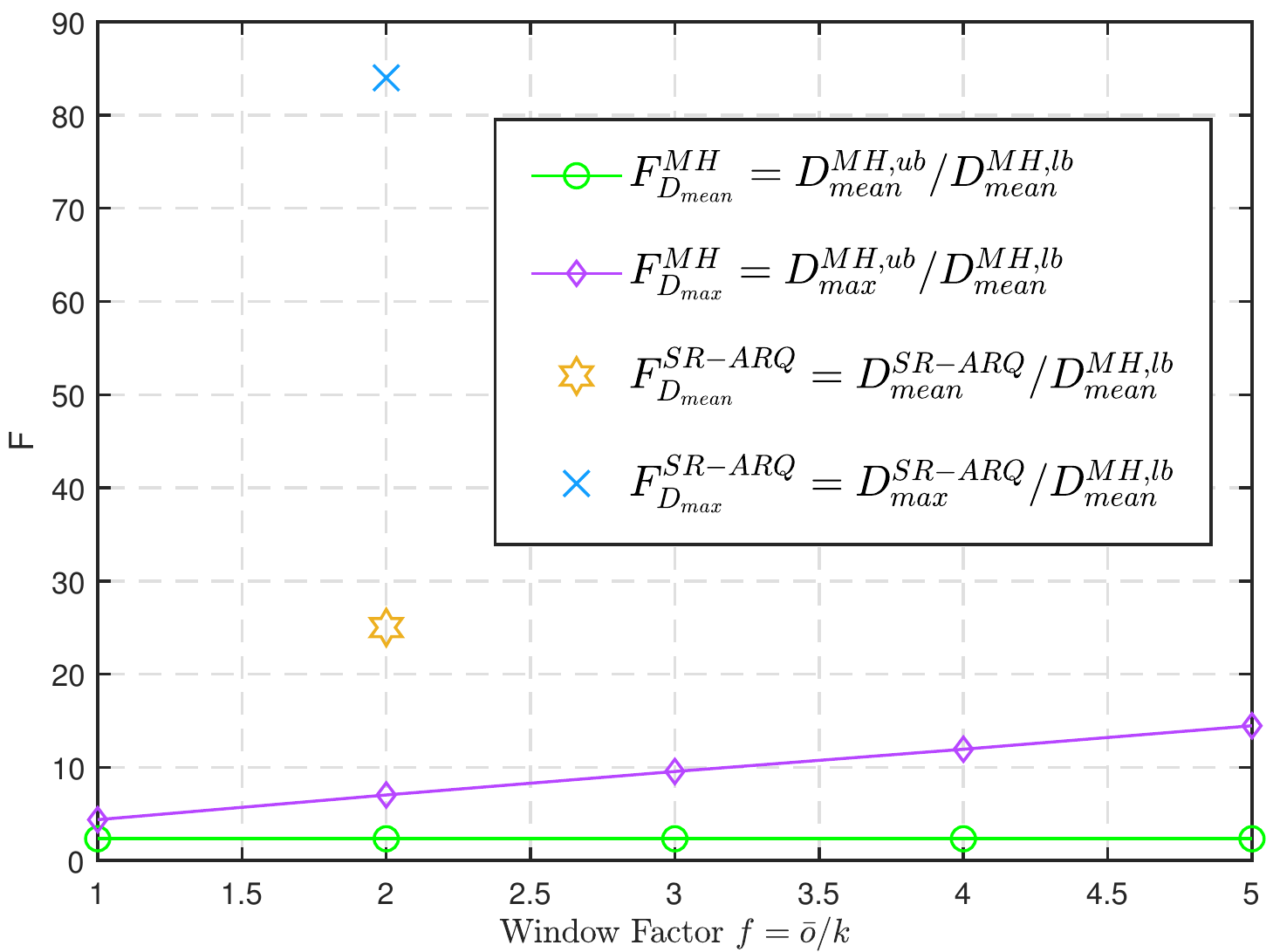}
\else
\includegraphics[trim=0cm 0.7cm 0cm 0cm,width=0.45\columnwidth]{MPMH_delay_factor3}
\fi
	\caption{\jrev{Delay factor in MP-MH network with $H=3$, $P=4$ and $RTT=12$ for BEC's with erasure probabilities of \Cref{ex_balancing}. Note that as for the multipath case, the mean delay bound does not depend on this window factor while the maximum one increases linearly with it. In this figure, the solid lines correspond to the bounds while the SR-ARQ points come from the numerical validation of the protocol.}}
	\label{fig:max_delay_MPMH_lb}
\end{figure}

\ifdouble
\begin{figure*}
\centering
\includegraphics[trim=0cm 0.0cm 0cm 0cm,width=1\textwidth]{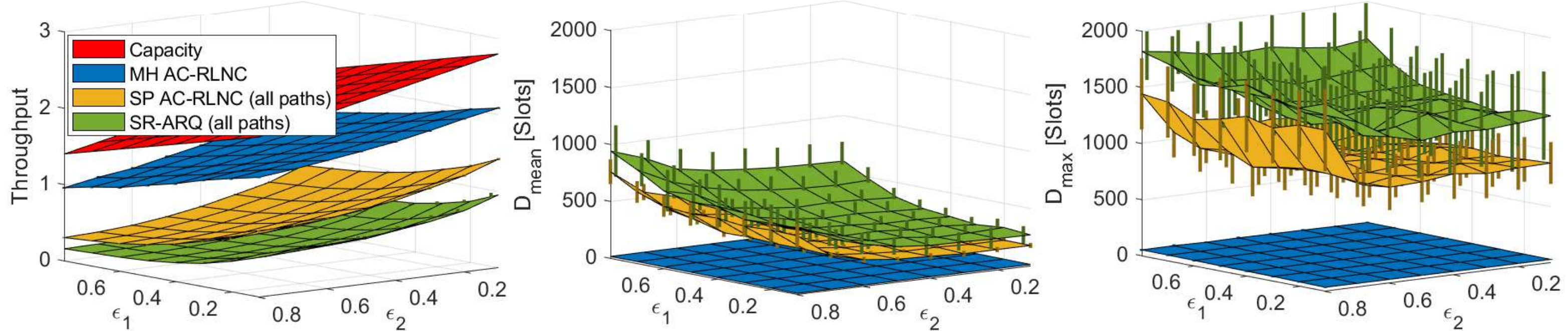}
\includegraphics[trim=0cm 0.0cm 0cm 0cm,width=1\textwidth]{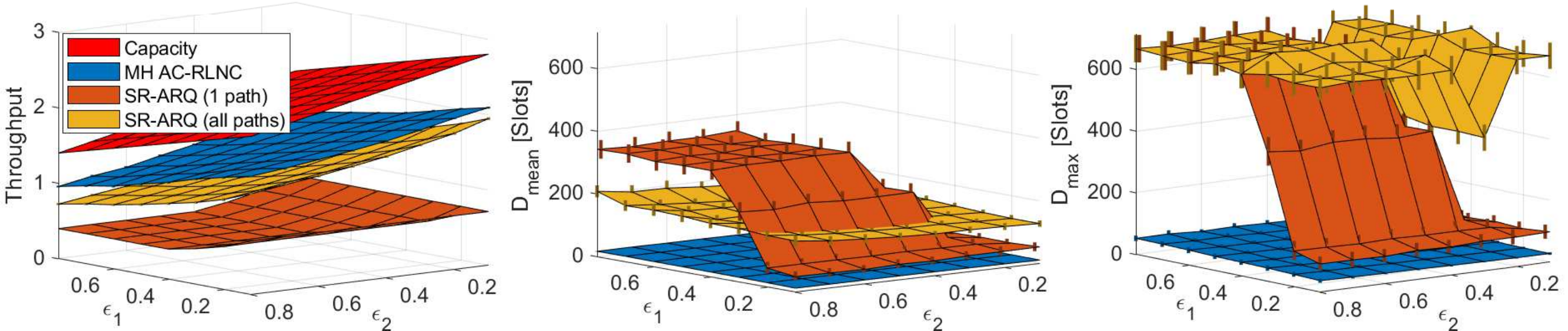}
    \caption{Performances of the multi-hop protocol with $RTT=12$ [slots], $th=0$ and $\bar{o}=2k$, on 4 paths, averaged on $150$ iterations. The top results are for end-to-end feedback (from the receiver to the sender), while the bottom results are for hop-by-hop feedback (from each intermediate node and receiver). The vertical bars correspond to the standard deviation of the simulated results.}
    \label{MH_perf}
\end{figure*}
\fi

~~\\\subsubsection{An Upper Bound for the Mean and Maximum In-Order Delivery Delay}
For both the mean and maximum in-order delivery delay, the analyses of \Cref{Analytical_results_MP} still hold when considering
$$\bar{\epsilon}_{MH} = \frac{1}{P} \sum_{p=1}^{P}\left(1- r_{G_p} \right) ,$$
instead of $\bar{\epsilon}$. Indeed, as discussed above, from the sender's point of view, the matched MP-MH network is equivalent to a MP network with $r_{G_p}$ as path's rate.
\jrevm{Thus, the upper bounds for the mean and max in-order delay in the MP-MH network, as given in the Theorems below, are a direct consequence from the proofs given in \Cref{Analytical_results_MP}.}
\jrevm{
\begin{theorem}
The mean in-order delivery delay $D_{\text{mean}}$ of AC-RLNC in MP-MH network is upper-bounded as:
\ifdouble
\begin{multline}
D_{mean} \leq \lambda D_{mean}^{no\mbox{ }feedback}+(1-\lambda )\\
(D_{mean}^{nack\mbox{ }feedback} +D_{mean}^{ack\mbox{ }feedback} ),
\end{multline}
\else
\begin{equation}
D_{mean} \leq \lambda D_{mean}^{no\mbox{ }feedback}+(1-\lambda )\\
(D_{mean}^{nack\mbox{ }feedback} +D_{mean}^{ack\mbox{ }feedback} ),
\end{equation}
\fi
with $\lambda$ denoting the fraction of time without feedback compared to the total time of transmission, and with the mean in-order delivery delay in case of the different feedback states being bounded by \eqref{no_FB_BEC}, \eqref{ACK_BEC}, \eqref{NACK_BEC} with $\bar{\epsilon}_{MH}$ instead of $\bar{\epsilon}$.
\end{theorem}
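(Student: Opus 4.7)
The plan is to exploit the equivalence between the matched MP-MH network and a suitably defined MP network, so that the bound can be imported from Theorem \ref{upper_mean_mp} (the MP mean delay result) essentially by substitution. More precisely, I would first argue that, once the decentralized balancing of \Cref{greedy_balancing} has fixed the global paths, the sender's view of the network coincides with that of an MP channel whose $p$-th path has rate $r_{G_p} = \min_{h=1\ldots H} r_{G(p,h)h}$. This reduction is justified because (i) RLNC coding is carried out independently along each global path (apart from the selective mixing, which preserves the per-path DoF count), (ii) feedback acknowledgments convey per-global-path information, and (iii) the min-cut of each global path dictates the effective per-slot success probability. Consequently the erasure statistics experienced by the AC-RLNC bookkeeping variables ($md_g$, $ad_g$, $\Delta$) are those of a MP network with mean erasure $\bar{\epsilon}_{MH}$.

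Next I would introduce the same virtual path abstraction used in the MP proof: collapse the $P$ global paths into a single virtual path with erasure $\bar{\epsilon}_{MH}$ and with $k_p = k/P$ new packets per window and $m_e = \bar{o}\,\bar{\epsilon}_{MH}$ required DoFs. Because $\bar{\epsilon}_{MH}$ is again an average of independent Bernoulli erasures, the Hoeffding-type bound on its maximum deviation $\bar{\epsilon}_{\max}$ over $2RTT$ slots applies verbatim, simply substituting $r_{G_p}$ for $r_p$ in the variance expression. The two probability computations used in Theorem \ref{upper_mean_mp}, namely the probability $\mathbb{P}_{E\bar{o}W}$ that the window closes and the probability $\mathbb{P}_{\Delta<0}$ that retransmission is triggered, both depend on the channel only through $\bar{\epsilon}$ and $\bar{\epsilon}_{\max}$, so they carry over to the MP-MH setting with the replacement $\bar{\epsilon} \mapsto \bar{\epsilon}_{MH}$.

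With those ingredients in place, the three feedback-state bounds \eqref{no_FB_BEC}, \eqref{NACK_BEC}, \eqref{ACK_BEC} are re-derived by the same case analysis as in the MP proof (conditioning on whether a feedback slot reports an ACK, a NACK, or no feedback, and then on whether the window has reached its size limit and whether $\Delta<0$). Each case reuses the recursion $\tfrac{1}{1-\bar{\epsilon}_{\max}}$ for the expected number of slots to reach a successful transmission, so no new calculation is needed. Weighting these three conditional bounds by the fraction $\lambda$ of slots lacking feedback (and $1-\lambda$ otherwise) yields the total bound as a convex combination, which is exactly the claimed inequality.

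The main obstacle I foresee is justifying rigorously that the virtual-path abstraction commutes with the matching step of \Cref{greedy_balancing}, particularly when selective mixing is enabled at intermediate nodes: I would have to check that mixing new packets separately from FEC/FB-FEC packets does not introduce DoF dependencies that invalidate the i.i.d.~Bernoulli model for $\bar{\epsilon}_{MH}$. I expect this to hold because selective mixing preserves the partition between new and repetition packets, but it deserves an explicit verification before declaring the reduction transparent; once that is done, the theorem follows as a corollary of Theorem \ref{upper_mean_mp}.
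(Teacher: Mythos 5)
Your proposal takes essentially the same approach as the paper: once the global paths are fixed by the decentralized matching, the MP-MH network is treated, from the sender's viewpoint, as an MP network with per-path rate $r_{G_p}=\min_{h}r_{G(p,h)h}$, so the MP mean-delay bound of Theorem~\ref{upper_mean_mp} applies with $\bar{\epsilon}_{MH}$ in place of $\bar{\epsilon}$. The caveat you raise about selective mixing preserving the i.i.d.\ Bernoulli model for the virtual-path erasures is a fair concern that the paper itself does not address explicitly -- it simply asserts the equivalence -- so your instinct to flag it is reasonable even though the paper treats the reduction as transparent.
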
}
\jrevm{\begin{theorem}
The maximum in-order delivery delay $D_{\max}$ of AC-RLNC in MP-MH network is upper-bounded with error probability $P_e$ as:
\begin{equation}
 D_{\max}\leq \left \lceil \frac{RTT}{2} \right \rceil +\left \lceil \frac{T_{\max}}{P}  \right \rceil,
\end{equation}
where $T_{\max}$ is computed through \eqref{eq:Tmax1} with $\bar{\epsilon}_{MH}$ instead of $\bar{\epsilon}$.
\end{theorem}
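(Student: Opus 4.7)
The plan is to reduce the maximum-delay analysis in the MP-MH setting to the MP result established in the previous subsection, by exploiting the equivalence, emphasized right before the theorem, between the matched MP-MH network and a MP network whose per-path rates are the bottleneck rates $r_{G_p}=\min_{h=1,\ldots,H} r_{G(p,h)h}$. Once this equivalence is in hand, the bound on $T_{\max}$ derived by neglecting the FEC/FB-FEC mechanisms and applying Hoeffding's inequality transfers almost verbatim, with only the mean and maximal erasure probabilities needing substitution.

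First, I would argue that once \Cref{greedy_balancing} has defined the $P$ global paths and the selective-mixing recoding rule of \Cref{MHCodeConstruction} is enforced at each intermediate node, a coded packet injected by the sender on the $p$th global path reaches the receiver only if none of the $H$ hops composing it erases the packet, so that from the sender's perspective the $p$th global path behaves as a single BEC with success rate $r_{G_p}$ and erasure probability $1-r_{G_p}$, and the $P$ global paths are mutually independent. Consequently, the mean erasure probability of this equivalent MP network is exactly $\bar{\epsilon}_{MH}=\frac{1}{P}\sum_{p=1}^{P}(1-r_{G_p})$, and the worst per-path erasure probability becomes $\max_{p}(1-r_{G_p})$, which is the analogue of $\epsilon_{\max}$ in the MP analysis.

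Second, I would invoke the MP maximum-delay derivation without modification. That derivation relies on two ingredients that depend only on the per-path and mean erasure probabilities: the worst-case transmission pattern in which the FEC and FB-FEC mechanisms are neglected (so the first $\bar{o}$ slots carry new DoFs, after which the same combination is repeated under the size-limit rule) and the Hoeffding concentration argument controlling the probability that, out of $T_{\max}-1$ subsequent slots, fewer than $\bar{o}-1$ are successful. Substituting $\bar{\epsilon}$ by $\bar{\epsilon}_{MH}$ and $\epsilon_{\max}$ by $\max_{p}(1-r_{G_p})$ in \eqref{eq:Tmax1} yields the announced $T_{\max}$; the $\lceil RTT/2\rceil$ propagation term is unaffected by the matching, and the $\lceil T_{\max}/P\rceil$ term still reflects the fact that $P$ transmissions occur per slot across the $P$ global paths.

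The hard part will be rigorously justifying that the matched MP-MH channel can be treated as a memoryless MP channel with rates $r_{G_p}$ from the sender's viewpoint. Intermediate nodes may temporarily buffer packets when a downstream link is momentarily slower than an upstream one, and such buffering could in principle inflate the in-order delay beyond what the MP bound captures. To handle this, I would rely on the fact that \Cref{natural_matching} minimizes precisely such rate mismatches, so that in the long run each global path delivers packets at its min-cut rate; any transient queueing at intermediate nodes is absorbed within one $RTT$ and does not contribute an additional term to the worst-case delay beyond $\lceil RTT/2\rceil$. Once this equivalence is formalized, the remainder is a direct application of the MP maximum-delay theorem with the substitution $\bar{\epsilon}\leftarrow\bar{\epsilon}_{MH}$.
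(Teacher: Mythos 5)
Your overall reduction — treat each matched global path as a BEC with rate $r_{G_p}$, then apply the MP maximum-delay bound with $\bar{\epsilon}_{MH}$ in place of $\bar{\epsilon}$ — is exactly the paper's argument; the paper simply states that the result is ``a direct consequence'' of the MP analysis once the sender views the matched network as an equivalent MP channel with per-path rates $r_{G_p}$.

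However, your justification of that equivalence is wrong, and in a way the paper itself flags. You write that ``a coded packet injected by the sender on the $p$th global path reaches the receiver only if none of the $H$ hops composing it erases the packet, so that ... the $p$th global path behaves as a single BEC with success rate $r_{G_p}=\min_h r_{G(p,h)h}$.'' The event ``none of the $H$ hops erases the packet'' has probability $\prod_{h=1}^{H} r_{G(p,h)h}$, not $\min_h r_{G(p,h)h}$; the min-cut rate $r_{G_p}$ is strictly larger than the product whenever there is more than one lossy hop. The paper explicitly distinguishes these two regimes in the discussion following \Cref{theoremMH}: if intermediate nodes merely \emph{forward} packets, the effective rate of a global path is the product $\tilde{r}_{G_p}=\prod_h r_{G(p,h)h}$, whereas the stronger rate $r_{G_p}=\min_h r_{G(p,h)h}$ is only attained because intermediate nodes \emph{recode} (the selective-mixing mechanism of \Cref{MHCodeConstruction}), which by the min-cut/max-flow argument of \cite{mincutmaxflowtheorem,LunMedKoeEff2008} lets each global path deliver DoFs at its bottleneck rate. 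Your appeal to a per-packet ``survives all hops'' story therefore proves the weaker product bound, not the claimed min-cut bound, and the erasure events seen by the sender on a recoded global path are no longer the simple ``lost on some hop'' events you invoke in the Hoeffding step. The correct route is to invoke the recoding equivalence directly — the matched global path with recoding is, from the sender's end-to-end viewpoint, a BEC with rate $r_{G_p}$ — and then transfer the MP derivation verbatim, as you do in the rest of the argument. The remainder of your proposal (neglecting FEC/FB-FEC, Hoeffding, the $\lceil T_{\max}/P\rceil$ and $\lceil RTT/2 \rceil$ terms, and the replacement of $\epsilon_{\max}$ by $\max_p(1-r_{G_p})$ inside $\alpha$) is consistent with the paper.
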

}
In \ifdouble \Cref{fig:meam_delay_MPMH}\else \Cref{fig:ThroughputErrorBecMPMH}-(c)\fi, the mean in-order delivery delay bound (in green) is compared with two results of the MP-MH AC-RLNC protocol. One (in blue) is for the case simulated in \Cref{MHsimulation}, in which the intermediate node can recode and mix the retransmission packets between the global paths, as described in the selective mixing paragraph in \Cref{MHCodeConstruction}. The second (in orange) is for the case where intermediate nodes are not mixing the retransmission packets between the global paths. Hence, each global path is independent as in the MP network. One can notice that by mixing the retransmission packets between the global paths at the intermediate nodes, we can obtain in practice a significantly lower mean in-order delay.
\ifdouble\else
\begin{figure*}
\centering
\includegraphics[trim=0cm 0.0cm 0cm 0cm,width=1\textwidth]{MH_figure_paper_End2End_2}
\includegraphics[trim=0cm 0.0cm 0cm 0cm,width=1\textwidth]{MH_figure_paper_v3}
    \caption{Performances of the multi-hop protocol with $RTT=12$ [slots], $th=0$ and $\bar{o}=2k$, on 4 paths, averaged on $150$ iterations. The top results are for end-to-end feedback (from the receiver to the sender), while the bottom results are for hop-by-hop feedback (from each intermediate node and receiver). The vertical bars correspond to the standard deviation of the simulated results.}
    \label{MH_perf}
\end{figure*}
\fi

\subsubsection{\jrev{Lower Bounds for the Mean and Max In-Order Delivery Delay}}\label{mean_delay_mh_lb}
\jrev{For the multihop network, lower bounds on the delay can be obtained as for the multipath one, by considering the rate of each global path, as done for the throughput. \jrevm{Hence, the results of \Cref{Analytical_results_MP} can be extended to the MP-MH network as follows.}
\jrevm{\begin{theorem}
The in-order delivery delay $D$ of AC-RLNC in MP-MH network is lower-bounded as
\begin{equation*}
    D^{\text{genie-aided}} \geq \frac{\text{RTT}}{2}+\frac{1}{1-\bar{\epsilon}_{MH}},
\end{equation*}
if different packets are sent on each of the paths.
\end{theorem}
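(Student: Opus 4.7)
The plan is to reduce the MP-MH case to the MP case by invoking the global-path abstraction already established at the beginning of \Cref{Analytical_results_MH}. First, I would recall that, once the matching $G$ has been fixed by \Cref{greedy_balancing}, the sender sees each of the $P$ global paths as a single end-to-end channel whose effective rate is the min-cut $r_{G_p} = \min_{h=1,\ldots,H} r_{G(p,h)h}$. Consequently the MP-MH network is equivalent, from the delay viewpoint, to a virtual MP network having per-path rates $r_{G_p}$ and mean erasure probability $\bar{\epsilon}_{MH} = \frac{1}{P}\sum_{p=1}^{P}(1-r_{G_p})$.

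Next, I would apply the genie-aided MP lower bound of \Cref{mean_delay_mp_lb} verbatim to this virtual MP network. That bound was obtained by observing that, when different packets are sent on different paths, the probability that a given packet arrives in a given slot is exactly $1-\bar{\epsilon}$, so that even an oracle predicting which slots will be erased cannot do better than $1/(1-\bar{\epsilon})$ slots in expectation, on top of the one-way propagation delay $\text{RTT}/2$. Substituting $\bar{\epsilon}_{MH}$ for $\bar{\epsilon}$ delivers the claimed inequality
\begin{equation*}
D^{\text{genie-aided}} \geq \frac{\text{RTT}}{2} + \frac{1}{1-\bar{\epsilon}_{MH}}.
\end{equation*}

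The only step that requires care is verifying that the global-path equivalence used for throughput in \Cref{theoremMH} and \Cref{theoremMHlb} also gives the correct effective erasure probability for the delay analysis. For throughput the min-cut $r_{G_p}$ is standard, but for delay one has to argue that a packet injected into global path $p$ is delivered (in steady state) with probability $r_{G_p}$ per slot. This holds under the selective mixing/recoding scheme of \Cref{MHCodeConstruction}, because at every intermediate node the bottleneck link governs the rate at which decoded packets can be passed to the next hop; the non-bottleneck hops never become the limiting factor for delivery. The propagation term $\text{RTT}/2$ is unchanged since the RTT already accounts for the $H$ hops via $t_{\text{prop}} = H\, t_{\text{prop},h}$.

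The main obstacle is precisely this equivalence argument: one must rule out the possibility that the genie-aided sender could exploit hop-by-hop erasure realizations (rather than only the aggregate global-path erasure) to deliver a packet faster than $1/(1-r_{G_p})$ on path $p$. This can be handled by noting that even if the sender knew every hop's erasure pattern, the bottleneck hop still forces an expected inter-arrival time of at least $1/r_{G_p}$ at the receiver end, so averaging over the $P$ global paths yields the claimed lower bound.
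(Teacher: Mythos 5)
Your proposal is correct and follows essentially the same route as the paper: reduce the MP-MH network to an equivalent virtual MP network with per-path rates $r_{G_p}$ given by the min-cut along each global path, and then apply the MP genie-aided lower bound of \Cref{mean_delay_mp_lb} with $\bar{\epsilon}_{MH}$ substituted for $\bar{\epsilon}$. The additional discussion you supply — that a genie with hop-by-hop erasure knowledge still cannot beat the bottleneck rate $r_{G_p}$, and that the propagation term is unchanged because $t_{\text{prop}} = H\,t_{\text{prop},h}$ is already folded into RTT — makes explicit a step the paper leaves implicit but does not change the argument.
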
}
This leads to the results of \ifdouble \Cref{fig:meam_delay_MPMH} \else \Cref{fig:ThroughputErrorBecMPMH}-(c) \fi where the optimal genie-aided lower bound can be compared to the performance of our MP-MH AC-RLNC solution. In \Cref{fig:max_delay_MPMH_lb}, the mean and maximum delay upper bounds are compared to the optimal genie-aided lower bound. As for the multipath, the mean delay is not impacted by the window size factor while the maximum one increases with this size. The in-order delay of the SR-ARQ protocol in this network is within a $25$ factor for the mean one when applied on all global paths while the maximum one is close to $84$ times bigger than the bound.  }

\off{\jrev{In the same manner, once the rates of the global paths are defined, the bounds derived in \Cref{mean_delay_mp_lb} and \Cref{max_delay_mp_lb}, can be directed generalized to obtain lower bounds for the mean and max in-order delivery delay in the MH-MP network. Hence,
\begin{equation*}
    D_{mean}^{lb} \geq \frac{RTT/2}{1-\bar{\epsilon}_{MH}},
\end{equation*}
end
\begin{equation*}
    D_{max}^{lb}\left(t^{f}\right) \geq f \frac{RTT/2}{1-\bar{\epsilon}_{MH}}.
\end{equation*}
See \Cref{fig:meam_delay_MPMH} and \Cref{fig:max_delay_MPMH_lb}. Need to explain the results.... Comparison with approximation factor:  $F_{D_{mean}}^{\text{AC-RLNC}} = \frac{D_{mean}^{ub}}{D_{mean}^{lb}}$ (Note that this factor is equal to 2, probably since our solution is causal from the feedback acknowledgments. Hence we now have the effect of the feedback channel delay, which is in our analysis precisely the same as the delay in the forward channel, $RTT/2$) with $F_{D_{mean}}^{\text{SR-ARQ}} = \frac{D_{mean,\text{SR-ARQ}}}{D_{mean}^{lb}}$ and $F_{D_{max}}^{\text{AC-RLNC}} = \frac{D_{max}^{ub}}{D_{mean}^{lb}}$ with $F_{D_{max}}^{\text{SR-ARQ}} = \frac{D_{max,\text{SR-ARQ}}}{D_{mean}^{lb}}$ (for SR-ARQ we need to add example from the simulations).}}

\subsection{Simulation Results}\label{MHsimulation}
The performances of the MP-MH AC-RLNC protocol are compared with two other protocols, as presented in \Cref{MH_perf}.

\paragraph{Setting and protocols}
\ar{We consider the setting of \Cref{MH_MP_setting},} \ar{with H=3, P=4, with}
\ar{\begin{equation*}
    \textstyle \begin{pmatrix}
    \epsilon_{11} & \epsilon_{12} & \epsilon_{13}\\
    \epsilon_{21} & \epsilon_{22} & \epsilon_{23}\\
    \epsilon_{31} & \epsilon_{32} & \epsilon_{33}\\
    \epsilon_{41} & \epsilon_{42} & \epsilon_{43}
    \end{pmatrix}
    =
    \textstyle \begin{pmatrix}
    \epsilon_{1} & 0.6 & 0.3\\
    0.8 & \epsilon_{1} & \epsilon_{1}\\
    0.2 & \epsilon_{2} & 0.7\\
    \epsilon_{2} & 0.4 & \epsilon_{2}
    \end{pmatrix},
\end{equation*}
with $\epsilon_{1}$ and $\epsilon_{2}$ varying in the range of $[0.1 \;  0.8]$. The results are shown for a $RTT$ delay of $12$ time slots.}

The proposed protocol has been simulated with $th=0$ and $\bar{o}=2w$. First, we compare in the upper graph of \Cref{MH_perf} our solution with end to end protocols. Specifically, as for the multipath case, the SP AC-RLNC protocol and the SR-ARQ protocol are applied independently on the $P$ balanced global paths we get with \ifdouble\Cref{greedy_balancing}\else Algorithm 2\fi. However, no recoding is performed at the intermediate nodes. Secondly, in order \rev{to not} depend on node recoding, we investigate in the  lower graph of \Cref{MH_perf} the performances of the  hop by hop SR-ARQ protocol on two different settings. In one setting, it is applied to one single global path that is build from the best path of each hop. In the other, it is applied on the $P$ balanced paths.

The metrics have been averaged on 150 different channel realizations, where the filled curves correspond to the mean performances while the error bars represent the standard deviation, as for the MP results.

\paragraph{Results}
From \Cref{MH_perf}, one can see that the MP-MH AC-RLNC protocol performs dramatically better both with regards to the rate and in-order delay. As expected, in the upper graph, one can see that the rate is improved a lot with regards to end to end protocols. The rate is \rev{doubled} for good channel conditions ($\epsilon_1=\epsilon_2=0.1$) and multiplied by $3$ for bad \jrev{ channel conditions} ($\epsilon_1=\epsilon_2=0.8$),  for the SP AC-RLNC algorithm. Compared with the SP SR-ARQ protocols, performances are even better. Yet, the improvements are even more dramatic from the delay point of view. \rev{The} mean and max in-order delay are reduced by a factor $15$ for good channel conditions and up to a factor $40$ for bad ones, for both the AC-ARLNC and the SR-ARQ protocols.  From the lower part of \Cref{MH_perf}, it can be seen that the hop by hop SR-ARQ protocol is also much worse than our solution. \rev{The} improvement we get on the rate is small ($10\%$) for good channel conditions but it becomes significant ($35\%$) when channels are bad. From the delay point of view, the gain is \rev{significant} since both the mean and max in-order delay are reduced approximately by a factor 10. The hop by hop SR-ARQ protocol on only 1 path has obviously a much \rev{lower} rate. For the in-order delay, the improvements we get highly depend on the channel configuration (changing slightly the erasure probabilities gives very different results) but the MH AC-RLNC protocol is still better independently of the configuration of the channel.


\section{Conclusion}\label{Conc}
We proposed a MP-MH adaptive and causal coding algorithm for communications with delayed feedback. The MP algorithm, and especially the combination of the a priori and a posteriori mechanisms with efficient bit-filling packet allocation, outperforms significantly the existing protocols. By tuning the parameters of the algorithm, the desired throughput-\ifdouble\\\else\fi delay trade-off is obtained. The achieved trade-offs are in agreement with the theoretical analyses of the throughput and in-order delivery delay.
Splitting the MH protocol into the balancing optimization and the MP algorithm gives very promising results: on one hand, a decentralized matching procedure is proven to be optimal with regards to the achieved rate. On the other hand, theoretical analyses of the achieved throughput and delays are also in agreement with the simulations. Compared to other protocols, MP-MH AC-RLNC protocol has a higher throughput, and a much lower delay than SR-ARQ. Specifically, in the end-to-end setting without recoding in the intermediate nodes, it reaches a very high delay and a lower throughput. The hop-by-hop SR-ARQ is less impacted by the absence of recoding. Nevertheless, this comes at the price of a high sensitivity to the channel configuration and the order of the hops, while each node needs to be able to perform the full SR-ARQ protocol.\off{, as it can be seen from the abrupt increase in the delay, and from the non-symmetric shape of the max delay. For large RTT's, the insertion of FEC's at the end of each window of new transmissions might be sub-optimal. Therefore, optimizing the timing of FEC's insertions is an interesting direction for future improvements of the protocol.}

Future work \jrev{includes} the study of general mesh networks \jrev{and settings with multiple sources and receivers where fairness and resource allocation are needed. Furthermore, the model constraints, such as the same number of paths per hop, the reliable feedback, or the same propagation delay on each hop, are simplifications with regards to realistic networks. The relaxation of these constraints is also an interesting lead for future work, while ongoing research focuses on the implementation of this scheme in the framework of the QUIC protocol \cite{swett2017network,swett2018coding,8816838}.}
\section*{Acknowledgment}
We want to thank Maxence Vigneras for contributing in the MATLAB implementation of the suggested solution, and to Derya Malak and Arno Schneuwly for fruitful discussions.
\appendices
\section{Proof of \Cref{natural_matching}}
\label{appendix:appendixMatching}
The proof of \Cref{natural_matching} is continued here. We proved above that the natural matching is optimal at the last hop. Now, following the same ideas as the above proof, we show that the natural matching is also optimal for the previous hops, hence leading to the desired result.
\\\textbf{$(\textbf{H-1)}$-th hop : }
\begin{figure}[!h]
\centering
\resizebox{6.6cm}{!}{\begin{tikzpicture}[transform shape]

\draw[draw=black] (0,0.5) rectangle ++(2,6);

\node[] at (1,3.5) {\begin{huge} $L^*$ \end{huge}};

\draw (2, 1) -- (3, 1);
\draw (2, 3) -- (3, 3);
\draw (2, 4) -- (3, 4);
\draw (2,6) -- (3, 6);

\node[circle,fill=black,label=above:{$R_{PH-2}$}] at (3, 1)   (n21) {};
\node[circle,fill=black,label=above:{$R_{jH-2}$}] at (3, 3)   (n22) {};
\node[circle,fill=black,label=above:{$R_{iH-2}$}] at (3, 4)   (n23) {};
\node[circle,fill=black,label=above:{$R_{1H-2}$}] at (3, 6)   (n24) {};

\node[circle,fill=black,label=above:{$r_{PH-1}$}] at (6,1) (n31) {};
\node[circle,fill=black,label=above:{$r_{lH-1}$}] at (6, 2.5)   (n32) {};
\node[circle,fill=black,label=above:{$r_{kH-1}$}] at (6, 4.5)   (n33) {};
\node[circle,fill=black,label=above:{$r_{1H-1}$}] at (6, 6)   (n34) {};

\draw[ultra thick, draw=blue,loosely dashed] (n21) -- (n31) ;
\draw[ultra thick, draw=red,densely dotted] (3, 1) .. controls (4.5,1.4) .. (6,1);
\draw[ultra thick, draw=green,loosely dashdotted] (3,1) .. controls (4.5,0.6) .. (6,1);
\draw[ultra thick, draw=red,densely dotted] (n22) -- (n32) ;
\draw[ultra thick, draw=blue,loosely dashed] (n22) -- (n33) ;
\draw[ultra thick, draw=green,loosely dashdotted] (3,3) .. controls (4.5,3.5) .. (6,4.5);
\draw[ultra thick, draw=blue,loosely dashed] (n23) -- (n32) ;
\draw[ultra thick, draw=red,densely dotted] (n23) -- (n33) ;
\draw[ultra thick, draw=green,loosely dashdotted] (3,4) .. controls (4.5,3.5) .. (6,2.5);
\draw[ultra thick, draw=blue,loosely dashed] (n24) -- (n34) ;
\draw[ultra thick, draw=red,densely dotted] (3, 6) .. controls (4.5,5.6) .. (6,6);
\draw[ultra thick, draw=green,loosely dashdotted] (3,6) .. controls (4.5,6.4) .. (6,6);

\node[circle,fill=black,label=above:{$r_{PH}$}] at (9,1) (n41) {};
\node[circle,fill=black,label=above:{$r_{nH}$}] at (9, 3.5)   (n42) {};
\node[circle,fill=black,label=above:{$r_{mH}$}] at (9, 5)   (n43) {};
\node[circle,fill=black,label=above:{$r_{1H}$}] at (9, 6)   (n44) {};

\draw[ultra thick, draw=blue,loosely dashed] (n41) -- (n31) ;
\draw[ultra thick, draw=red,densely dotted] (6, 1) .. controls (7.5,1.4) .. (9,1);
\draw[ultra thick, draw=green,loosely dashdotted] (6,1) .. controls (7.5,0.6) .. (9,1);
\draw[ultra thick, draw=red,densely dotted] (n42) -- (n32) ;
\draw[ultra thick, draw=blue,loosely dashed] (n42) -- (n33) ;
\draw[ultra thick, draw=green,loosely dashdotted] (6,2.5) .. controls (7.5,3.3) .. (9,3.5);
\draw[ultra thick, draw=blue,loosely dashed] (n43) -- (n32) ;
\draw[ultra thick, draw=red,densely dotted] (n43) -- (n33) ;
\draw[ultra thick, draw=green,loosely dashdotted] (6,4.5) .. controls (7.5,4.3) .. (9,5);
\draw[ultra thick, draw=blue,loosely dashed] (n44) -- (n34) ;
\draw[ultra thick, draw=red,densely dotted] (6, 6) .. controls (7.5,5.6) .. (9,6);
\draw[ultra thick, draw=green,loosely dashdotted] (6,6) .. controls (7.5,6.4) .. (9,6);
\end{tikzpicture}}
\caption{Matching of the $(H-1)$-th hop. The green dash-dotted lines show the non-natural matching when $r_{lH-1} \leq R_{jH-2}$, the blue dashed lines show the non-natural matching when $r_{lH-1} > R_{jH-2}$ and the red dotted lines highlight the natural one.}\label{fig:matching_hop}
\end{figure}
Suppose the matching of the previous hop (between $R_{H-2}$ and $r_{H-1}$) is not the natural one, as shown in \Cref{fig:matching_hop} with blue dashed and green dash-dotted lines.  Hence, one can find $i<j$, $k<l$ such that $R_{iH-2}$ (resp. $R_{jH-2}$) is matched with $r_{lH-1}$ (resp. $r_{kH-1}$). Moreover, let $n>m$ (implying $r_{mH}\geq r_{nH}$) corresponds to the corresponding rates of the last hop.
\\If $r_{lH-1} \leq R_{jH-2}$, then $R_{lH-1} \leq R_{kH-1}$. Thus, building on the above results for the last hop, the optimal matching of the last hop matches $r_{kH-1}$ (resp. $r_{lH-1}$) with $r_{mh}$ (resp. $r_{nH}$), as represented with green dash-dotted lines in \Cref{fig:matching_hop}.
In that case,
\begin{equation*}
\begin{split}
        \min \left(R_{jH-2},r_{kH-1},r_{mH}\right)+\min \left(R_{iH-2},r_{lH-1},r_{nH}\right)\\
        \leq \min \left(R_{iH-2},r_{kH-1},r_{mH}\right)+\min \left(R_{jH-2},r_{lH-1},r_{nH}\right).
\end{split}
\end{equation*}
\\If $ R_{jH-2}<r_{lH-1} $, then $R_{lH-1} \geq R_{kH-1}$. Thus, the optimal matching of the last hop matches $r_{kH-1}$ (resp. $r_{lH-1}$) with $r_{nh}$ (resp. $r_{mH}$), as represented with blue dashed lines in \Cref{fig:matching_hop}.
In that case,
\begin{equation*}
\begin{split}
        \min \left(R_{jH-2},r_{kH-1},r_{nH}\right)+\min \left(R_{iH-2},r_{lH-1},r_{mH}\right)\\
        \leq \min \left(R_{jH-2},r_{lH-1},r_{nH}\right)+\min \left(R_{iH-2},r_{kH-1},r_{mH}\right).
\end{split}
\end{equation*}
In both cases, the natural matching, represented with red dotted lines in \Cref{fig:matching_hop}, has a rate greater or equal to the one of $L^*$. Applying this reasoning till no $i,j,k,l$ can be found, it appears the last two hops can be matched naturally without decreasing the achieved rate.
\\\textbf{Previous hops : }
Since the proof can be applied recursively to each hop until the first one is reached, proving recursively that the rate do not decrease when using the natural matching instead of the optimal one, the following is obtained
\begin{equation*}
    \eta_{max}\left(L^*\right) \leq \eta_{max}\left(L\right),
\end{equation*}
contradicting \eqref{to_contradict}. Hence, the natural matching is proven to be optimal.

\bibliographystyle{IEEEtran}
\bibliography{references}
\end{document}